
\documentclass[12pt]{article}
\usepackage{amsfonts}
\usepackage{amsmath}
\usepackage{amssymb}
\usepackage{natbib}
\usepackage{dsfont}
\usepackage{bbold}
\usepackage{subcaption}
\usepackage{setspace}
\usepackage{geometry}
\usepackage{graphicx}
\usepackage{secfix}

\setcounter{MaxMatrixCols}{10}

\newtheorem{theorem}{Theorem}[section]

\newtheorem{algorithm}[theorem]{Algorithm}

\newtheorem{condition}[theorem]{Assumption}

\newtheorem{corollary}[theorem]{Corollary}

\newtheorem{definition}[theorem]{Definition}

\newtheorem{lemma}[theorem]{Lemma}

\newtheorem{proposition}[theorem]{Proposition}

\newenvironment{proof}[1][Proof]{\noindent\textbf{#1.} }{\hfill \rule{0.5em}{0.5em}}
\geometry{margin=1.25in}


\newcommand{\fixqed}{\vspace*{-6ex}}

\DeclareMathOperator{\eigval}{eigval}
\DeclareMathOperator{\dimop}{dim}

\onehalfspacing

\begin{document}

\title{Optimally-Transported\\Generalized Method of Moments}
\author{Susanne Schennach\thanks{%
Support from NSF grants SES-1950969 and SES-2150003 is gratefully acknowledged. The authors would like to thank Alfred Galichon, Florian Gunsilius, Heejun Lee and seminar participants at NYU's Advanced mathematical modeling in economics seminar, at the Econometrics and Optimal Transport Workshop at the University of Washington, at the Women in Econometrics Conference at the University of Toronto, at the Workshop on Optimal Transport in Econometrics at Collegio Carlo Alberto, and at the Aarhus Workshop in Econometrics VII for useful comments.}~\thanks{%
smschenn@brown.edu}~ and Vincent Starck\thanks{Financial support from the European Research Council (Starting Grant No. 852332) is gratefully acknowledged.}~\thanks{%
V.Starck@lmu.de} \\
Brown University, LMU München}
\maketitle

\begin{abstract}
%

We propose a novel optimal transport-based version of the Generalized
Method of Moment (GMM).
Instead of handling overidentification by reweighting
the data to satisfy the moment conditions (as in Generalized
Empirical Likelihood methods), this method proceeds by allowing for
errors in the variables of the least mean-square magnitude necessary to
simultaneously satisfy all moment conditions. This approach, based on the
notions of optimal transport and Wasserstein metric, aims to address the problem of assigning a
logical interpretation to GMM results even when overidentification tests
reject the null, a situation that cannot always be avoided in applications.
We illustrate the method by revisiting Duranton, Morrow and Turner's (2014) study of the relationship between a city's exports and the extent of its transportation infrastructure.
Our results corroborate theirs under weaker assumptions and provide
insight into the error structure of the variables. 

\medskip

\noindent {\bf Keywords:} Wasserstein metric, GMM, overidentification, misspecification.

\end{abstract}%

\section{Introduction}

The Generalized Method of Moment (GMM) (\citet{hansen:gmm}) has long been
the workhorse of statistical modeling in economics and the social sciences.
Its key distinguishing feature, relative to the basic method of moments, is
the presence of overidentifying restrictions that enable the model's
validity to be tested (\citet{Newey:HB}). With this ability to test comes
the obvious practical question of what one should do if an overidentified
GMM model fails overidentification tests, a situation that is not uncommon
(as noted in \citet{hall:missgmm}, \citet{hansen:miss}, %
\citet{poirier:salvaging}, \citet{conley:plausexo}, \citet{kwon:ineqmiss}),
even for perfectly reasonable, economically grounded, models.

A popular approach has been to find the \textquotedblleft
pseudo-true\textquotedblright\ value of the model parameter (%
\citet{sawa:pseudo}, \citet{white:miss}) that minimizes the ``distance''
or \emph{discrepancy} between the data and the moment constraints implied by
the model. This approach has gained further support since the introduction
of Generalized Empirical Likelihoods (GEL) and Minimum Discrepancy
estimators (\citet{owen:el1}, \citet{qinlawless}, \citet{newey:gel}), all of
which provide more readily interpretable pseudo-true values (%
\citet{imbens:res}, \citet{kitamura:itgmm}, \citet{schennach:etel}).

GEL implicitly
attributes the mismatch in the moment conditions solely to a biased sampling of the population. While this
is a possible explanation, it is not the only reason a valid model would
fail overidentification tests, when taken to the data. Another natural
possibility is the presence of errors in the variables (\citet{aguiar:revpref}, %
\citet{doraszelski:megmm}, \citet{schennach:HB}). In this work, we develop
an alternative to GMM that ensures, by construction, that overidentifying
restrictions are satisfied by allowing for possible errors in the
variables instead of sampling bias. We employ the generic term {\em error} to include, not only measurement error, but anything that could cause the recorded data to differ from the value they should have if the model were fully correct, i.e., this could include some model errors.
More generally, we allow distortions in the data generating process whose magnitude is quantified by a Wasserstein-type metric (\citet{villani:new}),
in the spirit of distributionally robust methods (\citet{connault:sens}, \citet{blanchet:distrob}).
In analogy with GEL, which does
not require the form of the sampling bias to be explicitly specified, the
error process does not need to be explicitly specified in our
approach, but is instead inferred from the requirement of satisfying the
overidentifying constraints imposed by the GMM model.
Of course, the accuracy of the
resulting estimated parameters will typically improve with the degree of overidentification.%
\footnote{In the case where the statistical properties of the errors are in fact
known a priori, other methods may be more appropriate (e.g. \citet{schennach:nlme}, \citet{schennach:elvis}),
\citet{schennach:annrev}, \citet{schennach:HB} and references therein.}

A fruitful way to accomplish this is to employ concepts from the general
area of optimal transport problems (e.g., \citet{galichon:optran}, %
\citet{villani:new}, \citet{carlier:vectq}, \citet{galichon:vectquan}, \citet{cherno:breniercons}, %
\citet{schennach:nlfact}). The idea is to find the parameter value that
minimizes cost of \textquotedblleft transporting\textquotedblright\ the observed
distribution of the data $\mu_x$ onto another distribution $\mu_z$ that satisfies the moment
conditions exactly.
Formally, the true iid data $z_{i}$ is assumed to satisfy $\mathbb{E}\left[ g\left(
z_{i},\theta \right) \right] =0$, where $\mathbb{E}$ is the expectation
operator, for a parameter value $\theta$ in some set $\Theta$ and some given $d_{g}$-dimensional vector $%
g\left( z_{i},\theta \right) $ of moment functions. However, we instead
observe an error-contaminated counterpart $x_{i}$ of the true vector $z_{i}$ (both taking value in $\mathcal{X} \subseteq \mathbb{R}^{d_{x}}$). We seek to exploit the model's over-identification to gain information regarding the error in $x_{i}$.
The Euclidean norm $\Vert (z-x) \Vert$ is chosen here for computational convenience, although one could imagine a whole class of related estimators obtained with different choices of metric. Our focus on Euclidean norms parallels the
choice made in common estimators (e.g. least squares regressions,
classical minimum distance and even GMM). Considering a {\it weighted} Euclidean norm can be useful to indicate the relative expected error magnitudes along different dimensions of $x$.

Given a probability measure $\mu_x$ for the random variable $x$, this setup suggests solving the following population optimization problem, for a given $\theta$: 
\begin{equation}
\min_{\mu _{zx}}\mathbb{E}_{\mu _{zx}}\left[ \left\Vert z-x\right\Vert ^{2} \label{eqobjgen}%
\right]
\end{equation}
subject to $\mu _{zx}$, supported on $\mathcal{X}\times \mathcal{X}$, having marginal $\mu _{x}$ and $\mathbb{E}_{\mu _{zx}} \left[ g\left( z,\theta \right) \right] =0$, where $\mathbb{E}_{\mu}$ denotes an expectation under the measure $\mu$. (This problem is guaranteed to have a solution if there exists at least one measure $\mu_z^*$ such that $\mathbb{E}_{\mu_z^*} \left[ g\left( z,\theta \right) \right] =0$.)
This setup covers the most general case, including both discrete and continuous variables, and can be handled using linear programming techniques
(e.g., \citet{santambrogio:optran}, Section 6.4.1), after observing that the moment constraint is easy to incorporate since it is linear in the probability measure. However, we shall focus on the purely continuous case in the remainder of this paper, because it enables us to express the main ideas more transparently. The fully continuous case indeed admits a convenient treatment, under the following regularity condition:
\begin{condition}
The marginals $\mu_z$ (arising from the solution $\mu_{zx}$ at each $\theta\in\Theta$) and $\mu_x$ have finite variance and $\mu_x$ is absolutely continuous with respect to the Lebesgue measure.
\end{condition}
Under this condition, by Theorem 1.22 in \citet{santambrogio:optran}, there exists a unique $\mu_{zx}$ implied by a deterministic transport map $z=q(x)$ that solves the constrained optimization problem (\ref{eqobjgen}) and yielding a transport cost $\mathbb{E}_{\mu_x} \left[ \left\Vert q(x)-x\right\Vert ^{2}\right]$. 
Since determining the function $q$ amounts to finding which value $z$ each point $x$ should be mapped to, the sample version of this problem can be stated as
\begin{equation}
\min_{\left\{ z_{i}\right\} }\frac{1}{2}\hat{\mathbb{E}}\left[ \left\Vert
z-x\right\Vert ^{2}\right]  \label{eqobjf}
\end{equation}%
subject to:%
\begin{equation}
\hat{\mathbb{E}}\left[ g\left( z,\theta \right) \right] =0,  \label{eqcons}
\end{equation}%
where $\hat{\mathbb{E}}$ denotes sample averages (i.e. $\hat{%
\mathbb{E}}[a\left( x\right) ]\equiv \frac{1}{n}\sum_{i=1}^{n}a\left(
x_{i}\right) $, where $n$ is sample size and $a(\cdot)$ a given function).
This optimization problem is then nested into an optimization over $\theta $,
which delivers the estimated parameter value $\hat{\theta}$. We call this
estimator an \emph{Optimally-Transported}\ GMM (OTGMM) estimator.

Our approach is conceptually similar to GEL, in that it
minimizes some concept of distributional distance under moment constraints.
Yet, the notion of distance used differs significantly. As shown
in Figure \ref{figotgmm}, the distance here is measured along the
\textquotedblleft observation values\textquotedblright\ axis rather than the
\textquotedblleft observation weights\textquotedblright\ axis (as it would
be in GEL). This feature arguably makes the method a hybrid between GEL and
optimal transport, since GEL's goal of satisfying all the moment
conditions is achieved through optimal transport instead of optimal reweighting.
(The distinction from GEL applies to the discrete case as well,
since the OTGMM objective function depends on both the amount and location of probability mass transfers,
while the GEL objective function is only sensitive on the amount, but not on the specific locations, of the probability mass transfers.)
Most of our regularity conditions will parallel those of optimal GMM and GEL, but
some will not (they are tied to the optimal transport nature of the problem and involve assumptions regarding higher order derivatives).
In analogy with the behavior of GEL estimators, the OTGMM estimator 
will be shown to be root-$n$ consistent and asymptotically normal, despite involving an optimization problem having 
an infinite-dimensional nuisance parameter (the $z_i$).
In general, however, OTGMM's asymptotic variance does not coincide with that of GEL or efficiently weighted GMM under correct specification.
Hence, OTGMM is most useful when errors in the variables or Wasserstein-type deviations in the data distribution constitute a primary concern.

\begin{figure}
\centerline{\includegraphics[width=3.5in]{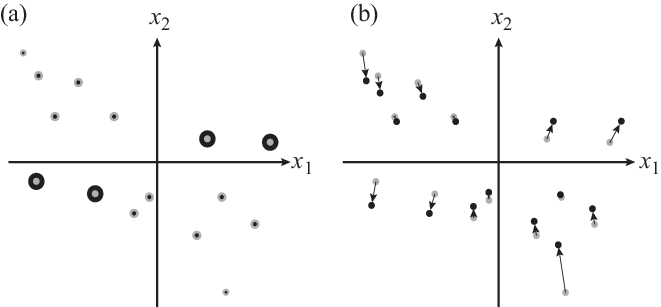}}
\caption{Comparison between the sample points adjustments for
(a) Generalized Empirical Likelihood (GEL), where observation weights (shown by point size) are adjusted, 
and (b) Optimal Transport GMM, where point positions are adjusted.
Simple case of overidentified (parameter-free) model imposing no correlation shown, with original sample in gray
and adjusted sample in black.}
\label{figotgmm}
\end{figure}

The remainder of the paper is organized as follows. We first formally define
and solve the optimization problem defining our estimator, before
considering the limit of small errors (in the spirit of %
\citet{chesher:1991})\ to gain some intuition. We then derive the resulting
estimator's asymptotic properties for the general, large error, case.
We show asymptotic normality and root $n$ consistency in both cases.
We then discuss related approaches and some extensions.
The method's practical usefulness is illustrated by revisiting 
an influential study of the relationship between a city's exports and the extent of its transportation infrastructure (\citet{duranton2014roads}).
Our results corroborate that study under weaker assumptions and provide insight into the error structure of the variables. 

\section{The estimator}

\subsection{Definition}

The Lagrangian associated with the constrained optimization problem defined
in Equations (\ref{eqobjf}) and (\ref{eqcons}) is%
\begin{equation*}
\frac{1}{2}\hat{\mathbb{E}}\left[ \left\Vert z-x\right\Vert ^{2}\right]
-\lambda ^{\prime }\hat{\mathbb{E}}\left[ g\left( z,\theta \right) \right],
\end{equation*}%
where $\hat{\mathbb{E}}[\ldots ]$ denotes a sample average and where $%
\lambda $ is a Lagrange multiplier. The dual problem's first-order conditions 
with respect to $\theta $, $\lambda $ and $z_{j}$, respectively, are
then%
\begin{eqnarray}
\hat{\mathbb{E}}\left[ \partial_{\theta}g^{\prime }(z,\theta )\right] \lambda
&=&0  \label{eqfoctheta} \\
\hat{\mathbb{E}}\left[ g(z,\theta )\right] &=&0  \label{eqfoclambda} \\
\left( z_{j}-x_{j}\right) -\partial_{z}g^{\prime }(z_{j},\theta )\lambda
&=&0\text{ for }j=1,\ldots ,n  \label{eqfoczj}
\end{eqnarray}%
where we let $\partial_{v}$ denote a partial derivative with respect to argument
$v$. We shall use $\partial_{v^{\prime }}$ to
denote a matrix of partial derivatives with respect to a transposed variable
(e.g., $\partial_{\theta^{\prime }}g(z,\theta )\equiv \partial g(z,\theta
)/\partial \theta ^{\prime }$). This formulation of the problem assumes
differentiability of $g\left( z,\theta \right) $ to a sufficiently high
order, as shall be formalized in our asymptotic analysis.

\subsection{Implementation}

The nonlinear system (\ref{eqfoctheta})-(\ref{eqfoczj}) of equations can be
solved numerically. To this effect, we propose an iterative procedure to
determine the $z_{j}$, $\lambda $ for a given $\theta $. This yields an\
objective function $\hat{Q}(\theta )$ that can be minimized to estimate $%
\theta $. Let $z_{j}^{t}$ and $\lambda ^{t}$ denote the approximations
obtained after $t$ steps. As shown in Supplement \ref{appiter}, given
tolerances $\epsilon ,\epsilon ^{\prime }$ and a given $\theta $, the
objective function $\hat{Q}(\theta )$ can be determined as follows:

\begin{algorithm}
\label{algoiter}

\begin{enumerate}
\item Start the iterations with $z_{j}^{0}=x_{j}$ and $t=0$.

\item Let $
\lambda ^{t+1}=\left( \hat{\mathbb{E}}\left[ H\left( z^{t},\theta \right)
H^{\prime }\left( z^{t},\theta \right) \right] \right) ^{-1}\left( -\hat{%
\mathbb{E}}\left[ g\left( z^{t},\theta \right) \right] +\hat{\mathbb{E}}%
\left[ H\left( z^{t},\theta \right) \left( z^{t}-x\right) \right] \right)$ and
$z_{j}^{t+1}=x_{j}+H^{\prime }\left( z_{j}^{t},\theta \right) \lambda ^{t+1}$,
where $H\left( z,\theta \right) =\partial_{z^{\prime }}g\left( z,\theta
\right) =\left( \partial_{z}g^{\prime }\left( z,\theta \right) \right)
^{\prime }$.

\item Increment $t$ by $1$; repeat from step 2 until $\left\Vert
z_{j}^{t+1}-z_{j}^{t}\right\Vert \leq \epsilon $ and $\left\Vert \lambda
^{t+1}-\lambda ^{t}\right\Vert \leq \epsilon ^{\prime }$.

\item The objective function is then: 
$\hat{Q}(\theta )=\frac{1}{2}(\lambda ^{t})^{\prime }\hat{\mathbb{E}}\left[
H\left( z^{t},\theta \right) H^{\prime }\left( z^{t},\theta \right) \right]
\lambda ^{t}\text{.}$

\end{enumerate}
\end{algorithm}

This algorithm is obtained by substituting $z_{j}=x_{j}+H^{\prime }\left(
z_{j},\theta \right) \lambda $ obtained from Equation (\ref{eqfoczj}) into
Equation (\ref{eqfoclambda}) and expanding the resulting expression to
linear order in $\lambda $. This linearized expression provides an improved
approximation $\lambda ^{t}$ to the Lagrange multiplier which can, in turn,
yield an improved approximation $z_{j}^{t}$. The process is then iterated to
convergence. The expression for $\hat{Q}(\theta )$ is obtained by
re-expressing $\hat{\mathbb{E}}\left[ \left\Vert z-x\right\Vert ^{2}\right] $
using Equation (\ref{eqfoczj}). Formal sufficient conditions for the
convergence of this iterative procedure can be found in Supplement \ref%
{appiterconv}. In cases where this simple approach fails to converge, one can employ
robust schemes based on a combination of discretization and linear programming
(see Section 6.4.1 in \citet{santambrogio:optran}).

To gain some intuition regarding the estimator, it is useful to consider the
limit of small errors when solving Equations (\ref{eqobjf})-(\ref%
{eqcons}), in the spirit of \citet{chesher:1991}).
This limit corresponds to assuming that higher-order powers of $\Vert z_i-x_i \Vert$
are negligible relative to $\Vert z_i-x_i \Vert$ itself.
In this limit, the estimator admits a closed form with an
intuitive interpretation, as shown by the following result, shown in
Appendix \ref{applin}.

\begin{proposition}
\label{propsml}To the first order in $z_{i}-x_{i}$ ($i=1,\ldots ,n$) the
estimator is equivalent to minimizing a GMM-like objective function with
respect to $\theta $ with a specific choice of weighting matrix:%
\begin{equation}
\hat{\theta}=\arg \min_{\theta }\hat{\mathbb{E}}\left[ g^{\prime }\left(
x,\theta \right) \right] \left( \hat{\mathbb{E}}\left[ H\left( x,\theta
\right) H^{\prime }\left( x,\theta \right) \right] \right) ^{-1}\hat{\mathbb{%
E}}\left[ g\left( x,\theta \right) \right] .  \label{eqsmlerr}
\end{equation}
\end{proposition}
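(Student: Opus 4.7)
The plan is to linearize the first-order conditions (\ref{eqfoctheta})--(\ref{eqfoczj}) in the small quantity $z_i - x_i$ and then concentrate out the Lagrange multiplier $\lambda$ to obtain a pure function of $\theta$. The key observation that makes the linearization consistent is that (\ref{eqfoczj}) gives $z_j - x_j = H'(z_j,\theta)\lambda$, so $\|z_j - x_j\|$ and $\|\lambda\|$ are quantities of the same order: treating $z_j-x_j$ as first order is equivalent to treating $\lambda$ as first order.

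First, I would Taylor-expand every term in (\ref{eqfoczj}) and (\ref{eqfoclambda}) around $x_j$. Replacing $H(z_j,\theta)$ by $H(x_j,\theta)$ in (\ref{eqfoczj}) yields, at leading order, $z_j - x_j \approx H'(x_j,\theta)\lambda$. Expanding $g(z_j,\theta) = g(x_j,\theta) + H(x_j,\theta)(z_j-x_j) + O(\|z_j-x_j\|^2)$ and substituting into (\ref{eqfoclambda}) gives
\begin{equation*}
\hat{\mathbb{E}}\left[g(x,\theta)\right] + \hat{\mathbb{E}}\left[H(x,\theta) H'(x,\theta)\right]\lambda \;\approx\; 0,
\end{equation*}
which is solvable for $\lambda$ as
\begin{equation*}
\lambda \;\approx\; -\left(\hat{\mathbb{E}}\left[H(x,\theta) H'(x,\theta)\right]\right)^{-1}\hat{\mathbb{E}}\left[g(x,\theta)\right].
\end{equation*}

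Next, I would profile out $\lambda$ from the objective. At a feasible point, the Lagrangian collapses to $\frac{1}{2}\hat{\mathbb{E}}[\|z-x\|^2]$, and using $z_j - x_j \approx H'(x_j,\theta)\lambda$ this equals $\frac{1}{2}\lambda'\hat{\mathbb{E}}[H(x,\theta)H'(x,\theta)]\lambda$ to leading order. Substituting the expression for $\lambda$ from the previous step gives the concentrated objective
\begin{equation*}
\hat{Q}(\theta) \;\approx\; \tfrac{1}{2}\,\hat{\mathbb{E}}\left[g'(x,\theta)\right]\left(\hat{\mathbb{E}}\left[H(x,\theta) H'(x,\theta)\right]\right)^{-1}\hat{\mathbb{E}}\left[g(x,\theta)\right],
\end{equation*}
whose minimizer in $\theta$ is exactly the GMM-like estimator claimed in (\ref{eqsmlerr}). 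The remaining FOC (\ref{eqfoctheta}) becomes, to this order, the standard first-order condition for minimizing this profiled quadratic form, so it is automatically consistent with the envelope calculation.

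The main obstacle — although mostly bookkeeping — is to verify that every discarded term is genuinely of higher order. One has to check that replacing $H(z_j,\theta)$ by $H(x_j,\theta)$ in both (\ref{eqfoczj}) and the expansion of $g(z_j,\theta)$ introduces only $O(\|z-x\|^2)$ corrections, that the inversion of $\hat{\mathbb{E}}[H H']$ (which has a nonzero leading term) does not amplify these errors, and that substituting the linearized $\lambda$ back into $\frac{1}{2}\lambda'\hat{\mathbb{E}}[HH']\lambda$ keeps the objective accurate to the relevant order. These steps require uniform nonsingularity of $\hat{\mathbb{E}}[H(x,\theta)H'(x,\theta)]$ over $\Theta$ and enough smoothness of $g$ in $z$ to justify the second-order remainders, both of which are natural regularity conditions that will be formalized in the asymptotic analysis.
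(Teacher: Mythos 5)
Your proposal is correct and follows essentially the same route as the paper's proof: linearize the first-order condition for $z_j$ to get $z_j - x_j \approx H'(x_j,\theta)\lambda$, Taylor-expand the constraint to solve for $\lambda \approx -(\hat{\mathbb{E}}[HH'])^{-1}\hat{\mathbb{E}}[g]$, and substitute both back into the objective to obtain the concentrated quadratic form. Your added remarks on the order equivalence of $\|z-x\|$ and $\|\lambda\|$ and on the nonsingularity of $\hat{\mathbb{E}}[HH']$ are sensible bookkeeping but do not change the argument.
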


From this expression, it is clear that the estimator downweights the
moments that are the most sensitive to errors in $x$, as measured by $%
H\left( x,\theta \right) \equiv \partial_{z^{\prime }}g\left( x,\theta
\right) $. This accomplishes the desired goal of minimizing the effect of
the errors when the properties of the error process are unknown.

Although this weighting matrix appears suboptimal (relative to a correctly specified optimally weighted
GMM estimator), one should realize that the notion of optimality depends on what class of data generating processes the ``true model'' encompasses.
Optimal GMM and GEL can be seen as Maximum Likelihood estimators (\citet{chamberlain:gmmeff}, \citet{newey:gel}) under moment conditions expressed in terms of the observed $x$.
In contrast, OTGMM can be interpreted as a Maximum Likelihood estimator for homoskedastic and normally distributed errors ($x-z$) under moment constraints on the unobserved $z$.
There is therefore a clear efficiency-robustness trade-off: OTGMM is less efficient if the observed $x$ satisfy the moment constraints, but allows for additional error terms that maintain the model's correct specification even if the observed $x$ do not satisfy the overidentified moment constraints, a more general setting where optimally weighted GMM or GEL offers no efficiency guarantees.

\section{Asymptotics}

In this section, we show that, despite the estimator's roots in the theory
of optimal transport, its large sample behavior remains amenable to standard
asymptotic tools since our focus is on an estimator of the parameter $\theta $
rather than on an estimator of a distribution. We first consider the case
of small errors, a limiting case that may be especially important in the
relatively common case of applications where overidentifying restrictions
tests are near the rejection region boundary. This limit also parallels the
approach taken in the GEL literature, where asymptotic properties are often
derived in the case where the overidentifying restrictions hold (e.g., %
\citet{newey:gel}).

\subsection{Small errors limit}

Our small error results enable us to illustrate that there is little risk
in using our estimator instead of efficient GMM when one is concerned about
overidentification test failure. If the data were to, in fact,
satisfy the moment conditions, using our approach does not sacrifice
consistency, root $n$ convergence or asymptotic normality. The only possible
drawback would be a suboptimal weighting of overidentifying moment
conditions, potentially leading to an increase in variance if the model happened to be correctly specified.
Conversely, if the model is misspecified, e.g., because the data is error-contaminated, the optimal weighting of efficient GMM is no longer the optimal weighting (since random deviations due to sampling variability are not the main reason for the failure to simultaneously satisfy all moment conditions).
For instance, if the errors are such that there is an unknown bias in the moment conditions that decays to zero asymptotically but at a rate possibly slower than $n^{-1/2}$, then the model is still correctly specified asymptotically but the bias dominates the random sampling error. Then, the optimal weighting should seek to minimize the effect of error-induced bias, which our approach seeks to accomplish by weighting based on the effect of errors in the variables on the moment conditions.
Hence, in that sense, the method provides a complementary alternative to standard GMM estimation offering a different trade-off between efficiency and robustness to misspecification.

Our consistency result requires a number of fairly standard primitive
assumptions.
\begin{condition}
\label{assiid}The random variables $x_{i}$ are iid and take
value in $\mathcal{X}\subset \mathbb{R}^{d_{x}}$.
\end{condition}

\begin{condition}
\label{ass1sml}$\mathbb{E}[g(x_{i};\theta _{0})]=0$, and $\mathbb{E}[g(x_{i};\theta )]\neq 0$
for other $\theta \in \Theta $, a compact set.
\end{condition}

In other words, Assumption \ref{ass1sml} indicates that we consider here the
case where GMM would be consistent, in analogy with the setup traditionally considered in the GEL literature (e.g. \citet{newey:gel}).

\begin{condition}
\label{ass2sml}$\mathbb{V}[g(x_{i};\theta _{0})]<\infty $, where $\mathbb{V}$
denotes the variance operator.
\end{condition}

\begin{condition}
\label{ass5sml}$g(x;\cdot )$ is almost surely continuous and $\Vert
g(x;\theta )\Vert\leq h(x)$ for any $\theta \in \Theta $ and for some function $%
h$ satisfying $\mathbb{E}\left[ h\left( x_{i}\right) \right] <\infty $.
\end{condition}

While Assumptions \ref{assiid}, \ref{ass1sml}, \ref{ass2sml} and \ref{ass5sml}
directly parallel those needed to establish the asymptotic properties of a
standard GMM estimator (e.g. Theorems 2.6 and 3.2 in \citet{Newey:HB}),
our estimator requires a few more low-level regularity conditions.
Given that our estimator, in the small error limit (Equation (\ref%
{eqsmlerr})), involves a sample average involving derivative $H\left(
x,\theta \right) \equiv \partial_{z^{\prime }}g\left( x,\theta \right) $,
we need to place some constraints on the behavior of that quantity as well.
Below, we let $\left\Vert a\right\Vert =\left(\sum_{i,j}a_{i,j}^2\right) ^{1/2}$ for a matrix $a$.

\begin{condition}
\label{ass3sml}$g$ is differentiable in its first argument and the
derivative satisfies $\mathbb{E}[\Vert \partial_{z^{\prime }}g(x_{i};\theta
_{0})\Vert ^{2}]\allowbreak <\infty $. Moreover, $\Vert \partial_{z^{\prime
}}g(x_{i};\theta )\Vert \neq 0$ almost surely for all $\theta \in \Theta $.
\end{condition}

\begin{condition}
\label{ass4sml}$\partial_{z^{\prime }}g(x;\theta _{0})$ is H\"{o}lder
continuous in $x$.
\end{condition}


\begin{condition}
\label{ass7sml}$\mathbb{E}[\partial_{z^{\prime }}g(x_{i};\theta
_{0})\partial_{z}g^{\prime }(x_{i};\theta _{0})]$ exists and is of full rank.
\end{condition}

These assumptions ensure that the minimization problem defined by (\ref%
{eqobjf}) and (\ref{eqcons}) is well-behaved, i.e., small changes in the
values of $x_{i}$ do not lead to jumps in the
solution $z_{i}$ to the optimization problem (aside from zero-probability events). It is likely that these
assumptions can be relaxed using empirical processes techniques. However,
here we favor simply imposing more smoothness (compared to the standard GMM
assumptions), because this leads to more transparent assumptions. They can
all be stated in terms of the basic function $g(x;\theta )$ that defines the
moment condition model, making them fairly primitive. We can then state our
first consistency result.

\begin{theorem}
\label{thconsissml}Under assumptions \ref{assiid}-\ref{ass7sml}, the OTGMM
estimator is consistent for $\theta _{0}$ and $\lambda =O_{p}(n^{-1/2})$.
\end{theorem}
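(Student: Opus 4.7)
The plan is to establish consistency by showing that the OTGMM sample objective $\hat Q(\theta)$ is asymptotically uniquely minimized at $\theta_0$, via a standard M-estimator argument combined with the small-error expansion of Proposition \ref{propsml}. Define the population analog
\begin{equation*}
Q(\theta) = \tfrac{1}{2}\mathbb{E}[g(x;\theta)]'\bigl(\mathbb{E}[H(x;\theta)H'(x;\theta)]\bigr)^{-1}\mathbb{E}[g(x;\theta)],
\end{equation*}
which by Assumption \ref{ass1sml} vanishes only at $\theta = \theta_0$ (using Assumption \ref{ass7sml} to guarantee a positive-definite weighting matrix) and is continuous in $\theta$ thanks to Assumptions \ref{ass5sml} and \ref{ass3sml}.

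First, I would show that at $\theta_0$ one has $\hat Q(\theta_0) = O_p(n^{-1})$ and $\lambda = O_p(n^{-1/2})$. Under Assumptions \ref{assiid}--\ref{ass2sml}, the CLT yields $\hat{\mathbb{E}}[g(x;\theta_0)] = O_p(n^{-1/2})$. Substituting $z_j = x_j + H'(x_j;\theta_0)\lambda$ from the FOC (\ref{eqfoczj}) into (\ref{eqfoclambda}) and linearizing in $\lambda$, the sample analog of Assumption \ref{ass7sml} (invertibility of $\hat{\mathbb{E}}[HH']$ by the LLN) yields
\begin{equation*}
\lambda = -\bigl(\hat{\mathbb{E}}[HH']\bigr)^{-1}\hat{\mathbb{E}}[g(x;\theta_0)] + o_p(n^{-1/2}) = O_p(n^{-1/2}),
\end{equation*}
where the Hölder bound of Assumption \ref{ass4sml} controls the linearization remainder along the transported $z_j$. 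Plugging back into (\ref{eqfoczj}) then gives $\|z_j - x_j\| = O_p(n^{-1/2})$ and hence $\hat Q(\theta_0) = O_p(n^{-1})$, which establishes the stated rate for $\lambda$ as a byproduct.

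Next, I would argue that for every $\theta$ in the complement of an arbitrarily small neighborhood of $\theta_0$, $\hat Q(\theta)$ is bounded below by a positive constant asymptotically. By Assumption \ref{ass1sml}, $\mathbb{E}[g(x;\theta)] \neq 0$ off $\theta_0$, so any feasible transport from $\mu_x$ to a measure satisfying the moment condition must incur strictly positive population cost. A uniform LLN over $\Theta$ (available from compactness of $\Theta$, the envelope in Assumption \ref{ass5sml}, and the continuity provided by Assumptions \ref{ass5sml} and \ref{ass3sml}--\ref{ass4sml}) transfers this conclusion to $\hat Q(\theta)$. Combined with $\hat Q(\theta_0) = o_p(1)$, this forces $\hat\theta$ into every neighborhood of $\theta_0$ with probability tending to one, giving consistency.

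The main obstacle is the uniform control of the transport problem's value \emph{away} from $\theta_0$, where the small-error expansion of Proposition \ref{propsml} need not apply and the objective is defined through an optimization over the infinite-dimensional nuisance $\{z_i\}_{i=1}^n$. Here Assumption \ref{ass4sml} (Hölder continuity of $\partial_{z'}g$) and Assumption \ref{ass3sml} (non-vanishing, square-integrable derivative) are crucial: together with Assumption \ref{ass7sml} they ensure the implicit solution map $(x_i,\theta)\mapsto z_i$ is continuous with a controlled modulus, so that the empirical transport cost is a stochastically equicontinuous function of $\theta$. Only once this continuity is in place can the standard argmin-consistency theorem (e.g.\ Theorem 2.1 of \citet{Newey:HB}) be invoked to conclude $\hat\theta \to_p \theta_0$.
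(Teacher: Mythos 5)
Your high-level structure --- show the constrained transport cost is $o_p(1)$ at $\theta_0$ but bounded away from zero uniformly outside any neighborhood of $\theta_0$ --- matches the paper's, but there is a genuine gap in how you handle the step at $\theta_0$. You obtain $\lambda=O_p(n^{-1/2})$ by substituting the first-order condition into the moment constraint and ``linearizing in $\lambda$,'' with the remainder controlled by the H\"older bound. This is circular: replacing $\partial_z g'(z_j;\theta_0)$ by $\partial_z g'(x_j;\theta_0)$ and discarding the $O(\Vert\lambda\Vert^2)$ remainder is only legitimate once you already know that $\Vert z_j-x_j\Vert$ (equivalently $\Vert\lambda\Vert$) is small, and nothing in your argument establishes this a priori --- the $z_i$ are defined through a constrained optimization, and absent further argument it is conceivable that satisfying $\hat{\mathbb{E}}[g(z,\theta_0)]=0$ requires moving some points a long way. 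The paper closes this loop with a step you omit entirely: it explicitly constructs a \emph{feasible} sequence $z_i^*$ satisfying $\sum_i g(z_i^*;\theta_0)=0$ with $\sup_i\Vert z_i^*-x_i\Vert\rightarrow^p 0$ (restricting to observations where $\Vert\partial_z g\Vert$ is bounded away from zero on a $\delta$-ball, and grouping observations into blocks to gain enough degrees of freedom to absorb the $O_p(n^{-1/2})$ deviation of $\hat{\mathbb{E}}[g(x;\theta_0)]$). Optimality of the actual solution then gives $\frac{1}{n}\sum_i\Vert z_i-x_i\Vert^2\le\sup_i\Vert z_i^*-x_i\Vert^2\rightarrow^p 0$, which first delivers $\lambda\rightarrow^p 0$ and only then makes the Taylor expansion in $\lambda$ and the resulting $O_p(n^{-1/2})$ rate legitimate. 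Without some version of this feasibility-plus-optimality argument, your rate claim for $\lambda$, and hence $\hat Q(\theta_0)=O_p(n^{-1})$, is unsupported.

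The part away from $\theta_0$ is also weaker than it needs to be. You invoke stochastic equicontinuity of the empirical transport cost in $\theta$, which you do not establish and which would require substantial work given that $\hat Q(\theta)$ is defined through an optimization over the $n$-dimensional nuisance $\{z_i\}$. The paper avoids this entirely with a direct contradiction: for $\theta$ with $\Vert\mathbb{E}[g(x;\theta)]\Vert\ge\gamma>0$, the constraint $\frac{1}{n}\sum_i g(z_i;\theta)=0$ together with the uniform LLN (Assumption \ref{ass5sml}) forces $\frac{1}{n}\sum_i\Vert g(z_i;\theta)-g(x_i;\theta)\Vert\ge\gamma/2$ wpa1, while the mean value theorem and Cauchy--Schwarz bound that same quantity by a multiple of $\bigl(\frac{1}{n}\sum_i\Vert z_i-x_i\Vert^2\bigr)^{1/2}$; hence the transport cost cannot vanish there, and the minimizer must eventually lie in any neighborhood of $\theta_0$. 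No equicontinuity of the value function is needed.
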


As a by-product, this theorem also secures a convergence rate on the
Lagrange multiplier $\lambda $ which proves useful for establishing our
distributional results. The conditions needed to show asymptotic normality
also closely mimic those of standard GMM estimators (e.g. Theorem 3.2 in 
\citet{Newey:HB}):

\begin{condition}
\label{ass8sml}$\theta _{0}\in \Theta ^{\circ }$, the interior of $\Theta $.
\end{condition}

\begin{condition}
\label{ass9sml}$\mathbb{E}[\sup_{\theta \in \eta }\Vert \partial_{\theta^{\prime
}}g(x_{i};\theta )\Vert ]<\infty $ where $\eta \subset \Theta $ is a
neighborhood of $\theta _{0}$.
\end{condition}

\begin{condition}
\label{ass10sml} $\left( \mathbb{E}[\partial_{\theta'}g(x_{i};\theta _{0})^{\prime
}]\left( \mathbb{E}[\partial_{z'}g(x_{i};\theta _{0})\partial
_{z'}g(x_{i},\theta _{0})^{\prime }]\right) ^{-1}\mathbb{E}[\partial
_{\theta'}g(x_{i};\theta _{0})]\right) $ is invertible.
\end{condition}

We can then provide an explicit expression of the estimator's asymptotic variance.

\begin{theorem}
\label{thnormsml}Under Assumptions \ref{assiid}-\ref{ass10sml}, the OTGMM
estimator is asymptotically normal with $\sqrt{n}(\hat{\theta}%
_{OTGMM}-\theta _{0})\rightarrow ^{d}\mathcal{N}(0;V)$, where 
\begin{eqnarray*}
V &=&\left( \mathbb{E}[G_i^{\prime }]\left( \mathbb{E}[H_i H_i^{\prime
}]\right) ^{-1}\mathbb{E}[G_i]\right) ^{-1}\times 
(\mathbb{E}[G_i^{\prime }]\left( \mathbb{E}[H_iH_i^{\prime }]\right)
^{-1}\mathbb{E}[g_ig_i^{\prime }]\left( \mathbb{E}[H_i H_i ^{\prime }]\right)
^{-1}\mathbb{E}[G_i])\times \\
&&\left( \mathbb{E}[G_i^{\prime }]\left( \mathbb{E}[H_i H_i^{\prime
}]\right) ^{-1}\mathbb{E}[G_i]\right) ^{-1},
\end{eqnarray*}
where\ $H_i \equiv \partial_{z^{\prime }}g(x_{i};\theta _{0})$,
       $G_i \equiv \partial_{\theta^{\prime }}g(x_{i};\theta _{0})$ and
       $g_i \equiv g(x_{i};\theta _{0})$.
\end{theorem}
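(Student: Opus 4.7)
The plan is to combine the first-order closed-form expansion from Proposition~\ref{propsml} with the standard asymptotic-normality argument for GMM estimators having an estimated weighting matrix. Theorem~\ref{thconsissml} already supplies $\hat\theta\to^p\theta_0$ and $\lambda=O_p(n^{-1/2})$; coupled with the FOC (\ref{eqfoczj}) and the $L^2$ bound on $H$ in Assumption~\ref{ass3sml}, this immediately implies $\hat{\mathbb{E}}[\Vert z_i-x_i\Vert^2] = O_p(n^{-1})$, so the $z_i$ lie, in a mean-square sense, within an $n^{-1/2}$ neighborhood of the $x_i$.

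First I would substitute (\ref{eqfoczj}) into (\ref{eqfoclambda}) and Taylor-expand $g(z_i,\theta)$ about $x_i$, obtaining
\[
0 \;=\; \hat{\mathbb{E}}[g(x,\theta)] + \hat{\mathbb{E}}[H(x,\theta)H'(x,\theta)]\lambda + R_n(\theta,\lambda),
\]
where $R_n$ collects the quadratic-in-$\lambda$ term and a Hölder remainder of the type $\hat{\mathbb{E}}\bigl[(H(z_i,\theta)-H(x_i,\theta))(z_i-x_i)\bigr]$. Assumption~\ref{ass4sml} (Hölder continuity of $H$) together with the mean-square bound above make $R_n(\hat\theta,\lambda(\hat\theta))=o_p(n^{-1/2})$; inverting (using Assumption~\ref{ass7sml}) gives
\[
\lambda(\hat\theta) \;=\; -\bigl(\hat{\mathbb{E}}[HH']\bigr)^{-1}\hat{\mathbb{E}}[g(x,\hat\theta)] + o_p(n^{-1/2}).
\]
A parallel expansion of (\ref{eqfoctheta}), exploiting $\bigl(\partial_{\theta'}g(z,\theta)-\partial_{\theta'}g(x,\theta)\bigr)\lambda = o_p(n^{-1/2})$ since $\lambda=O_p(n^{-1/2})$, then produces the GMM-type first-order condition
\[
\hat{\mathbb{E}}[G(x,\hat\theta)']\bigl(\hat{\mathbb{E}}[HH']\bigr)^{-1}\hat{\mathbb{E}}[g(x,\hat\theta)] \;=\; o_p(n^{-1/2}),
\]
i.e., precisely the FOC of the closed-form estimator of Proposition~\ref{propsml}.

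From here the argument reduces to a textbook GMM calculation with a data-dependent weighting matrix. Mean-value expansion of $\hat{\mathbb{E}}[g(x,\hat\theta)]$ about $\theta_0$ is legitimate by Assumptions~\ref{ass8sml}--\ref{ass9sml}; a uniform law of large numbers gives $\hat{\mathbb{E}}[G(x,\cdot)] \to^p \mathbb{E}[G(x,\theta_0)]$ on a shrinking neighborhood; a Lindeberg--Lévy CLT combined with Assumption~\ref{ass2sml} yields $\sqrt{n}\,\hat{\mathbb{E}}[g(x,\theta_0)] \Rightarrow \mathcal{N}(0,\mathbb{E}[g_ig_i'])$; and the invertibility condition in Assumption~\ref{ass10sml} permits one to solve for $\sqrt{n}(\hat\theta-\theta_0)$. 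The resulting sandwich with weighting matrix $W_0 = (\mathbb{E}[HH'])^{-1}$ is precisely the stated $V$. The main obstacle is the clean bookkeeping of the two Taylor remainders: showing that they are $o_p(n^{-1/2})$ uniformly over a shrinking neighborhood of $(\theta_0,0)$ in $(\theta,\lambda)$-space, which is exactly where the Hölder smoothness of $\partial_{z'}g$ in Assumption~\ref{ass4sml} (rather than mere continuity) does the real work.
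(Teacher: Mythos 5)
Your proposal is correct and follows essentially the same route as the paper's own proof: both reduce the problem to the linearized GMM estimator of Proposition \ref{propsml} via the expansion $\lambda = -\bigl(\hat{\mathbb{E}}[HH']\bigr)^{-1}\hat{\mathbb{E}}[g(x,\theta)] + o_p(n^{-1/2})$ (the paper's Equation (\ref{lambda_distrib})), control the Taylor/H\"older remainders using Assumptions \ref{ass4sml} and \ref{ass7sml} together with $\hat{\mathbb{E}}[\Vert z-x\Vert^2]=O_p(n^{-1})$, and then run the standard mean-value expansion, CLT and Slutsky argument to obtain the sandwich variance with weighting matrix $(\mathbb{E}[H_iH_i'])^{-1}$. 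The only cosmetic difference is that you derive the GMM-type first-order condition directly from the Lagrangian conditions (\ref{eqfoctheta})--(\ref{eqfoczj}), whereas the paper first establishes the asymptotic equivalence at the level of the profiled objective function $\hat{Q}(\theta)$ before differentiating; both land on the same estimating equation and the same limit distribution.
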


This simple normal limiting distribution with root $n$ convergent behavior is somewhat unexpected from an estimator that involves a high-dimensional optimization over $O(n)$ latent variables. As in GEL, this is made possible thanks to the existence of an equivalent low-dimensional dual optimization problem, which is, in turn, equivalent to a simple GMM estimator, albeit with a nonstandard weighting matrix. Thus, the variance has the expected \textquotedblleft sandwich\textquotedblright\ form, since the reciprocal weights $\mathbb{E}[H_i H_i^{\prime }]$
differs from the moment variance $\mathbb{E}[g_i g_i^{\prime }]$. For comparison, an optimally weighted GMM estimator would have an asymptotic variance of $(\mathbb{E}[G_i]'(\mathbb{E}[g_ig_i'])^{-1}\mathbb{E}[G_i])^{-1}$.

It is difficult to formulate a simple expression for the difference $V_{OTGMM}-V_{GMM}$ between the asymptotic variance of OTGMM and that of optimal GMM, but a simple example suffices to illustrate that this difference could be any non-negative-definite matrix.

Let $x_{i}$ take value in $\mathbb{R}^{d_{x}}$ ($d_{x}\geq 2$) with $\mathbb{E}\left[ x_{i\ell }\right] =0$ and $\mathbb{E}\left[ x_{i\ell }x_{i\ell ^{\prime }}\right] \equiv \omega _{\ell } $ when $\ell=\ell'$ and zero otherwise, for $\ell ,\ell ^{\prime }=1,\ldots ,d_{x}$. For $\theta \in \mathbb{R}$, let the moment functions be given by $g_{\ell }\left( x_{i},\theta \right) =x_{i\ell }-\theta $. We then have
\[
V_{OTGMM}=\frac{1}{d_{x}^{2}}\sum_{\ell =1}^{d_{x}}\omega _{\ell }\text{ and }V_{GMM}=\left( \sum_{\ell =1}^{d_{x}}\omega _{\ell }^{-1}\right) ^{-1}.
\]
Then, by the harmonic-arithmetic mean inequality, $V_{OTGMM}\geq V_{GMM}$ with equality when $\omega _{\ell }=c$ for all $\ell $. If $\omega _{1}\longrightarrow \infty $ leaving all the other $\omega _{\ell }$ finite, $V_{OTGMM}\longrightarrow \infty $ while $V_{GMM}$ remains finite. (General non-negative-definite differences can be obtained by stacking such moment conditions for different parameters $\theta_k$ ($k=1,\ldots,d_\theta$) and possibly linearly transforming the resulting parameter vector $\theta$.) Generally, we expect the difference to be large when some moments have a disproportionally large variance while not being disproportionally sensitive to changes in the underlying variables. 

\subsection{Asymptotics under large errors}

In some applications, there may be considerable misspecification or its magnitude may be a priori unknown. It thus proves useful to relax the assumption of small errors in deriving the estimator's asymptotic properties. To handle this more general setup, we employ the following equivalence, demonstrated in Appendix \ref{appproof}.

\begin{theorem}
\label{thjidgmm}If $g\left( z,\theta \right) $ is differentiable in its
arguments, the OTGMM estimator is equivalent to a just-identified GMM
estimator expressed in terms of the modified moment function 
\begin{equation}
\tilde{g}\left( x,\theta ,\lambda \right) =\left[ 
\begin{array}{c}
\partial_{\theta}g^{\prime }\left( q\left( x,\theta ,\lambda \right) ,\theta
\right) ~\lambda \\ 
g\left( q\left( x,\theta ,\lambda \right) ,\theta \right)%
\end{array}%
\right]  \label{qauggmm}
\end{equation}%
that is a function of the observed data $x$ and the augmented parameter
vector $\tilde{\theta}\equiv \left( \theta ^{\prime },\lambda ^{\prime
}\right) ^{\prime }$ and where 
\begin{equation}
q\left( x,\theta ,\lambda \right) \equiv
\arg \min_{z:z-\partial_{z}g^{\prime }\left( z,\theta \right) \lambda =x}\left\Vert z-x\right\Vert
^{2}.  \label{eqinvfoclam}
\end{equation}
\end{theorem}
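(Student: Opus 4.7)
The plan is to collapse the KKT system (\ref{eqfoctheta})--(\ref{eqfoczj}) characterizing the OTGMM solution into the finite-dimensional system $\hat{\mathbb{E}}[\tilde g(x,\theta,\lambda)]=0$ by eliminating the $n$ latent variables $\{z_j\}$ through the per-observation first-order condition (\ref{eqfoczj}). Because the reduced system contains $d_\theta + d_g$ equations in the $(d_\theta+d_g)$-dimensional augmented parameter $\tilde\theta\equiv(\theta',\lambda')'$, it is just-identified in the GMM sense.

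The first step is to observe that, for fixed $(\theta,\lambda)$, equation (\ref{eqfoczj}) is purely per-observation: $z-\partial_z g'(z,\theta)\lambda=x_j$ couples $z_j$ only to $x_j$. Under differentiability of $g$, the implicit function theorem yields a locally unique smooth branch of solutions near $\lambda=0$ (where $z=x_j$ is the trivial root); away from that neighborhood the equation may admit several roots, and $q(x_j,\theta,\lambda)$ as defined in (\ref{eqinvfoclam}) selects the one minimizing $\|z-x_j\|^2$ (existence of the minimum follows from compactness/coercivity inherited from the well-posedness of the primal OT problem). Substituting $z_j=q(x_j,\theta,\lambda)$ into the remaining KKT conditions (\ref{eqfoctheta}) and (\ref{eqfoclambda}) converts them into equations in $(\theta,\lambda)$ only, yielding exactly the two blocks of $\hat{\mathbb{E}}[\tilde g(x,\theta,\lambda)]=0$ with $\tilde g$ given by (\ref{qauggmm}). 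The converse direction is immediate: any root $(\hat\theta,\hat\lambda)$ of the reduced system, together with $\hat z_j\equiv q(x_j,\hat\theta,\hat\lambda)$, satisfies all three blocks of the KKT system by construction of $q$.

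The main obstacle is the rigorous justification that the $\arg\min$ branch defining $q$ actually coincides with the latent $z_j$ produced by the primal problem (\ref{eqobjf})--(\ref{eqcons}) when the implicit FOC has multiple roots. At the primal optimum, each $\hat z_j$ must minimize the separable Lagrangian contribution $\tfrac{1}{2}\|z-x_j\|^2-\hat\lambda'g(z,\hat\theta)$, so it must satisfy the second-order condition that $I-\hat\lambda'\partial^2_{zz'}g(z,\hat\theta)$ be positive semidefinite at $z=\hat z_j$. I would close the gap by a continuation argument starting from $\lambda=0$, where $q=x$ trivially and the second-order matrix equals $I$: extend the branch smoothly via the implicit function theorem as long as this matrix remains nonsingular, and verify that along this branch $\|z-x_j\|$ is the smallest among nearby FOC roots. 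This makes the $\arg\min$ characterization consistent with the primal selection on the connected component that contains the OTGMM solution, which is what the stated equivalence requires; once that alignment is in hand, the remainder of the proof is the substitution and dimension-counting already sketched.
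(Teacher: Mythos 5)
Your proposal follows essentially the same route as the paper's proof: eliminate the latent $z_j$ via the per-observation first-order condition $z-\partial_{z}g^{\prime}(z,\theta)\lambda=x_j$, select the distance-minimizing branch when this relation is multivalued, and substitute $z_j=q(x_j,\theta,\lambda)$ into the remaining first-order conditions to obtain the just-identified system $\hat{\mathbb{E}}[\tilde{g}(x,\theta,\lambda)]=0$ in the augmented parameter $(\theta^{\prime},\lambda^{\prime})^{\prime}$. The branch-selection issue you flag as the main obstacle is legitimate but is handled in the paper simply by decree (the $\arg\min$ in Equation (\ref{eqinvfoclam}) is the definition of $q$) and is rendered moot under Assumption \ref{asscaff}, where the map $z\mapsto z-\partial_{z}g^{\prime}(z,\theta)\lambda$ is the gradient of a strictly convex potential and hence globally invertible.
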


Note that $q\left( x,\theta ,\lambda \right) $ is essentially the inverse of
the mapping $z-\partial_{z}g^{\prime }\left( z,\theta \right) \lambda =x$
(from Equation (\ref{eqfoczj})), augmented with a rule to select the
appropriate branch in case the inverse is multivalued.

The equivalence result of Theorem \ref{thjidgmm} implies that many of the 
asymptotic technical tools used in GMM-type estimators can be adapted to our
setup, with the distinction that the function $q\left( x,\theta ,\lambda
\right) $ is defined only implicitly. Hence, many of our efforts below seek
to recast necessary conditions on $q\left( x,\theta ,\lambda \right) $ in
terms of more primitive conditions on the moment function $g\left( z,\theta
\right) $ whenever possible or in terms of regularity conditions drawn from optimal transport theory.

We start with a standard GMM-like identification condition:

\begin{condition}
\label{assidgmm}For some compact sets $\Theta $ and $\Lambda $, there exists
a unique $\left( \theta _{0},\lambda _{0}\right) \in \Theta \times \Lambda $
solving $\mathbb{E}\left[ \tilde{g}\left( x,\theta ,\lambda \right) \right]
=0$ for $\tilde{g}\left( x,\theta ,\lambda \right) $ defined in Theorem \ref%
{thjidgmm}.
\end{condition}

This condition is implied by a natural uniqueness and regularity condition on the solution to the primal optimal transportation problem (Equations (\ref{eqobjgen})):

\begin{condition}\label{asscaff}
Let $\mu_{zx;\theta}$ denote the solution to Problem (\ref{eqobjgen}) for a given $\theta\in\Theta$.
(i) $\mathbb{E}_{\mu_{zx;\theta}}[\Vert z-x \Vert^2]$ is uniquely minimized at $\theta=\theta_0$
(ii) The corresponding marginals $\mu_{z;\theta}$ and $\mu _{x}$ are absolutely continuous with respect to the Lebesgue measure with a density that is
finite, nonvanishing and H\"{o}lder continuous on their convex support.
\end{condition}

Indeed, by Theorem [C3], part b) and d), in \citet{caffarelli:regconv}, Assumption \ref{asscaff}(ii) implies that, at each $\theta$, there exists a unique invertible transport map $z=q\left( x\right) $ from $\mu _{x}$ to $\mu _{z;\theta}$ and both $q$ and its inverse are equal to the gradient of a twice differentiable strictly convex function. The fact that $q^{-1}$ is the gradient of a twice differentiable strictly convex function ensures that the first-order condition in Assumption \ref{assidgmm} has a unique solution (making a rule to handle multivalued inverses unnecessary).


Next, we consider standard continuity and dominance conditions that are used
to establish uniform convergence of the GMM objective function.
These assumptions constitute a superset of those needed for standard GMM because the
modified moment conditions include the additional parameter $\lambda$ and higher-order derivatives of the original moment conditions.
In a high-level form, these conditions read:

\begin{condition}
\label{asssimplegtilda}(i) $\tilde{g}\left( x,\theta ,\lambda \right) $ is
continuous in $\theta $ and $\lambda $ for $\left( \theta ,\lambda \right)
\in \Theta \times \Lambda $ with probability one and (ii) $\mathbb{E}\left[
\sup_{\left( \theta ,\lambda \right) \in \Theta \times \Lambda }\left\Vert 
\tilde{g}\left( x,\theta ,\lambda \right) \right\Vert \right] <\infty $.
\end{condition}

Alternatively, Assumption \ref{asssimplegtilda} can be replaced by more
primitive conditions on $g\left( z,\theta \right) $ instead, as given below
in Assumptions \ref{assgcontdiff}, \ref{asseigval} and \ref{assgdom}.

\begin{condition}
\label{assgcontdiff}(i) $g\left( z,\theta \right) $ and
$\partial_{z^{\prime }}g\left( z,\theta \right) $ are differentiable in $\theta $ and
(ii) $\partial_{\theta^{\prime }}g\left( z,\theta \right) $ is continuous in
both arguments.
\end{condition}

This assumption parallels continuity assumptions typically made for GMM, but
higher order derivatives of $g\left( z,\theta \right) $ are needed, because
they enter the moment condition either directly or indirectly via the
function $q\left( x,\theta ,\lambda \right) $.
The next condition ensures that the function $q\left( x,\theta ,\lambda
\right) $ is well behaved.

\begin{condition}
\label{asseigval}$\bar{\nu}\bar{\lambda}<1$ where $\bar{\lambda}%
=\max_{\lambda \in \Lambda }\left\Vert \lambda \right\Vert $ and $\bar{\nu}%
=\sup_{\theta \in \Theta }\sup_{z\in \mathcal{X}}\max_{k\in \left\{ 1,\ldots
,d_{g}\right\} }\allowbreak \max \eigval\left( \partial_{zz^{\prime
}}g_{k}\left( z,\theta \right) \right) $, in which $\partial_{zz^{\prime
}}g_{k}\left( z,\theta \right) $ exists for $k=1,\ldots ,d_{g}$ and where $%
\eigval\left( M\right) $ for some matrix $M$ denotes the set of its
eigenvalues.
\end{condition}

Once again, this condition can be alternatively phrased in terms of optimal transport concepts.
The first order condition which implicitly defines $z=q\left( x,\theta
,\lambda \right)$ can be written in terms of the derivative of a \emph{potential function}
$\psi(z,\theta,\lambda)=z^{\prime }z/2-g^{\prime }\left(z,\theta \right) \lambda$: $\nabla_z \psi(z,\theta,\lambda)=x$.
With the help of Theorem [C3], part b) and d), in \citet{caffarelli:regconv}, Assumption \ref{asscaff}(ii) implies that, at each $\theta$, the above
potential $\psi(z,\theta,\lambda)$ has a positive-definite Hessian (with respect to $z$), which implies Assumption \ref{asseigval}.

In order to state our remaining regularity conditions, it is useful to
introduce a notion of (nonuniform) Lipschitz continuity, combined with dominance conditions.

\begin{definition}
\label{defholderLD}Let $\mathcal{L}$ be the set of functions $h\left(
z,\theta \right) $ such that (i) $\mathbb{E}\left[ \sup_{\theta \in \Theta
}\left\Vert h\left( x,\theta \right) \right\Vert \right]\linebreak <\infty $ and (ii)
there exists a function $\bar{h}\left( x,\theta \right) $ satisfying 
\begin{eqnarray}
\mathbb{E}\left[ \sup_{\theta \in \Theta }\bar{h}\left( x,\theta \right)
\left\Vert \partial_{z^{\prime }}g\left( x,\theta \right) \right\Vert %
\right] &<&\infty .  \label{eqdomhg} \\
\left\Vert h\left( z,\theta \right) -h\left( x,\theta \right) \right\Vert
&\leq &\bar{h}\left( x,\theta \right) \left\Vert z-x\right\Vert \label{eqnonulip}
\end{eqnarray}%
for all $x,z\in \mathcal{X}$ and $\theta\in\Theta$, and where $g\left( x,\theta \right) $ is as in the moment conditions.
\end{definition}

This Lipschitz continuity-type assumption has no parallel in conventional GMM.
It is made here because it ensures that
the behavior of the observed $x$ and the underlying unobserved $z$ will not
differ to such an extent that moments of unobserved variables would be
infinite, while the corresponding observed moments are finite. Clearly,
without such an assumption, observable moments would be essentially
uninformative. The idea underlying Definition \ref{defholderLD} is that we
want to define a property that is akin to Lipschitz continuity but that
allows for some heterogeneity (through the function $\bar{h}\left( x,\theta
\right) $ in Equation (\ref{eqnonulip})). This heterogeneity proves
particularly useful in the case where $\mathcal{X}$ is not compact (for
compact $\mathcal{X}$, one can take $\bar{h}\left( x,\theta \right) $ to be
constant in $x$ with little loss of generality). For a given function $%
h\left( x,\theta \right) $ that is finite for finite $x$, membership in $%
\mathcal{L}$ is easy to check by inspecting the tail behavior (in $x$) of
the given function $h\left( x,\theta \right) $. Polynomial tails will
suggest a polynomial form for $\bar{h}\left( x,\theta \right) $, for
instance. Equation (\ref{eqdomhg}) strengthens the dominance condition \ref%
{defholderLD}(i) to ensure that functions $h\left( x,\theta \right) $ in $%
\mathcal{L}$ also satisfy a dominance condition when interacted with other
quantities entering the optimization problem,
i.e. $\partial_{z^{\prime}}g\left( x,\theta \right) $).

With this definition in hand, we can succinctly state a sufficient condition
for $\tilde{g}\left( x,\theta ,\lambda \right) $ to satisfy a dominance
condition:

\begin{condition}
$\,$\label{assgdom}$g\left( \cdot ,\cdot \right) $ and each element of $%
\partial_{\theta}g^{\prime }\left( \cdot ,\cdot \right) $ belong to $\mathcal{L}$%
.
\end{condition}

We are now ready to state our general consistency result.

\begin{theorem}
\label{thconsislarge}Under Assumptions \ref{assiid}, \ref{assidgmm} and
either Assumption \ref{asssimplegtilda} or Assumptions \ref{assgcontdiff}, 
\ref{asseigval}, \ref{assgdom}, the OTGMM estimator is consistent
($(\hat{\theta},\hat{\lambda})\overset{p}{\longrightarrow }(\theta _{0},\lambda _{0}) $).
\end{theorem}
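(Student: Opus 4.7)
The plan is to invoke Theorem \ref{thjidgmm} to recast OTGMM as a just-identified GMM estimator in the augmented parameter $\tilde{\theta}=(\theta',\lambda')'$ with moment function $\tilde{g}(x,\theta,\lambda)$. Consistency then follows from a standard GMM consistency argument (e.g.\ Theorem 2.6 in \citet{Newey:HB}) provided (a) $(\theta_0,\lambda_0)$ is the unique zero of $\mathbb{E}[\tilde{g}(x,\theta,\lambda)]$ on the compact set $\Theta\times\Lambda$, and (b) $\hat{\mathbb{E}}[\tilde{g}(x,\theta,\lambda)]$ converges uniformly on $\Theta\times\Lambda$ to its population counterpart. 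Part (a) is exactly Assumption \ref{assidgmm}. Part (b) follows from a uniform law of large numbers (e.g.\ Lemma 2.4 in \citet{Newey:HB}) whenever $\tilde{g}$ is continuous in $(\theta,\lambda)$ with probability one and admits an integrable envelope $\sup_{(\theta,\lambda)\in\Theta\times\Lambda}\Vert\tilde{g}(x,\theta,\lambda)\Vert$. Under the high-level Assumption \ref{asssimplegtilda} both properties are direct and the proof is complete; the work lies in deducing them from the low-level Assumptions \ref{assgcontdiff}, \ref{asseigval}, \ref{assgdom}.

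First I would establish that $q(x,\theta,\lambda)$ is well-defined and jointly continuous in $(\theta,\lambda)$. The defining relation $z = x + \partial_z g'(z,\theta)\lambda$ is a fixed-point equation in $z$. By Assumption \ref{asseigval}, the Jacobian of the right-hand side with respect to $z$ is the linear combination $\sum_k \lambda_k \partial_{zz'}g_k(z,\theta)$, whose spectral norm is bounded by $\bar{\nu}\bar{\lambda}<1$ uniformly over $(\theta,\lambda)\in\Theta\times\Lambda$ and $z\in\mathcal{X}$. The Banach fixed-point theorem then yields existence and uniqueness of $q(x,\theta,\lambda)$, and the implicit function theorem, together with the differentiability supplied by Assumption \ref{assgcontdiff}, yields joint continuity in $(\theta,\lambda)$. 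Composing with the continuous $g$ and $\partial_\theta g'$ from Assumption \ref{assgcontdiff} delivers continuity of $\tilde{g}$.

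The main obstacle is building an integrable envelope for $\tilde{g}$, which is where Assumption \ref{assgdom} and Definition \ref{defholderLD} play their role. Iterating the contraction bound gives
\[
\Vert q(x,\theta,\lambda)-x\Vert \leq \frac{\Vert \partial_z g'(x,\theta)\lambda\Vert}{1-\bar{\nu}\bar{\lambda}} \leq \frac{\bar{\lambda}}{1-\bar{\nu}\bar{\lambda}}\Vert \partial_{z'}g(x,\theta)\Vert
\]
uniformly over $(\theta,\lambda)\in\Theta\times\Lambda$. For any $h\in\mathcal{L}$, applying Equation (\ref{eqnonulip}) yields
\[
\Vert h(q(x,\theta,\lambda),\theta)\Vert \leq \Vert h(x,\theta)\Vert + \bar{h}(x,\theta)\,\Vert q(x,\theta,\lambda)-x\Vert \leq \Vert h(x,\theta)\Vert + \frac{\bar{\lambda}}{1-\bar{\nu}\bar{\lambda}}\,\bar{h}(x,\theta)\Vert\partial_{z'}g(x,\theta)\Vert,
\]
whose supremum over $(\theta,\lambda)\in\Theta\times\Lambda$ has finite expectation by Definition \ref{defholderLD}(i) and Equation (\ref{eqdomhg}). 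Applying this with $h=g$ and with $h$ equal to each entry of $\partial_\theta g'$ (all in $\mathcal{L}$ by Assumption \ref{assgdom}) bounds both blocks of $\tilde{g}(x,\theta,\lambda)$; the first block picks up an additional multiplicative factor $\Vert\lambda\Vert\leq\bar{\lambda}$ from compactness of $\Lambda$. This supplies the required integrable envelope. The ULLN therefore gives uniform convergence, and combined with identification this yields $(\hat{\theta},\hat{\lambda})\overset{p}{\longrightarrow}(\theta_0,\lambda_0)$ by the standard extremum estimator argument.
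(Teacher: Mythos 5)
Your proposal is correct and follows essentially the same route as the paper: recast OTGMM as just-identified GMM in $(\theta,\lambda)$ via Theorem \ref{thjidgmm}, apply the standard consistency theorem under Assumption \ref{assidgmm} plus uniform convergence, and then verify the high-level Assumption \ref{asssimplegtilda} from the primitive conditions by (i) establishing continuity of $q(x,\theta,\lambda)$ through the contraction/implicit-function argument enabled by Assumption \ref{asseigval} and (ii) building the integrable envelope from the bound $\Vert q(x,\theta,\lambda)-x\Vert \leq \bar{\lambda}\Vert \partial_{z'}g(x,\theta)\Vert/(1-\bar{\nu}\bar{\lambda})$ combined with Definition \ref{defholderLD}. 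Your inline derivations reproduce exactly the content of the paper's Lemmas \ref{lemdiffq} and \ref{lemholderLD} (the paper obtains the same displacement bound by a mean-value argument rather than by iterating the contraction, which is immaterial).
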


We now turn to asymptotic normality. We first need a conventional
\textquotedblleft interior solution\textquotedblright\ assumption.

\begin{condition}
\label{assinterlarge}$\left( \theta _{0},\lambda _{0}\right) $ from
Assumption \ref{assidgmm} lies in the interior of $\Theta \times \Lambda $.
\end{condition}

Next, as in any GMM estimator, we need finite variance of the moment
functions and their differentiability:

\begin{condition}
\label{asshesjaclarge}(i) $\mathbb{V}\left[ \tilde{g}\left( x,\theta
_{0},\lambda _{0}\right) \right] \equiv \Omega $ exists and (ii) $\mathbb{E}%
\left[ \partial \tilde{g}\left( x,\theta ,\lambda \right) /\partial \left(
\theta ^{\prime },\lambda ^{\prime }\right) \right] \equiv \tilde{G}$ exists
and is nonsingular.
\end{condition}

Assumption \ref{asshesjaclarge}(ii) can be expressed in a more primitive
fashion using the explicit form for $\tilde{G}$ provided in Theorem \ref%
{thanlarge} below.

Next, we first state a high-level dominance condition that ensures uniform
convergence of the Jacobian term $\partial \tilde{g}\left( x,\theta ,\lambda
\right) /\partial \left( \theta ^{\prime },\lambda ^{\prime }\right) $.

\begin{condition}
\label{assdgdom}(i) $\tilde{g}\left( x,\theta ,\lambda \right) $ is
continuously differentiable in $\left( \theta ,\lambda \right) $;\linebreak (ii) $%
\mathbb{E[}\sup_{\left( \theta ,\lambda \right) \in \Theta \times \Lambda
}\allowbreak \left\Vert \partial \tilde{g}\left( x,\theta ,\lambda \right)
/\partial \left( \theta ^{\prime },\lambda ^{\prime }\right) \right\Vert
]<\infty $.
\end{condition}

This assumption is implied by the following, more primitive, condition:

\begin{condition}
\label{assdgdomprim}(i) $g\left( z,\theta \right) $ and
$\partial_{\theta}g\left( z,\theta \right) $ are continuously differentiable in $\theta $,
(ii) all elements of $\partial_{\theta}g_{k}\left( z,\theta \right) $ and $%
\partial_{\theta\theta^{\prime }}g_{k}\left( z,\theta \right) $ for $k=1,\ldots
,d_{g} $ belong to $\mathcal{L}$ and (iii) Assumptions \ref{assgcontdiff}(i)
and \ref{asseigval} hold.
\end{condition}

We can now state our general asymptotic normality and root-$n$ consistency result, shown in Appendix \ref{appproof}.

\begin{theorem}
\label{thanlarge}Let the assumptions of Theorem \ref{thconsislarge} hold as
well as Assumptions \ref{assinterlarge}, \ref{asshesjaclarge} and either
Assumption \ref{assdgdom} or \ref{assdgdomprim}. Then, 
\begin{equation*}
\sqrt{n}\left( \left[ 
\begin{array}{c}
\hat{\theta} \\ 
\hat{\lambda}%
\end{array}%
\right] -\left[ 
\begin{array}{c}
\theta _{0} \\ 
\lambda _{0}%
\end{array}%
\right] \right) \overset{d}{\longrightarrow }\mathcal{N}\left(
0,W^{-1}\right)
\end{equation*}%
where $W=\tilde{G}^{\prime }\Omega ^{-1}\tilde{G}$, $\Omega =\mathbb{E}\left[
\tilde{g}\tilde{g}^{\prime }\right],$
\begin{equation*}
\tilde{g} \equiv \tilde{g}(z,(\theta_0,\lambda_0)) = \left[ 
\begin{array}{c}
\partial_{\theta}g^{\prime }\left( z,\theta_0 \right) \lambda_0 \\ 
g\left( z,\theta_0 \right)%
\end{array}%
\right] \text{ and }\tilde{G} \equiv \mathbb{E}[\partial_{\theta} \tilde{g}'] = \left[ 
\begin{array}{cc}
\tilde{G}_{\theta \theta } & \tilde{G}_{\theta \lambda } \\ 
\tilde{G}_{\lambda \theta } & \tilde{G}_{\lambda \lambda }%
\end{array}%
\right]
\end{equation*}%
in which%
\begin{eqnarray*}
\tilde{G}_{\theta \theta } &\equiv &\mathbb{E}\left[ \partial_{\theta \theta^{\prime
}}\left( \lambda _{0}^{\prime }g\left( z,\theta _{0}\right) \right)
+\partial_{\theta z^{\prime }}\left( \lambda _{0}^{\prime }g\left( z,\theta
_{0}\right) \right) \partial_{\theta^{\prime }}q\left( x,\theta _{0},\lambda
_{0}\right) \right] \\
\tilde{G}_{\lambda \theta } &\equiv &\mathbb{E}\left[ \partial_{\theta^{\prime
}}g\left( z,\theta _{0}\right) +\partial_{z^{\prime }}g\left(
z,\theta _{0}\right) \partial_{\theta^{\prime }}q\left( x,\theta
_{0},\lambda _{0}\right) \right] \\
\tilde{G}_{\theta \lambda } &\equiv &\mathbb{E}\left[ \partial_{\theta}\left(
g^{\prime }\left( z,\theta _{0}\right) \right) +\partial_{\theta z^{\prime
}}\left( \lambda _{0}^{\prime }g\left( z,\theta _{0}\right) \right)
\partial_{\lambda^{\prime }}q\left( x,\theta _{0},\lambda _{0}\right) \right]
\\
\tilde{G}_{\lambda \lambda } &\equiv &\mathbb{E}\left[ \partial_{z^{\prime
}}g\left( z,\theta _{0}\right) \partial_{\lambda^{\prime }}q\left(
x,\theta _{0},\lambda _{0}\right) \right]
\end{eqnarray*}%
where $z$ solves $x=z-\partial_{z}g^{\prime }\left( z,\theta \right) \lambda $ for given $x,\theta
,\lambda $ and where%
\begin{eqnarray}
\partial_{\theta^{\prime }}q\left( x,\theta ,\lambda \right) &=&\left[ \left(
I-\partial_{zz^{\prime }}\left( \lambda ^{\prime }g\left( z,\theta \right)
\right) \right) ^{-1}\partial_{z\theta^{\prime }}\left( \lambda ^{\prime
}g\left( z,\theta \right) \right) \right] _{z=q\left( x,\theta ,\lambda
\right) } \\
\partial_{\lambda^{\prime }}q\left( x,\theta ,\lambda \right) &=&\left[ \left(
I-\partial_{zz^{\prime }}\left( \lambda ^{\prime }g\left( z,\theta \right)
\right) \right) ^{-1}\partial_{z}g^{\prime }\left( z,\theta \right) \right]
_{z=q\left( x,\theta ,\lambda \right) }.
\end{eqnarray}
In particular, for $\theta $, the partitioned inverse formula gives%
\begin{equation*}
\sqrt{n}\left( \hat{\theta}-\theta _{0}\right) \overset{d}{\longrightarrow }%
\mathcal{N}\left( 0,\left( W_{\theta \theta }-W_{\theta \lambda }W_{\lambda
\lambda }^{-1}W_{\lambda \theta }\right) ^{-1}\right)
\end{equation*}%
where $W$ is similarly partitioned as:%
\begin{equation*}
W=\left[ 
\begin{array}{cc}
W_{\theta \theta } & W_{\theta \lambda } \\ 
W_{\lambda \theta } & W_{\lambda \lambda }%
\end{array}%
\right]
\end{equation*}
\end{theorem}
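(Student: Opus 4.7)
The plan is to leverage Theorem \ref{thjidgmm}, which recasts the OTGMM estimator as a just-identified GMM estimator in the augmented parameter $\tilde{\theta}=(\theta',\lambda')'$ with sample moment condition $\hat{\mathbb{E}}[\tilde{g}(x,\theta,\lambda)]=0$. Once in that form, I would apply the classical asymptotic normality template for just-identified GMM/Z-estimators: start from Theorem \ref{thconsislarge}, which already provides consistency $(\hat{\theta},\hat{\lambda})\overset{p}{\longrightarrow}(\theta_0,\lambda_0)$; use the interior-point Assumption \ref{assinterlarge} to validate a mean-value expansion; invoke a uniform law of large numbers for the Jacobian $\partial\tilde{g}/\partial(\theta',\lambda')$ to identify its probability limit with $\tilde{G}$; and apply a CLT to $\sqrt{n}\,\hat{\mathbb{E}}[\tilde{g}(x,\theta_0,\lambda_0)]$ with limiting covariance $\Omega$, as secured by Assumption \ref{asshesjaclarge}(i).

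Concretely, I would Taylor-expand the first-order condition at $(\theta_0,\lambda_0)$ to obtain
\begin{equation*}
0=\sqrt{n}\,\hat{\mathbb{E}}[\tilde{g}(x,\theta_0,\lambda_0)]+\hat{\mathbb{E}}\!\left[\frac{\partial\tilde{g}(x,\bar{\theta},\bar{\lambda})}{\partial(\theta',\lambda')}\right]\sqrt{n}\!\left(\begin{bmatrix}\hat{\theta}\\ \hat{\lambda}\end{bmatrix}-\begin{bmatrix}\theta_0\\ \lambda_0\end{bmatrix}\right),
\end{equation*}
for some $(\bar{\theta},\bar{\lambda})$ on the segment between $(\hat{\theta},\hat{\lambda})$ and $(\theta_0,\lambda_0)$. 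Continuity and dominance supplied by Assumption \ref{assdgdom} (or, via the primitive chain, by Assumption \ref{assdgdomprim}) deliver a uniform law of large numbers yielding $\hat{\mathbb{E}}[\partial\tilde{g}/\partial(\theta',\lambda')](\bar{\theta},\bar{\lambda})\overset{p}{\longrightarrow}\tilde{G}$, while Assumption \ref{asshesjaclarge}(ii) secures invertibility of $\tilde{G}$. Solving the linearized equation and invoking the just-identified simplification $\tilde{G}^{-1}\Omega(\tilde{G}')^{-1}=(\tilde{G}'\Omega^{-1}\tilde{G})^{-1}=W^{-1}$ delivers the stated asymptotic distribution; the marginal statement for $\hat{\theta}$ then follows from the partitioned-inverse (Schur complement) formula applied to $W^{-1}$.

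The main computational content is to derive $\tilde{G}$ explicitly, which requires implicit differentiation of the map $q(x,\theta,\lambda)$ defined by $x=z-\partial_z g'(z,\theta)\lambda$. Differentiating this defining equation with respect to $\theta'$ and $\lambda'$ in turn and solving for the corresponding Jacobians yields the displayed formulas for $\partial_{\theta'}q$ and $\partial_{\lambda'}q$, where invertibility of $I-\partial_{zz'}(\lambda'g(z,\theta))$ is guaranteed uniformly on $\Theta\times\Lambda$ by Assumption \ref{asseigval} ($\bar{\nu}\bar{\lambda}<1$). Substituting these Jacobians into the total derivatives of $\tilde{g}$ (viewed as a function of $(\theta,\lambda)$ both directly and through $z=q(x,\theta,\lambda)$) produces exactly the four block expressions $\tilde{G}_{\theta\theta},\tilde{G}_{\theta\lambda},\tilde{G}_{\lambda\theta},\tilde{G}_{\lambda\lambda}$ stated in the theorem.

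The principal obstacle I anticipate is translating primitive dominance on $g$ into the uniform control needed on $\tilde{g}$ and its Jacobian, since $\tilde{g}(x,\theta,\lambda)$ depends on $x$ only indirectly through the implicitly defined $q(x,\theta,\lambda)$. Here I would exploit membership in $\mathcal{L}$ (Definition \ref{defholderLD}) to bound $\|q(x,\theta,\lambda)-x\|$ by a dominated function of $x$, thereby converting dominance of $g$, $\partial_\theta g$, and $\partial_{\theta\theta'}g$ evaluated at $z$ into dominance at $x$ that is integrable under the data distribution. The uniform bound $\bar{\nu}\bar{\lambda}<1$ also feeds into Neumann-series-type estimates of $(I-\partial_{zz'}(\lambda'g))^{-1}$, supplying the uniform control of $\partial_{\theta'}q$ and $\partial_{\lambda'}q$ required to verify Assumption \ref{assdgdom}(ii) from the primitive Assumption \ref{assdgdomprim}, after which the standard GMM machinery invoked above can be applied mechanically.
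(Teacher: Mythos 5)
Your proposal is correct and follows essentially the same route as the paper: consistency from Theorem \ref{thconsislarge} plus the standard just-identified GMM expansion (the paper cites Theorem 3.2 and Lemma 2.4 of Newey and McFadden where you write out the mean-value step explicitly), the implicit function theorem applied to $x=z-\partial_z g'(z,\theta)\lambda$ for the Jacobians of $q$, and the reduction of Assumption \ref{assdgdomprim} to Assumption \ref{assdgdom} via the $(1-\bar{\lambda}\bar{\nu})^{-1}$ bound and the $\mathcal{L}$-membership bound on $\Vert q(x,\theta,\lambda)-x\Vert$, which is precisely the content of the paper's Lemmas \ref{lemdiffq} and \ref{lemholderLD}. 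No substantive gap.
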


The asymptotic variance stated in Theorem \ref{thanlarge} takes the familiar
form expected from a just-identified GMM estimator: $(\tilde{G}^{\prime
}\Omega ^{-1}\tilde{G})^{-1}$. The relatively lengthy expressions merely come
from explicitly computing the first derivative matrix $\tilde{G}$ in terms
of its constituents. This is accomplished by differentiating $\tilde{g}$
with respect to all parameters using the chain rule and calculating the
derivative of $q\left( x,\theta ,\lambda \right) $ using the implicit
function theorem.

We thus have now completely characterized the first-order asymptotic properties of our estimator in the most general settings of large (i.e., non-local) misspecification. This result thus allows researcher to directly replace their GMM estimator which may happen to fail overidentification tests by another, logically consistent and easy-to-interpret, estimator where the overidentification failure is naturally accounted for by errors in the variables. In addition, researchers can further document the presence of errors via Theorem \ref{thanlarge}, as it enables, as a by-product, a formal test of the absence of error. Under this null hypothesis, which can be stated as $\lambda=0$, we have
\begin{equation}
    n\hat{\lambda}' \left( W_{\lambda\lambda} - 
W_{\lambda\theta}W_{\theta\theta}^{-1}W_{\theta\lambda} \right)^{-1} \hat{\lambda}
\overset{d}{\longrightarrow }
\chi^2_{d_g}, \label{eqchi}
\end{equation}
where the above expression can be straightforwardly derived from the partitioned inverse formula applied to the $\lambda$ sub-block of the asymptotic variance.

\section{Discussion and extensions}

On a conceptual level, our use of a so-called Wasserstein metric to measure distance between distributions does provide some desirable theoretical properties. 
For instance, the Wasserstein metric metrizes convergence in distribution (see Theorem 6.9 in \citet{villani:new}) under some simple bounded moment assumptions. In contrast, the discrepancies which generate GEL estimators do not admit such an interpretation. In fact, most discrepancies are not metrics, as they lack symmetry. The Kullback-Leibler discrepancy, which is perhaps the best known among them, does not allow comparison between distributions that are not absolutely continuous with respect to one another, whereas the Wassertein metric does. (Of course, leveraging this advantage requires considering the most general transport problem of Equation (\ref{eqobjgen}).) Finally, it is arguably logical to penalize probability transfer over larger distances more than the same probability transfer over smaller distances, as the Wasserstein metric does, while none of GEL-related discrepancies do.


An interesting extension of our approach would be a hybrid method in
which (i) the possibility of general forms of errors is
accounted for with the current method by constructing the equivalent GMM
formulation of the model via Theorem \ref{thjidgmm} and (ii) additional
restrictions on the form of the errors are imposed via additional
moment conditions involving some elements of $z$ and $x$. This could prove a
useful middle ground when a priori information regarding the errors
is available for some, but not all, variables.

As shown in Supplement \ref{secconstr}, it is straightforward to extend our approach to allow error in some, but not all variables.
When our method is used while allowing for errors in only a few variables,
it may not be possible to simultaneously satisfy all moment conditions for any $\theta$. In such cases, it could make sense to consider a hybrid method
where both errors in the variables, handled via our approach, and re-weighting of the
sample, handled via GEL, are simultaneously allowed.

Finally, we should mention other approaches aimed at handling violations of overidentifying restrictions, which
include the use of set identification combined with relaxed moment constraints (\citet{poirier:salvaging}),
placing a Bayesian prior on the magnitude of the deviations from correct specification of the moments (\citet{conley:plausexo}),
distributionally robust approaches that allow for deviations from the data generating process up to a given bound (\citet{connault:sens}, \citet{blanchet:distrob}), sensitivity analysis (\citet{shapiro:sens}, \citet{bonhomme:sens}) and misspecified moment inequality models (e.g. \citet{kwon:ineqmiss}).

\section{Application}

We revisit the study of \citet{duranton2014roads}, who documented evidence that the quantity of goods a city exports is strongly related to the extent of interstate highways present in that city.
Due to simultaneity concerns, the authors adopt an instrumental variable approach to recover the causal effect of building highways. Although the authors mitigate instrument validity concerns with controls, instrument exogeneity or exclusion could remain a concern (as noted by \citet{poirier:salvaging}) and this problem would manifest itself by specification test failure.

\begin{table}[!ht]
\caption{Main results}
\label{main_results}%
\begin{equation*}
\begin{array}{|l|l|l|}
\hline
& \mbox{GMM} & \mbox{OTGMM} \\ \hline
\mbox{log highway km} & 0.39 & 0.40 \\ 
\mbox{se} & (0.12) & (0.11) \\ \hline
\mbox{log employment} & 0.47 & 1.24  \\ 
\mbox{se} & (0.32) & (0.31) \\ \hline
\mbox{market access (export)} & -0.63 & -0.66  \\ 
\mbox{se} & (0.11) & (0.10) \\ \hline
\mbox{log 1920 population} & -0.29 & -0.57  \\ 
\mbox{se} & (0.23) & (0.23) \\ \hline
\mbox{log 1950 population} & 0.65 & 1.14  \\ 
\mbox{se} & (0.37) & (0.37) \\ \hline
\mbox{log 2000 population} & -0.20 & -1.25  \\ 
\mbox{se} & (0.44) & (0.35) \\ \hline
\mbox{log \% manuf empl} & 0.64 & 0.58  \\ 
\mbox{se} & (0.12) & (0.12) \\ \hline
\mbox{Overidentification P-value} & 0.30 & \\ \hline
\end{array}%
\end{equation*}
{Main results from Table 5 in \citet{duranton2014roads}. Original GMM estimates and OTGMM estimates. Heteroskedasticity-robust standard errors (GMM) and small-error asymptotic standard error (OTGMM) in parentheses.}
\end{table}

We apply our method to further assess the robustness of the results to potential model misspecification. We seek to recover point estimates that remain interpretable under potential misspecification and account for misspecification by viewing the model's variables as potentially measured with error. 
Not only do we consider errors in the regressors, but we also think of potentially invalid instruments as simply mismeasured versions of an underlying valid instrument that is unfortunately not available. Alternatively, one can think of the underlying valid instrument as the counterfactual value of the instrument in a world where the mechanism causing this instrument to be invalid would be absent. This broader interpretation of what could constitute an ``error'' under our framework considerably expands the scope of models that are conceptually consistent with our approach.

\citet{duranton2014roads} consider three instruments: (log) kilometers of railroads in 1898, quantity of historical exploration routes, and planned (log) highway kilometers according to a 1947 construction map.
The validity of these instruments could be criticized, for instance, in a situation where some cities are in proximity to key natural resources, which could cause higher exports and, at the same time, more transportation routes (or plans to build them). If this mechanism is active both in the present and in the past, the causal effect of highways on exports would be overestimated.
In our framework, the true but unavailable instrument could represent a measure of past transportation routes in a counterfactual world where natural resources would be evenly distributed among cities. The actual available instruments represent an approximation to this ideal instrument, a situation which we represent as a potentially non-classical errors-in-variables model.

The model's moment conditions are written in terms of
$g(z_i,\theta)=w_i(y_i-r_i'\theta)$, where the vector of observables
$z_i=(y_i,r_i',w_i')'$ contains the dependent variable $y_i$, the vector of regressors $r_i$ and the vector of instruments $w_i$.
In \citet{duranton2014roads}, the dependent variable $y_i$ measures ``propensity to export'' and is constructed from an auxiliary panel data model, which regresses volume of exports between given cities on distance and trading partner characteristics, modeled as fixed effects. It is the value of these fixed effects that is used as $y_i$. Following \citet{poirier:salvaging}, we take this construction as given and focus on export volume measured by weight.
Explicitly accounting for errors in $y_i$ is superfluous since, in a regression setting, errors in the dependent variable are already allowed for.

In the analysis below, $r_i$ always includes the regressor of main interest: the (logarithm of) the number of kilometers of highway. It also contains a number of controls, which may differ in the different models considered. These controls include: log employment, log market access, log population in 1920, 1950, and 2000, and log manufacturing share in 2003, average distance to the nearest body of water, land gradient, dummy variables for census regions, log share of the fraction of adult population with a college degree or more, log average income per capita, log share of employment in wholesale trade, and log average daily traffic on the interstate highways in 2005. 
We allow for errors in all of these variables, except for the constant term and dummies.




We consider two of the specifications employed by \citet{duranton2014roads}: The specification with many covariates that most obviously passes the overidentifying test and the specification with few covariates that most clearly fails this test.
(The other specifications fall in between these extreme cases and are thus not reported here, for conciseness.)
The results from GMM, replicated from the original study, and from OTGMM (allowing for large errors) are reported in Tables \ref{main_results} and \ref{failed_overID}. 

\begin{table}[!ht]
\caption{Specification that fails the test for over-identifying restriction}
\label{failed_overID}%
\begin{equation*}
\begin{array}{|l|l|l|}
\hline
& \mbox{GMM} & \mbox{OTGMM} \\ \hline
\mbox{log highway km} & 0.57 & 0.65 \\ 
\mbox{se} & (0.16) & (0.19) \\ \hline
\mbox{log employment} & 0.52 & 0.49  \\ 
\mbox{se} & (0.11) & (0.09) \\ \hline
\mbox{market access (export)} & -0.45 & -0.46  \\ 
\mbox{se} & (0.13) & (0.14) \\ \hline
\mbox{Overidentification P-value} & 0.043 & \\ \hline
\end{array}%
\end{equation*}
{Specification from column 2, Table 5, in \citet{duranton2014roads}. Replicated estimates from the original paper and OTGMM estimates. Heteroskedasticity-robust standard errors (GMM) and small-error asymptotic standard error (OTGMM) in parentheses.}
\end{table}

The OTGMM estimates are similar to those of \citet{duranton2014roads}. Although some coefficients of the controls are larger in magnitude, the main coefficient --- the elasticity of export weight relative to kilometers of highway --- is almost unchanged and remains statistically significant.
As in the original study, the 95\% confidence intervals of the main elasticity of interest obtained with different sets of controls overlap significantly.

It is also instructive to look at the correction of the underlying variables implied by OTGMM. Table \ref{MEs}, columns 1 and 2, documents this by looking at the standard deviation of different elements of the correction $z_{i} - x_{i}$. 
The corresponding quantities for the models of Tables \ref{main_results} and \ref{failed_overID} are reported in columns 1 and 2, respectively.
To get an idea of the scale, the last column reports the standard deviations of the observed variables.
As expected, the errors in column 1 are exceedingly small, reflecting the fact that the model passes the overidentification test.
In contrast, column 2 is particularly interesting for our purposes because it corresponds to a specification that fails the test of over-identifying restrictions. While this may, at first, cast doubt regarding the GMM estimates, OTGMM shows that errors of the order of only 10\% of the observed regressors' magnitude are sufficient to eliminate the mispecification. Such small error magnitudes are highly plausible empirically, thus supporting the plausibility of the OTGMM estimate. As the GMM and OTGMM estimates of the main elasticity of interest are very close in Table \ref{failed_overID}, this also corroborates the authors' GMM estimates. Overall, our approach strongly supports the conclusions of the original study and thus provides an effective robustness test.

The fact that the model with more controls leads to smaller error magnitudes is very consistent with our interpretation: to the extent that including more controls reduces the magnitude of the potentially endogenous residuals, one would expect that our estimator has to perform smaller alterations to the variables to arrive at valid instruments and/or non-endogeneous regressors. More quantitatively, suppose that the instrument $w$ and the error term $\epsilon$ can be decomposed into separate components, say $w=w_1+w_2$ and $\epsilon = \epsilon_1 + \epsilon_2$, where $w_1$ and $\epsilon_1$ are correlated, but the latter is well explained by additional controls. In a model with fewer controls, our method has to identify $w_1$ and/or $\epsilon_1$ as error components to obtain a correctly specified model. However, if including controls already accounts for the $\epsilon_1$ term, then our estimator no longer needs to correct the corresponding components and thus achieves orthogonality with smaller errors.

In contrast, using GEL to address misspecification would effectively assume that the misspecification originates from some form of selection bias. The fact that adding more control eliminates the misspecification indicates that the controls incorporate the variables that explain selection bias. Yet, these controls are not included in the model with few controls, precisely the model where the largest amount of sample reweighting would take place when using GEL. This paradox makes the use of GEL as a remedy difficult to rationalize in a setting where misspecification arises primarily from the unavailability of adequate control variables.



\begin{table}[!ht]
\caption{{Standard deviations of errors and corresponding regressors}}
\label{MEs}%
\begin{equation*}
\begin{array}{|l|l|l|l|}
\hline
\mbox{Variables} & \sqrt{\hat{\mathbb{V}}[z-x]} & \hfill \sqrt{\hat{\mathbb{V}}[z-x]} \hfill & \sqrt{\hat{\mathbb{V}}[x]}\\
& \mbox{(main)} &  \mbox{(failed over-ID)} & \\
\hline \hline
\mbox{log highway km} & 0.0034 & 0.0634 & 0.5884 \\ \hline
\mbox{log railroads km 1898} & 0.0048 & 0.0267 & 0.6031 \\ \hline
\mbox{exploration routes} & 0.0005 & 0.0418 & 0.8339 \\ \hline
\mbox{plan 1947} & 0.0075 & 0.0598 & 0.7049 \\ \hline
\mbox{log employment} & 0.0144 &  0.0473 & 0.8573 \\ \hline
\mbox{market access (export)} & 0.0056 & 0.0444 & 0.4864 \\ \hline
\mbox{log 1920 population} & 0.0052 & & 1.0417 \\ \hline
\mbox{log 1950 population} & 0.0097 & & 0.9253 \\ \hline
\mbox{log 2000 population} & 0.0163 & & 0.8083 \\ \hline
\mbox{log \% manuf empl} & 0.0050 & & 0.3707 \\ \hline
\end{array}%
\end{equation*}
\end{table}


In Supplement \ref{appallcon}, we perform a number of robustness tests along the lines suggested by \citet{poirier:salvaging}.
We further replicate specifications with many more controls and find that the GMM and OTGMM are still in close agreement.
In Supplement \ref{apprelaxex}, we also consider alternative specifications that allow for the possibility that some instrumental variables should be included as regressors. The performance of OTGMM in other simple models is also investigated via simulations in Supplement \ref{secsimul}.

\section{Conclusion}

We have proposed a novel optimal transport-based version of the Generalized
Method of Moment (GMM) that fulfills, by construction, the overidentifying
moment restrictions by introducing the smallest possible amount of error in the variables or, equivalently, by allowing for the smallest possible Wasserstein metric distortions in the data distribution. This approach
conceptually merges the Generalized Empirical Likelihood (GEL) and optimal
transport methodologies. It provides a theoretically motivated
interpretation to GMM results when standard overidentification tests reject
the null. GEL approaches handle model misspecification by re-weighting the
data, which would be appropriate when misspecification arise from improper
sampling of the population. In contrast, our optimal transport approach is
appropriate when the sources of misspecification are errors or, more generally, Wasserstein-type distortions in the data distribution, which
is arguably a common situation in applications.


\appendix

\section{Proofs}

\subsection{Linearized estimator}

\label{applin}

\begin{proof}[Proof of proposition \protect\ref{propsml}]
In the following, the approximation denoted by \textquotedblleft $%
\approx $\textquotedblright\ are exact to first order in $z_{j}-x_{j}$. In
that limit,\ $\partial_{z}g^{\prime }\left( z_{j},\theta \right) \approx
\partial_{z}g^{\prime }\left( x_{j},\theta \right) $. Therefore:%
\begin{eqnarray}
z_{j}-x_{j} &\approx &\partial_{z}g^{\prime }\left( x_{j},\theta \right)
\lambda  \notag \\
z_{j} &\approx &x_{j}+\partial_{z}g^{\prime }\left( x_{j},\theta \right)
\lambda  \label{eqgivez}
\end{eqnarray}%
Substituting into the constraint yields: 
$\sum_{j}g\left( x_{j}+\partial_{z}g^{\prime }\left( x_{j},\theta \right)
\lambda ,\theta \right) \approx 0.$,
while using a Taylor expansion gives:
$\sum_{j}\left( g\left( x_{j},\theta \right) +\partial_{z^{\prime }}g\left(
x_{j},\theta \right) \partial_{z}g^{\prime }\left( x_{j},\theta \right)
\lambda \right) \approx 0$, or:
\begin{equation*}
\hat{\mathbb{E}}\left[ g\left( x,\theta \right) \right] +\left( \hat{\mathbb{%
E}}\left[ H\left( x,\theta \right) H^{\prime }\left( x,\theta \right) \right]
\right) \lambda \approx 0
\end{equation*}%
where $H\left( x,\theta \right) =\partial_{z^{\prime }}g\left( x,\theta \right)$,
thus implying:%
\begin{equation}
\lambda \approx -\left( \hat{\mathbb{E}}\left[ H\left( x,\theta \right)
H^{\prime }\left( x,\theta \right) \right] \right) ^{-1}\hat{\mathbb{E}}%
\left[ g\left( x,\theta \right) \right]  \label{eqgivelambda}
\end{equation}%
Substituting (\ref{eqgivez}) and (\ref{eqgivelambda}) back into the
objective function (\ref{eqobjf}) yields (with $M\equiv \hat{\mathbb{E}}\left[ H\left( x,\theta \right) H^{\prime
}\left( x,\theta \right) \right]$) :%
\begin{eqnarray*}
\frac{1}{2n}\sum_{j}\left\Vert z_{j}-x_{j}\right\Vert ^{2}
&\approx &\frac{1}{2n}\sum_{j}\left\Vert x_{j}+H^{\prime }\left(
x_{j},\theta \right) \lambda -x_{j}\right\Vert ^{2} \\
\approx \frac{1}{2n}\sum_{j}\left\Vert -H^{\prime }\left( x_{j},\theta
\right) M ^{-1}\hat{\mathbb{E}}\left[ g\left(
x,\theta \right) \right] \right\Vert ^{2} &=&\frac{1}{2}\hat{\mathbb{E}}\left[ g^{\prime }\left( x,\theta \right) %
\right] M ^{-1}M M^{-1}\hat{\mathbb{E}}\left[ g\left(
x,\theta \right) \right] \\
&=&\frac{1}{2}\hat{\mathbb{E}}\left[ g^{\prime }\left( x,\theta \right) %
\right] M^{-1}\hat{\mathbb{E}}\left[ g\left(
x,\theta \right) \right]
\end{eqnarray*}%

\fixqed
\end{proof}

\subsection{Asymptotics} \label{appproof}

\begin{proof}[Proof of Theorem \protect\ref{thconsissml}]
We minimize $\frac{1}{2}\sum_{i=1}^{n}\Vert z_{i}-x_{i}\Vert ^{2}$ subject to $%
\sum_{i=1}^{n}g(z_{i},\theta )=0$. First-order conditions for $z_i$ read
\begin{equation}
z_{i}-x_{i}=\partial_{z}g^{\prime }(z_{i};\theta )\lambda  \label{FOC1}\\
\end{equation}
It is first shown that there exists a sequence $z_{i}^{\ast }$ that matches
the moment condition $\sum_{i=1}^{n}g(z_{i}^{\ast };\theta _{0})=0$ and
converges uniformly to the $x_{i}$'s, implying convergence of the $z_{i}$'s
by their definition in the optimization problem.

We now discuss how to eliminate observations that are too close to a zero gradient.
For some $\eta $ and $\delta $ let $A$ be the set of all $x_{i}$ such that $%
\inf_{y\in B_{\delta }(x_{i})}\Vert \partial_{z}g(y;\theta _{0})\Vert
\geq \eta $. We must have $\mathbb{P}[A]>0$ for some $(\eta ,\delta )$
because otherwise $\{\inf_{y\in B_{\delta }(x_{i})}\Vert \partial_{z}g(y;\theta _{0})\Vert \geq 1/n\}$ has probability 0 for all $n$,
thus $\{\inf_{y\in B_{\delta }(x_{i})}\Vert \partial_{z}g(y;\theta
_{0})\Vert >0\}$ has probability 0 for all $\delta $ by continuity from
below, contradicting assumption \ref{ass3sml} with continuity of $\partial_{z}g$.

We now consider such a pair $(\eta ,\delta )$, fix the resulting set $A$,
and let $A_{s}$ be the observations in sample that fall in it. 
In order to get enough degrees of freedom to offset deviations of sample
averages from 0, we make group of observations. Let $M\equiv
\dimop(g(z_{i};\theta _{0}))/\dimop(z_{i})$, and assume for convenience it is an
integer that divides $n-|A_{s}^{c}|$\footnote{%
If not, it suffices to set the remaining (components of) $z_{i}^{*}$ to $%
x_{i}$ and re-scale appropriately in what follows.}. Without loss, let the $%
x_{i}$ in $A_{s}^{c}$ constitute the first $|A_{s}^{c}|$ observations and
let $z_{i}^{\ast }=x_{i}$ for all $x_{i}\in A_{s}^{c}$. Then, for all $k\in 
\mathbb{N}$ (0 included) let $m_{k}\equiv \{|A_{s}^{c}|+Mk,\cdots
,|A_{s}^{c}|+Mk+M-1\}$ and solve wpa1 for $z_{i}^{\ast }$ in $\sum_{i\in
m_{k}}(g(z_{i}^{\ast };\theta _{0})-g(x_{i};\theta ))=-M\frac{n}{|A_{s}|}%
\frac{1}{n}\sum_{i=1}^{n}g(x_{i};\theta _{0})$. By the LLN $\frac{1}{n}%
\sum_{i=1}^{n}g(x_{i};\theta _{0})\rightarrow ^{p}0$ and $%
|A_{s}|/n\rightarrow ^{p}\mathbb{P}[A]>0$ so that a sequence $z_{i}^{\ast }$
with $z_{i}^{\ast }\rightarrow ^{p}x_{i}$ will exist by continuity.

We also get $\sup_{i}\Vert z_{i}^{\ast }-x_{i}\Vert \rightarrow ^{p}0$
because $\sup_{i}\Vert z_{i}^{\ast }-x_{i}\Vert \leq \frac{\sup_{i}\Vert
g(z_{i}^{\ast };\theta )-g(x_{i};\theta )\Vert }{\inf_{y\in A}\Vert \partial_{z}g(y;\theta )\Vert }\leq \eta o_{p}(1)$.
By definition of $z_{i}$ and the previous result, we have $\frac{1}{n}%
\sum_{i=1}^{n}\Vert z_{i}-x_{i}\Vert ^{2}\leq \frac{1}{n}\sum_{i=1}^{n}\Vert z_{i}^{\ast
}-x_{i}\Vert^{2}\leq \sup_{i}\Vert z_{i}^{\ast }-x_{i}\Vert^{2}\rightarrow ^{p}0$.
By properties of norms, assumptions \ref{ass3sml}-\ref{ass4sml} with H\"{o}lder continuity exponent $\alpha \leq 1$,
Cauchy-Schwartz, the LLN, and the previous convergence result
\begin{align*}
\begin{split}
&\left\Vert \frac{1}{n} \sum_{i=1}^n \partial_{z} g(z_i; \theta_0) (z_i - x_i) \right\Vert \\
&\leq \frac{1}{n} \sum_{i=1}^n \Vert \partial_{z} g(z_i;
\theta_0) - \partial_{z} g(x_i; \theta_0)\Vert \Vert z_i - x_i \Vert +\frac{1}{n} \sum_{i=1}^n \Vert \partial_{z} g(x_i; \theta_0)\Vert \Vert z_i
- x_i \Vert \\
& \leq C \frac{1}{n} \sum_{i=1}^n \Vert z_i - x_i \Vert^{1+\alpha} + \left(\frac{1}{n} \sum_{i=1}^n \Vert \partial_{z} g(x_i;
\theta_0)\Vert^2\right)^{1/2} \left(\frac{1}{n} \sum_{i=1}^n \Vert z_i -
x_i\Vert^2\right)^{1/2} \overset{p}{\rightarrow } 0
\end{split}%
\end{align*}
Furthermore, proceeding component-wise with $(k\cdot )$ denoting the $k^{%
\text{th}}$ row of a matrix and using assumptions \ref{ass3sml}-\ref{ass4sml} together with
previous results and proceeding as above for the term $\frac{1}{n}%
\sum_{i=1}^{n}\left\Vert [\partial_{z}g(z_{i};\theta _{0})]_{k\cdot
}\right\Vert \left\Vert z_{i}-x_{i}\right\Vert ^{\alpha }$, we have
\begin{align*}
\begin{split}
& \left\Vert \frac{1}{n} \sum_{i=1}^n [\partial_{z} g(z_i; \theta_0)]_{k
\cdot} [\partial_{z} g(z_i; \theta_0)]_{j \cdot}^{\prime}- \frac{1}{n}
\sum_{i=1}^n [\partial_{z} g(x_i; \theta_0)]_{k \cdot} [\partial_{z} g(x_i;
\theta_0)]_{j \cdot}^{\prime}\right\Vert \\
& \leq C \frac{1}{n} \sum_{i=1}^n \left\Vert [\partial_{z} g(z_i;
\theta_0)]_{k \cdot} \right\Vert \left\Vert z_i - x_i \right\Vert^\alpha
 + C \left(\frac{1}{n} \sum_{i=1}^n \left\Vert z_i - x_i
\right\Vert^{2\alpha}\right)^{1/2} \left(\frac{1}{n} \sum_{i=1}^n
\left\Vert[\partial_{z} g(x_i; \theta_0)]_{k \cdot} \right\Vert^2 \right)^{1/2}
\\
& \overset{p}{\rightarrow } 0 + \left(\mathbb{E}[\left\Vert[\partial_{z} g(x_i;
\theta_0)]_{k \cdot} \right\Vert^2]\right)^{1/2} 0 = 0
\end{split}%
\end{align*}
and thus, using assumption \ref{ass7sml}, there exists $C_n=O_p(1)$ such that
\begin{align*}
\Vert \lambda \Vert 
 \leq C_n \left\Vert \frac{1}{n} \sum_{i=1}^n \partial_{z} g(z_i; \theta_0) (z_i - x_i) \right\Vert \overset{p}{\rightarrow }  0
\end{align*}
Now we derive a precise rate of convergence and the resulting asymptotic distribution for $\lambda$.
Solving for $z_i$ in equation (\ref{FOC1}) yields $z_i(\lambda)$, which can be plugged in the second equation to obtain
$\sum_{i=1}^n g(z_i(\lambda); \theta) = 0.$ By a Taylor expansion and assumption \ref{ass7sml}, we get
\begin{equation}  \label{lambda_distrib}
\frac{1}{n} \sum_{i=1}^n g(x_i, \theta_0) + \frac{1}{n} \sum_{i=1}^n
\partial_{z} g(x_i; \theta_0) \partial_{z} g^{\prime}(x_i; \theta_0) \lambda +
O(\Vert\lambda\Vert^2) = 0
\end{equation}
By assumptions \ref{assiid}, \ref{ass1sml}, \ref{ass2sml} and the central limit theorem,
the first term is $O_{p}(n^{-1/2})$.

Under assumptions \ref{assiid} and \ref{ass7sml}, we have $\frac{1}{n}%
\sum_{i=1}^{n}\partial_{z}g(x_{i};\theta_0 )\partial_{z}g(x_{i};\theta_0
)^{\prime }\rightarrow ^{p}\linebreak \mathbb{E}[\partial_{z}g(x_{i};\theta_0 )
\partial_{z}g(x_{i};\theta_0 )^{\prime }]$ by the LLN and thus the second term is $%
O(\lambda )$. It follows that $\lambda =O_{p}(n^{1/2})$ with an
asymptotically normal distribution.

Finally, we turn to the situation where $\theta \neq \theta _{0}$. By the
uniform Law of Large Numbers, using assumption \ref{ass5sml}, $\sup_{\theta
\in \Theta }\Vert \frac{1}{n}\sum_{i=1}^{n}\allowbreak g(x_{i};\theta )-%
\mathbb{E}[g(x_{i};\theta )]\Vert \rightarrow ^{p}0$.

For any $\theta \in B_{\varepsilon }^{c}(\theta _{0})$ we have by
identification $\mathbb{E}[g(x_{i};\theta )]\in B_{\gamma }^{c}(0)$ for some 
$\gamma $ (otherwise, we can find a sequence whose mapping converges to 0
and by compactness there would be a convergent subsequence, implying
existence of some $\theta ^{\ast }\neq \theta _{0}$ that satisfies $\mathbb{E%
}[g(x_{i};\theta ^{\ast })]=0$).

With probability approaching one, we have by the mean value theorem and
Cauchy-Schwartz $\frac{\gamma }{2}\leq \frac{1}{n}\sum_{i=1}^{n}\left\Vert
g(z_{i};\theta )-g(x_{i};\theta )\right\Vert =\frac{1}{n}\sum_{i=1}^{n}\left%
\Vert g(\overline{z}_{i};\theta )(z_{i}-x_{i})\right\Vert\linebreak \leq (\frac{1}{n}%
\sum_{i=1}^{n}\left\Vert g(\overline{z}_{i};\theta )\right\Vert
^{2})^{1/2}\allowbreak (\frac{1}{n}\sum_{i=1}^{n}\left\Vert
z_{i}-x_{i}\right\Vert ^{2})^{1/2}$. As a result, $\frac{1}{n}%
\sum_{i=1}^{n}\left\Vert z_{i}-x_{i}\right\Vert ^{2}\rightarrow ^{p}0$ (or a
subsequence) would imply $\frac{1}{n}\sum_{i=1}^{n}\left\Vert g(\overline{z}%
_{i};\theta )\right\Vert ^{2}\rightarrow ^{p}\mathbb{E}[\Vert g(x_{i};\theta
)\Vert^2 ]$ as before and thus $\gamma \leq o_{p}(1)$, which is impossible.
Therefore, $\sum_{i=1}^{n}\Vert z_{i}-x_{i}\Vert ^{2}>O(n)$ with probability
approaching one, and the probability that $\hat{\theta}$ lives outside any
neighborhood of $\theta _{0}$ decreases to 0.

Eventually, the first-order conditions read $z_{i}-x_{i}=\lambda ^{\prime
}\partial_{z}g(x_{i};\theta _{0})+o_{p}(n^{-1/2})$ and 
$\frac{1}{n}\sum_{i=1}^{n}g(z_{i};\theta _{0})=0$ and the linearized version
is asymptotically justified.
\end{proof}

\begin{proof}[Proof of Theorem \protect\ref{thnormsml}]
Using first-order conditions, previous results, and equation (\ref{lambda_distrib}), we have
\begin{align*}
F& \equiv \frac{1}{n}\sum_{i=1}^{n}\Vert z_{i}-x_{i}\Vert ^{2} =\lambda ^{\prime }\frac{1}{n}\sum_{i=1}^{n}\partial_{z}g(z_{i};\theta
)\partial_{z}g(z_{i};\theta )^{\prime }\lambda  =\lambda ^{\prime }\frac{1}{n}\sum_{i=1}^{n}\partial_{z}g(x_{i};\theta
)\partial_{z}g(x_{i};\theta )^{\prime }\lambda +o_{p}(F) \\
& =\left( \left( \frac{1}{n}\sum_{i=1}^{n}\partial_{z}g(x_{i};\theta
)\partial_{z}g(x_{i};\theta )^{\prime }\right) ^{-1}\frac{1}{n}%
\sum_{i=1}^{n}g(x_{i};\theta )\right) ^{\prime }\frac{1}{n}%
\sum_{i=1}^{n}\partial_{z}g(x_{i};\theta )\partial_{z}g(x_{i};\theta
)^{\prime } \\
& \left( \left( \frac{1}{n}\sum_{i=1}^{n}\partial_{z}g(x_{i};\theta
)\partial_{z}g(x_{i};\theta )^{\prime }\right) ^{-1}\frac{1}{n}%
\sum_{i=1}^{n}g(x_{i};\theta )\right) +O_{p}(\Vert \lambda \Vert
^{3})+O(\Vert \lambda \Vert ^{4})+o_{p}(F)
\end{align*}
Ignoring lower order terms, we can eventually reframe the problem as
minimizing standard GMM: 
$\sum_{i=1}^{n}g(x_{i};\theta )^{\prime }(\sum_{i=1}^{n}\partial_{z}g(x_{i};\theta _{0})\partial_{z}g^{\prime }(x_{i},\theta
_{0}))^{-1}\sum_{i=1}^{n}g(x_{i};\theta )$ to get the first-order conditions%
\begin{equation*}
\sum_{i=1}^{n}\partial_{\theta}g(x_{i};\theta )^{\prime }\left(
\sum_{i=1}^{n}\partial_{z}g(x_{i};\theta _{0})\partial_{z}g^{\prime
}(x_{i},\theta _{0})\right) ^{-1}\sum_{i=1}^{n}g(x_{i};\theta )=0
\end{equation*}%
which are satisfied with probability approaching 1. By an expansion around $\theta _{0}$,%
\begin{equation*}
\sum_{i=1}^{n}\partial_{\theta}g(x_{i};\theta )^{\prime }\left(
\sum_{i=1}^{n}\partial_{z}g(x_{i};\theta _{0})\partial_{z}g^{\prime
}(x_{i},\theta _{0})\right) ^{-1}\sum_{i=1}^{n}[g(x_{i};\theta
_{0})+\partial_{\theta}g(x_{i};\overline{\theta })(\theta -\theta _{0})]=0
\end{equation*}%
so that the estimator takes the form%
\begin{equation*}
\begin{split}
\hat{\theta}_{OTGMM}-\theta _{0}& =-\left( \sum_{i=1}^{n}\partial_{\theta}g(x_{i};\hat{\theta})^{\prime }\left( \sum_{i=1}^{n}\partial_{z}g(x_{i};\theta _{0})\partial_{z}g^{\prime }(x_{i},\theta _{0})\right)
^{-1}\sum_{i=1}^{n}\partial_{\theta}g(x_{i};\overline{\theta })\right) ^{-1} \\
& \left( \sum_{i=1}^{n}\partial_{\theta}g(x_{i};\hat{\theta})^{\prime }\left(
\sum_{i=1}^{n}\partial_{z}g(x_{i};\theta _{0})\partial_{z}g^{\prime
}(x_{i},\theta _{0})\right) ^{-1}\sum_{i=1}^{n}g(x_{i};\theta _{0})\right)
\end{split}%
\end{equation*}
Noting that under assumptions \ref{assiid}, \ref{ass1sml}, and \ref{ass2sml} 
$\frac{1}{\sqrt{n}}\sum_{i=1}^{n}g(x_{i};\theta_0 )$ converges in distribution
to a normal random variables by the central limit theorem and that
assumptions \ref{ass8sml}-\ref{ass10sml} together with consistency
ensure convergence of sample averages to expectations, we obtain the
asymptotic normality of $\sqrt{n}(\hat{\theta}_{OTGMM}-\theta _{0})$ by
Slutsky with asymptotic variance given in the theorem.
\end{proof}

\begin{proof}[Proof of Theorem \protect\ref{thjidgmm}]
The first-order conditions with respect to the $z_{j}$ (Equation (\ref%
{eqfoczj})) can be written as
\begin{equation}
x_{j}=z_{j}-\partial_{z}g^{\prime }\left( z_{j},\theta \right) \lambda .
\label{eqfoclam}
\end{equation}
Under our assumptions, Equation (\ref{eqfoclam}) defines a direct
relationship between $z_{j}$ and $x_{j}$, and therefore an implicit reverse
relationship between $x_{j}$ and $z_{j}$. Since the latter may not be
unique, we observe that our original optimization problem seeks to minimize
the distance between $x_{j}$ and $z_{j}$. Hence, in cases where (\ref%
{eqfoclam}) admits multiple solutions $z_{j}$ for a given $x_{j}$, we
identify the unique (with probability one) solution that minimizes $%
\left\Vert z_{j}-x_{j}\right\Vert ^{2}$. This is accomplished by defining
the mapping (\ref{eqinvfoclam}). With this definition, the first-order conditions (\ref{eqfoctheta}) and (\ref%
{eqfoclambda}) of the Lagrangian optimization problem for $\theta $ and $%
\lambda $ yield the just-identified GMM estimator stated in the Theorem.
\end{proof}

The following Lemmas are shown in Supplement \ref{secprooflem}.

\begin{lemma}
\label{lemdiffq}Let $h\left( \cdot ,\cdot ,\cdot \right) $ be continuous in
all of its arguments. Then, under Assumptions \ref{assgcontdiff}(i) and \ref%
{asseigval}, $h\left( q\left( x,\theta ,\lambda \right) ,\theta ,\lambda
\right) $ is continuous in $\left( \theta ,\lambda \right) $.
\end{lemma}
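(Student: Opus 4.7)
My plan is to reduce the statement to showing that $q(x,\theta,\lambda)$ is continuous in $(\theta,\lambda)$ for each fixed $x$; the continuity of the composition $h(q(x,\theta,\lambda),\theta,\lambda)$ then follows immediately because $h$ is continuous in all of its arguments.

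To establish continuity of $q$, I view it as the implicit function defined by
\begin{equation*}
F(z;x,\theta,\lambda) \;\equiv\; z - \partial_{z}g'(z,\theta)\lambda - x \;=\; 0,
\end{equation*}
and invoke the implicit function theorem. Smoothness is available: Assumption \ref{assgcontdiff}(i) provides differentiability of $g$ and $\partial_{z'}g$ in $\theta$, while existence of the Hessians $\partial_{zz'}g_{k}$ asserted in Assumption \ref{asseigval} gives differentiability of $\partial_{z}g'$ in $z$, so $F$ is $C^{1}$ in $(z,\theta,\lambda)$. The Jacobian of $F$ in $z$,
\begin{equation*}
\partial_{z'}F \;=\; I - \partial_{zz'}\bigl(\lambda' g(z,\theta)\bigr),
\end{equation*}
is precisely the Hessian in $z$ of the potential $\psi(z,\theta,\lambda) = z'z/2 - \lambda'g(z,\theta)$ mentioned in the excerpt after Assumption \ref{asseigval}. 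The eigenvalue bound $\bar\nu\bar\lambda<1$ of Assumption \ref{asseigval} forces this Jacobian to be nonsingular uniformly on $\mathcal{X}\times\Theta\times\Lambda$, and it also renders $\psi$ strictly convex in $z$. Consequently $\nabla_{z}\psi = F + x$ is globally injective in $z$, so the equation $F=0$ admits a unique root: there is no spurious branch competing with the $\arg\min$ selection in (\ref{eqinvfoclam}).

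With these verifications in hand, the implicit function theorem delivers that $q(x,\theta,\lambda)$ is continuously differentiable — and in particular continuous — in $(\theta,\lambda)$ at every point of $\Theta\times\Lambda$, and the conclusion of the lemma follows by composing with the continuous $h$. The main delicate point in the plan is moving from a purely local IFT conclusion to the global continuity of the $q$ defined in (\ref{eqinvfoclam}): one must know that the locally defined branch supplied by the implicit function theorem coincides with the globally defined argmin. The strict-convexity argument above handles this cleanly, since injectivity of $\nabla_{z}\psi$ means that locally and globally one is tracking the very same unique root of $F = 0$.
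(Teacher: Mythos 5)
Your proposal is correct and follows essentially the same route as the paper's own proof: reduce to continuity of $q$ in $(\theta,\lambda)$, apply the implicit function theorem to $z-\partial_{z}g'(z,\theta)\lambda=x$, and use Assumption \ref{asseigval} ($\bar{\nu}\bar{\lambda}<1$) together with Assumption \ref{assgcontdiff}(i) to guarantee that the Jacobian $I-\partial_{zz'}(\lambda'g(z,\theta))$ is invertible. Your additional observation that strict convexity of the potential $\psi$ makes $\nabla_{z}\psi$ globally injective, so the local IFT branch agrees with the argmin selection in (\ref{eqinvfoclam}), is a nice explicit treatment of a point the paper handles only implicitly.
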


\begin{lemma}
\label{lemholderLD}Under Assumptions \ref{assidgmm} and \ref{asseigval}, if $%
h\in \mathcal{L}$, then, for $q\left( x,\theta ,\lambda \right) $ defined in Theorem \ref{thjidgmm},  
$\mathbb{E}\left[ \sup_{\left( \theta ,\lambda \right) \in \Theta \times
\Lambda }\left\Vert h\left( q\left( x,\theta ,\lambda \right) ,\theta
\right) \right\Vert \right] <\infty$.
\end{lemma}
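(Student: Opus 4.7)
The plan is to establish the lemma via a two-step bound: first control $\|q(x,\theta,\lambda)-x\|$ uniformly in $(\theta,\lambda)$ by a function of $x$ only, then apply the Lipschitz/dominance structure of $\mathcal{L}$ to conclude.

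For the first step, I would view the implicit equation $z = x + \partial_{z}g'(z,\theta)\lambda$ defining $q$ as the fixed-point equation for the map $T_{x,\theta,\lambda}(z)\equiv x+\partial_{z}g'(z,\theta)\lambda$. Assumption \ref{asseigval} (combined with compactness of $\Lambda$) makes $T_{x,\theta,\lambda}$ a contraction in $z$ with some Lipschitz constant $c<1$ that does not depend on $(x,\theta,\lambda)$: componentwise, $\|T_{x,\theta,\lambda}(z_1)-T_{x,\theta,\lambda}(z_2)\|=\|\sum_k\lambda_k(\partial_{z}g_k(z_1,\theta)-\partial_{z}g_k(z_2,\theta))\|$ is controlled by $\bar{\nu}\bar{\lambda}$ times $\|z_1-z_2\|$ via the mean-value theorem applied to each $\partial_z g_k$. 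Hence $q$ is the unique fixed point, and the standard contraction argument
\begin{equation*}
\|q-x\|=\|T_{x,\theta,\lambda}(q)-x\|\le \|T_{x,\theta,\lambda}(q)-T_{x,\theta,\lambda}(x)\|+\|T_{x,\theta,\lambda}(x)-x\|\le c\,\|q-x\|+\bar{\lambda}\,\|\partial_{z'}g(x,\theta)\|
\end{equation*}
yields the uniform bound $\|q(x,\theta,\lambda)-x\|\le C\,\|\partial_{z'}g(x,\theta)\|$ with $C\equiv\bar{\lambda}/(1-c)$ independent of $(x,\theta,\lambda)$.

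For the second step, I would apply Definition \ref{defholderLD} to the function $h\in\mathcal{L}$. By the triangle inequality and the Lipschitz-type bound (\ref{eqnonulip}),
\begin{equation*}
\|h(q(x,\theta,\lambda),\theta)\|\le \|h(x,\theta)\|+\bar{h}(x,\theta)\,\|q(x,\theta,\lambda)-x\|\le \|h(x,\theta)\|+C\,\bar{h}(x,\theta)\,\|\partial_{z'}g(x,\theta)\|.
\end{equation*}
Since the right-hand side is independent of $\lambda$, taking $\sup_{(\theta,\lambda)\in\Theta\times\Lambda}$ and then expectations, the first summand is integrable by Definition \ref{defholderLD}(i) and the second by Equation (\ref{eqdomhg}) of Definition \ref{defholderLD}(ii). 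Summing gives the desired finiteness.

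The main obstacle is the contraction argument of the first step: one has to convert the eigenvalue bound in Assumption \ref{asseigval}, which is stated componentwise on the Hessians $\partial_{zz'}g_k$, into a genuine Lipschitz bound on $T_{x,\theta,\lambda}$ with constant strictly less than $1$ uniformly in $(x,\theta,\lambda)$. The symmetry of each Hessian and the compactness of $\Lambda$ (giving a finite $\bar{\lambda}$) are the key ingredients; once this uniform contractivity is in hand, the remainder is a routine application of the dominance conditions built into the definition of $\mathcal{L}$.
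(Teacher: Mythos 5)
Your proposal is correct and follows essentially the same route as the paper's proof: the ``contraction'' bound $\left\Vert q-x\right\Vert \leq \bar{\lambda}\left\Vert \partial_{z^{\prime }}g\left( x,\theta \right) \right\Vert /\left( 1-\bar{\nu}\bar{\lambda}\right)$ you derive from the fixed-point equation is exactly the paper's mean-value-plus-rearrangement step applied to $z-x=\partial_{z}\left( \lambda ^{\prime }g\left( z,\theta \right) \right)$, and the second step (triangle inequality, Equation (\ref{eqnonulip}), then Definition \ref{defholderLD}(i) and Equation (\ref{eqdomhg})) is identical. The caveat you flag about converting the eigenvalue bound of Assumption \ref{asseigval} into an operator-norm bound is a point the paper's own proof glosses over in the same way, so it does not distinguish the two arguments.
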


\begin{proof}[Proof of Theorem \protect\ref{thconsislarge}]
Assumptions \ref{assiid}-\ref{asssimplegtilda} directly imply consistency of
our GMM estimator, by Theorem 2.6 in \citet{Newey:HB}. There remains to show
that Assumption \ref{asssimplegtilda} is implied by Assumptions \ref%
{assgcontdiff},\ref{asseigval},\ref{assgdom}.

We first establish Assumption \ref{asssimplegtilda} (i): Continuity of $%
\tilde{g}\left( x,\theta ,\lambda \right) $ in $\left( \theta ,\lambda
\right) $. To show that $g\left( q\left( x,\theta ,\lambda \right) ,\theta
\right) $ is continuous in $\left( \theta ,\lambda \right) $, we can invoke
Lemma \ref{lemdiffq} for $h\left( z,\theta ,\lambda \right) =g\left(
z,\theta \right) $, under Assumptions \ref{assgcontdiff}(i) and \ref%
{asseigval}. To show that $\partial_{\theta}g^{\prime }\left( q\left( x,\theta
,\lambda \right) ,\theta \right) \lambda $ is continuous in $\left( \theta
,\lambda \right) $, we can similarly invoke Lemma \ref{lemdiffq} for $%
h\left( z,\theta ,\lambda \right) =\partial_{\theta}g^{\prime }\left( z,\theta
\right) \lambda $, where $\partial_{\theta}g^{\prime }\left( z,\theta \right) $
is continuous in both arguments by Assumption \ref{assgcontdiff}(ii).

We now establish Assumption \ref{asssimplegtilda} (ii): $\mathbb{E}\left[
\sup_{\left( \theta ,\lambda \right) \in \Theta \times \Lambda }\left\Vert 
\tilde{g}\left( x,\theta ,\lambda \right) \right\Vert \right] <\infty $.
Since $g\left( \cdot ,\cdot \right) \in \mathcal{L}$ by Assumption \ref%
{assgdom}, it follows that $\mathbb{E}\left[ \sup_{\left( \theta ,\lambda
\right) \in \Theta \times \Lambda }\left\Vert g\left( q\left( x,\theta
,\lambda \right) ,\theta \right) \right\Vert \right] <\infty $, by Lemma \ref%
{lemholderLD}. Next, we have, for $\left( \theta ,\lambda \right) \in \Theta
\times \Lambda $, $\left\Vert \partial_{\theta}g^{\prime }\left( q\left(
x,\theta ,\lambda \right) ,\theta \right) ~\lambda \right\Vert \leq
\left\Vert \partial_{\theta}g^{\prime }\left( q\left( x,\theta ,\lambda \right)
,\theta \right) \right\Vert \left\Vert \lambda \right\Vert \leq \left\Vert
\partial_{\theta}g^{\prime }\left( q\left( x,\theta ,\lambda \right) ,\theta
\right) \right\Vert \bar{\lambda}$ by Assumption \ref{asseigval} and
compactness of $\Lambda $. By Assumption \ref{assgdom} and Lemma \ref%
{lemholderLD} we then also have that\linebreak $\mathbb{E}\left[ \sup_{\left( \theta
,\lambda \right) \in \Theta \times \Lambda }\left\Vert \partial_{\theta}g^{\prime }\left( q\left( x,\theta ,\lambda \right) ,\theta \right)
~\lambda \right\Vert \right] <\infty $.
\end{proof}

\begin{proof}[Proof of Theorem \protect\ref{thanlarge}]
Theorem \ref{thconsislarge} implies consistency $\left( \theta ,\lambda
\right) \overset{p}{\longrightarrow }\left( \theta _{0},\lambda _{0}\right) $%
. This, in addition to Assumptions \ref{assinterlarge}, \ref{asshesjaclarge}
and \ref{assdgdom} directly implies the stated asymptotic normality result,
by Theorem 3.2 and Lemma 2.4 in \citet{Newey:HB} and the Lindeberg-Levy
Central Limit Theorem.\ There remains to show that Assumption \ref{assdgdom}
is implied by Assumption \ref{assdgdomprim}.

By Lemma \ref{lemdiffq}, Assumptions \ref{assdgdomprim}(i) and (iii) imply
that both $g\left( q\left( x,\theta ,\lambda \right) ,\lambda \right) $ and $%
\partial_{\theta}g^{\prime }\left( q\left( x,\theta ,\lambda \right) ,\lambda
\right) \lambda $ are continuously differentiable in $\left( \theta ,\lambda
\right) $, thus establishing Assumption \ref{assdgdom}(i).
By Lemma \ref{lemholderLD}, Assumptions \ref{assidgmm}, \ref{assdgdomprim}%
(ii) and (iii) imply Assumption \ref{assdgdom}(ii).

The asymptotic variance of the just-identified GMM estimator defined in
Theorem \ref{thjidgmm}\ is then given by $\left( \tilde{G}^{\prime }\Omega
^{-1}\tilde{G}\right) ^{-1}$ for $\Omega$ and $\tilde{G}$ as defined in the Theorem statement.

Finally, the explicit expressions for the derivatives of the function $%
z=q\left( x,\theta ,\lambda \right) $ follow from the implicit function
theorem after noting that $q\left( x,\theta ,\lambda \right) $ is the
inverse of the mapping $z\mapsto z-\partial_{z}\left( \lambda ^{\prime
}g\left( z,\theta \right) \right) $.\ This can also be shown through an
explicit calculation: To first order, (\ref{eqfoclam}) implies, for a small
change $\Delta \theta $ in $\theta $, a corresponding change $\Delta z$ in $%
z $ while keeping $x$ and $\lambda $ fixed, that:%
\begin{equation*}
0=\Delta z-\partial_{zz^{\prime }}\left( \lambda ^{\prime }g\left( z,\theta
\right) \right) \Delta z-\partial_{z\theta^{\prime }}\left( \lambda ^{\prime
}g\left( z,\theta \right) \right) \Delta \theta .
\end{equation*}%
Thus, $\Delta z=\left( I-\partial_{zz^{\prime }}\left( \lambda ^{\prime }g\left(
z,\theta \right) \right) \right) ^{-1}\partial_{z\theta^{\prime }}\left( \lambda
^{\prime }g\left( z,\theta \right) \right) \Delta \theta$ and we have:%
\begin{equation*}
\partial_{\theta^{\prime }}q\left( x,\theta ,\lambda \right) =\left( I-\partial_{zz^{\prime }}\left( \lambda ^{\prime }g\left( z,\theta \right) \right)
\right) ^{-1}\partial_{z\theta^{\prime }}\left( \lambda ^{\prime }g\left(
z,\theta \right) \right)  \label{eqqt}
\end{equation*}%
evaluated at $z=q\left( x,\theta ,\lambda \right) $. A similar reasoning for $\lambda $, exploiting the fact that $\frac{\partial ^{2}\left( \lambda
^{\prime }g\left( z,\theta \right) \right) }{\partial z\partial \lambda
^{\prime }}=\frac{\partial g^{\prime }\left( z,\theta \right) }{\partial z}$%
, yields:
\begin{equation*}
\partial_{\lambda^{\prime }}q\left( x,\theta ,\lambda \right) =\left( I-\partial_{zz^{\prime }}\left( \lambda ^{\prime }g\left( z,\theta \right) \right)
\right) ^{-1}\partial_{z}g^{\prime }\left( z,\theta \right) .  \label{eqql}
\end{equation*}%

\fixqed
\end{proof}

\clearpage

\bibliographystyle{econometrica}
\bibliography{schennach}

@incollection{Newey:HB,
  author="W. Newey and D. McFadden",
  title="Large Sample Estimation and Hypothesis Testing",
  booktitle="Handbook of Econometrics",
  volume="IV",
  editor="R. F. Engel and D. L. McFadden",
  publisher="Elsevier Science",
  year=1994
}

@article{schennach:nlme,
	author="S. M. Schennach",
	title="Estimation of Nonlinear Models with Measurement Error",
	journal="Econometrica",
	volume="72",
	pages="33--75",
	year=2004
}

@article{chesher:1991,
	author="A. Chesher",
	title="The Effect of Measurement Error",
	journal="Biometrika",
	volume=78,
	pages="451",
	year=1991
}

@article{chamberlain:gmmeff,
	title="Asymptotic efficiency in estimation with conditional moment restrictions",
	author="G. Chamberlain",
	journal="Journal of Econometrics",
	volume=34,
	year="1987",
	pages="305--334"
}

@article{schennach:annrev,
	author="S. M. Schennach",
	title="Recent Advances in the Measurement Error Literature",
	volume="8",
	journal="Annual Reviews of Economics",
	pages="341--377",
	year=2016
}

@incollection{schennach:HB,
   author="S. M. Schennach",
   title="Mismeasured and unobserved variables",
   booktitle="Handbook of Econometrics",
   editor="S. Durlauf, L. Hansen, J. Heckman and R. Matzkin",
   chapter=6,
   pages="487--565",
   volume="7A",
   publisher="Elsevier Science",
   year=2020
}

@article{schennach:elvis,
	title="Entropic Latent Variable Integration via Simulation",
	author="S. M. Schennach",
	journal="Econometrica",
	volume="82",
	pages="345--386",
	year=2014
}

@article{aguiar:revpref,
	author="V. H. Aguiar and N. Kashaev",
	title="Stochastic Dynamic Revealed Preferences with measurement error",
	journal="Review of Economic Studies",
	volume=88,
	year=2021,
	pages="2042--2093"
}

@article{doraszelski:megmm,
  title="{R\&D} and Productivity: Estimating Endogenous Productivity",
  author="U. Doraszelski and J. Jaumandreu",
  journal="Review of Economic Studies",
  year=2013,
  volume=80,
  pages="1338--1383"
}

@article{owen:el1,
   author="A. B. Owen",
   title="Empirical Likelihood Ratio Confidence Intervals for a Single Functional",
   journal="Biometrika",
   volume=75,
   year=1988,
   pages="237--249"
}

@article{imbens:res,
   author="G. W. Imbens",
   title="One-Step Estimators for Over-Identified Generalized Method of Moments Models",
   journal="Review of Economic Studies",
   volume=64,
   year=1997,
   pages="359--383"
}

@article{kitamura:itgmm,
   author="Y. Kitamura and M. Stutzer",
   title="An Information-Theoretic Alternative to Generalized Method of Moment Estimation",
   journal="Econometrica",
   volume=65,
   year=1997,
   pages="861--874"
}

@article{qinlawless,
   author="J. Qin and J. Lawless",
   title="Empirical Likelihood and General Estimating Equations",
   journal="Annals of Statistics",
   volume=22,
   year=1994,
   pages="300--325"
}

@article{newey:gel,
	title="Higher-Order Properties of {GMM} and Generalized Empirical Likelihood
		Estimators",
	author="W. Newey and R. J. Smith",
	journal="Econometrica",
	volume=72,
	pages="219--255",
	year="2004"
}

@article{hansen:gmm,
	title="Large sample properties of generalized method of moment estimators",
	author="L. P. Hansen",
	journal="Econometrica",
	volume=50,
	pages="1029-1054",
	year="1982"
}

@article{white:miss,
	title="Maximum Likelihood Estimation of Misspecified Models",
	author="H. White",
	journal="Econometrica",
	volume=50,
	pages="1--26",
	year="1982"
}

@article{schennach:etel,
	author="S. M. Schennach",
	title="Point Estimation with Exponentially Tilted Empirical Likelihood",
	journal="Annals of Statistics",
	volume="35",
	year="2007",
	pages="634--672"
}

@article{hall:missgmm,
	title="The large sample behavior of the generalized method of moments estimator in misspecified models",
	author="A. R. Hall and A. Inoue",
	journal="Journal of Econometrics",
	volume="114",
	pages="361--394",
	year="2003"
}

@article{sawa:pseudo,
	title="Information Criteria for Discriminating Among Alternative Regression Models",
	author="T. Sawa",
	journal="Econometrica",
	volume="46",
	pages="1273--1291",
	year="1978"
}

@article{galichon:vectquan,
  title="Comonotonic Measures of Multivariate Risks",
  author="I. Ekeland and A. Galichon and M. Henry",
  journal="Mathematical Finance",
  volume=22,
  pages="109--132",
  year=2011
}

@incollection{villani:new,
	author="C. Villani",
	title="Optimal transport: Old and New",
	booktitle="Grundlehren der mathematischen Wissenschaften",
	publisher="Springer-Verlag",
	address="Heidelberg",
	year=2009
}

@book{galichon:optran,
	author="A. Galichon",
	year=2016,
	title="Optimal Transport Methods in Economics",
	publisher="Princeton University Press",
	address="Princeton"
}

@book{santambrogio:optran,
	author="F. Santambrogio",
	year=2015,
	title="Optimal Transport for Applied Mathematicians",
	publisher="Springer",
	address="New York"
}

@article{schennach:nlfact,
	author="F. Gunsilius and S. M. Schennach",
	title="Independent Principal Component Analysis",
	journal="Journal of the American Statistical Association",
	volume="forthcoming",
	year=2021
}

@article{cherno:breniercons,
  title="Monge–Kantorovich depth, quantiles, ranks and signs",
  author="V. Chernozhukov and A. Galichon and M. Hallin and M. Henry",
  journal="Annals of Statistics",
  volume=45,
  pages="223--256",
  year=2017
}

@article{carlier:vectq,
	author = {G. Carlier and V. Chernozhukov and A. Galichon},
	title = {Vector quantile regression: An optimal transport approach},
	journal = {The Annals of Statistics},
	volume = {44},
	pages = {1165--1192},
	year = 2016
}

@article{hansen:miss,
  title="Acknowledging Misspecification in Macroeconomic Theory",
  author="L. P. Hansen",
  journal="Review of Economic Dynamics",
  volume=4,
  pages="519--535",
  year=2001
}

@article{duranton2014roads,
  title={Roads and Trade: Evidence from the US},
  author={Duranton, Gilles and Morrow, Peter M and Turner, Matthew A},
  journal={Review of Economic Studies},
  volume={81},
  number={2},
  pages={681--724},
  year={2014},
  publisher={Oxford University Press}
}

@article{poirier:salvaging,
  title={Salvaging falsified instrumental variable models},
  author={Masten, Matthew A and Poirier, Alexandre},
  journal={Econometrica},
  volume={89},
  number={3},
  pages={1449--1469},
  year={2021},
  publisher={Wiley Online Library}
}

@techreport{kwon:ineqmiss,
  title="Inference in moment inequality models that is robust to spurious precision under model misspecification",
  author="D. W. K. Andrews and S. Kwon",
  institution="cowles foundation discussion paper",
  year=2019,
  number="2184R"
}

@article{shapiro:sens,
  title="Measuring the sensitivity of parameter estimates to estimation moments",
  author="I. Andrews and M. Gentzkow and J. M. Shapiro",
  journal="Quarterly journal of economics",
  volume=132,
  pages="1553--1592",
  year=2017
}

@article{conley:plausexo,
  author="T. G. Conley and C. B. Hansen and P. E. Rossi",
  year=2012,
  title="Plausibly Exogenous",
  journal="The Review of Economics and Statistics",
  volume=94,
  pages="260--272"
}

@article{connault:sens,
  title="Counterfactual Sensitivity and Robustness",
  author="T. Christensen and B. Connault",
  journal="Econometrica",
  volume="forthcoming",
  year=2022
}

@article{blanchet:distrob,
  title="Confidence regions inWasserstein distributionally robust estimation",
  author="J. Blanchet and K. Murthy and N. Si",
  journal="Biometrika",
  volume=109,
  pages="295--315",
  year=2022
}

@article{bonhomme:sens,
  title="Minimizing sensitivity to model misspecification",
  author="S. Bonhomme and M. Weidner",
  journal="Quantitative Economics",
  volume=13,
  year=2022,
  pages="907--954"
}

@article{caffarelli:regconv,
  title="Boundary Regularity of Maps with Convex Potentials--II",
  author="L. A. Caffarelli",
  journal="Annals of Mathematics, Second Series",
  volume=144,
  year=1996,
  pages="453--496"
}

\clearpage

\setcounter{page}{1}
\setcounter{section}{0}
\setcounter{table}{0}
\renewcommand{\thetable}{S.\arabic{table}}
\renewcommand{\thesection}{S.\arabic{section}}
\begin{center}
{\large Online Supplement for:\\Optimally-Transported Generalized Method of Moments\\}
by Susanne Schennach and Vincent Starck
\end{center}

\section{Algorithms}

\subsection{Iterative solution}

\label{appiter}The first order condition (\ref{eqfoczj}) can be re-written as%
\begin{equation}
\left( z_{j}-x_{j}\right) =\partial_{z}g^{\prime }\left( z_{j},\theta
\right) \lambda .  \label{eqfoczj2}
\end{equation}%
We seek to construct a sequence $z_{j}^{t}$ ($t=0,1,\ldots $) that converges
to $z_{j}$, starting with $z_{j}^{t}|_{t=0}=x_{j}$. From the moment
conditions and (\ref{eqfoczj2}), we have:%
\begin{equation*}
0=\frac{1}{n}\sum_{i=1}^{n}g\left( z_{i},\theta \right) =\frac{1}{n}%
\sum_{i=1}^{n}g\left( x_{j}+\partial_{z}g^{\prime }\left( z_{j},\theta
\right) \lambda ,\theta \right) .
\end{equation*}%
Adding and subtracting $z_{j}^{t}$ yields 
\begin{eqnarray*}
0 &=&\frac{1}{n}\sum_{i=1}^{n}g\left( z_{j}^{t}+\left(
x_{j}-z_{j}^{t}+\partial_{z}g^{\prime }\left( z_{j},\theta \right) \lambda
\right) ,\theta \right) \\
&\approx &\frac{1}{n}\sum_{i=1}^{n}g\left( z_{j}^{t},\theta \right) +\frac{1%
}{n}\sum_{i=1}^{n}\partial_{z^{\prime }}g\left( z_{j}^{t},\theta \right)
\left( x_{j}-z_{j}^{t}+\partial_{z}g^{\prime }\left( z_{j},\theta \right)
\lambda \right)
\end{eqnarray*}%
where the expansion is justified from the fact that $x_{j}-z_{j}^{t}+%
\partial_{z}g^{\prime }\left( z_{j},\theta \right) \lambda \longrightarrow
0 $ as $z_{j}^{t}\longrightarrow z_{j}$.

In the same limit, $\partial_{z}g^{\prime }\left( z_{j}^{t},\theta \right)
\longrightarrow \partial_{z}g^{\prime }\left( z_{j},\theta \right) $, so 
\begin{eqnarray*}
0 &\approx &\frac{1}{n}\sum_{i=1}^{n}g\left( z_{j}^{t},\theta \right) +\frac{%
1}{n}\sum_{i=1}^{n}\partial_{z^{\prime }}g\left( z_{j}^{t},\theta \right)
\left( x_{j}-z_{j}^{t}+\partial_{z}g^{\prime }\left( z_{j}^{t},\theta
\right) \lambda \right) \\
&=&\frac{1}{n}\sum_{i=1}^{n}g\left( z_{j}^{t},\theta \right) +\frac{1}{n}%
\sum_{i=1}^{n}\partial_{z^{\prime }}g\left( z_{j}^{t},\theta \right) \left(
x_{j}-z_{j}^{t}\right) +\frac{1}{n}\sum_{i=1}^{n}\partial_{z^{\prime}}g\left( z_{j}^{t},\theta \right) \partial_{z}g^{\prime }\left(
z_{j}^{t},\theta \right) \lambda \\
&=&\hat{\mathbb{E}}\left[ g\left( z^{t},\theta \right) \right] +\hat{\mathbb{%
E}}\left[ H\left( z^{t},\theta \right) \left( x-z^{t}\right) \right] +\hat{%
\mathbb{E}}\left[ H\left( z^{t},\theta \right) H^{\prime }\left(
z^{t},\theta \right) \right] \lambda .
\end{eqnarray*}%
Isolating $\lambda $ gives the approximation to the Lagrange multiplier at
step $t+1$:%
\begin{equation}
\lambda ^{t+1}=\left( \hat{\mathbb{E}}\left[ H\left( z^{t},\theta \right)
H^{\prime }\left( z^{t},\theta \right) \right] \right) ^{-1}\left( -\hat{%
\mathbb{E}}\left[ g\left( z^{t},\theta \right) \right] +\hat{\mathbb{E}}%
\left[ H\left( z^{t},\theta \right) \left( z^{t}-x\right) \right] \right) .
\label{eqlamtp1}
\end{equation}%
From this, we can improve the approximation to $z_{j}$ to go to the next
step, using (\ref{eqfoczj2}):%
\begin{equation}
z_{j}^{t+1}=x_{j}+H^{\prime }\left( z^{t},\theta \right) \lambda ^{t+1}\text{%
.}  \label{eqztp1}
\end{equation}%
It can be directly verified that the values of $z_{j}$ and $\lambda $ that
satisfy the first order conditions are indeed a fixed point of this
iterative rule. In the next subsection we shall provide conditions under
which this fixed point is also attractive.

After iteration to convergence, the objective function can be written in
term of the converged values of $z$ and $\lambda $:%
\begin{eqnarray*}
\hat{Q}(\theta ) &=&\frac{1}{2n}\sum_{j}\left\Vert z_{j}-x_{j}\right\Vert
^{2}=\frac{1}{2n}\sum_{j}\left\Vert H^{\prime }\left( z_{j},\theta \right)
\lambda \right\Vert ^{2}=\frac{1}{2n}\sum_{j}\lambda ^{\prime }H\left(
z_{j},\theta \right) H^{\prime }\left( z_{j},\theta \right) \lambda \\
&=&\frac{1}{2}\lambda ^{\prime }\hat{\mathbb{E}}\left[ H\left( z,\theta
\right) H^{\prime }\left( z,\theta \right) \right] \lambda \text{.}
\end{eqnarray*}

\subsection{Iterative procedure convergence}

\label{appiterconv}Substituting (\ref{eqlamtp1}) into (\ref{eqztp1}) yields
an iterative rule expressed solely in terms of $z_{j}^{t}$:%
\begin{equation}
\footnotesize
z_{j}^{t+1}=x_{j}+H^{\prime }\left( z_{j}^{t},\theta \right) \left( \hat{%
\mathbb{E}}\left[ H\left( z^{t},\theta \right) H^{\prime }\left(
z^{t},\theta \right) \right] \right) ^{-1}\left( -\hat{\mathbb{E}}\left[
g\left( z^{t},\theta \right) \right] +\hat{\mathbb{E}}\left[ H\left(
z^{t},\theta \right) \left( z^{t}-x\right) \right] \right) .
\label{eqztonly}
\end{equation}%
This is an iterative rule of the form $\mathbf{z}^{t+1}=f\left( \mathbf{z}%
^{t}\right) $, for $\mathbf{z}^{t}=\left( z_{1}^{t\prime },\ldots
,z_{n}^{t\prime }\right) ^{\prime }\in \mathbb{R}^{nd_{x}}$ with fixed point
denoted $\mathbf{z}^{\infty }$. We then have the following result.

\begin{condition}
\label{assitconv}(i) $g\left( z,\theta \right) $ is twice continuously
differentiable in $z$ and\linebreak (ii) $\hat{\mathbb{E}}\left[ H\left( z,\theta
\right) H^{\prime }\left( z,\theta \right) \right] $ is nonsingular for $%
\mathbf{z}$ in the closure of an open neighborhood $\eta $ of the fixed
point $\mathbf{z}^{\infty }$.
\end{condition}

\begin{theorem}
\label{thitconv}Under Assumption \ref{assitconv}, for a given sample $%
x_{1},\ldots ,x_{n}$, there exists a neighborhood $\eta $ of $\mathbf{z}%
^{\infty }$, such that the iterative procedure defined by Equation (\ref%
{eqztonly}) and starting at any $\mathbf{z}^{0}\in \eta $ converges to the
unique fixed point $\mathbf{z}^{\infty }\in \eta$, provided $\left\Vert \lambda \right\Vert 
$ is sufficiently small (where $\lambda $ solves the first order condition (%
\ref{eqfoczj2})).
\end{theorem}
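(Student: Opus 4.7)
The plan is to interpret the recursion \eqref{eqztonly} as a standard Picard iteration $\mathbf{z}^{t+1}=f(\mathbf{z}^t)$ on $\mathbb{R}^{nd_x}$ and apply Ostrowski's local convergence theorem: if $f$ is $C^1$ near $\mathbf{z}^\infty$ and the spectral radius of $f'(\mathbf{z}^\infty)$ is strictly less than one, then there is a neighborhood $\eta$ of $\mathbf{z}^\infty$ on which $f$ is a contraction in an equivalent norm, so that any $\mathbf{z}^0\in\eta$ generates a sequence converging to the unique fixed point of $f$ in $\eta$. The regularity requirement is immediate from Assumption \ref{assitconv}: part (i) makes $g$ twice continuously differentiable in $z$, hence $H(z,\theta)$ and $\hat{\mathbb{E}}[H(z,\theta)H'(z,\theta)]$ are $C^1$ in $\mathbf{z}$, while part (ii) guarantees that $M(\mathbf{z})\equiv\hat{\mathbb{E}}[H(z,\theta)H'(z,\theta)]$ is invertible on the closure of a neighborhood of $\mathbf{z}^\infty$, so its inverse is $C^1$ there. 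Uniqueness of the fixed point inside $\eta$ follows from the contraction property via Banach.

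The heart of the argument is to show that $f'(\mathbf{z}^\infty)$ is $O(\Vert\lambda\Vert)$, so that choosing $\Vert\lambda\Vert$ small forces the spectral radius below one. Writing
\begin{equation*}
f_j(\mathbf{z})=x_j+H'(z_j,\theta)L(\mathbf{z}),\qquad L(\mathbf{z})\equiv M(\mathbf{z})^{-1}\bigl(-\hat{\mathbb{E}}[g(z,\theta)]+\hat{\mathbb{E}}[H(z,\theta)(z-x)]\bigr),
\end{equation*}
the fixed-point conditions give $\hat{\mathbb{E}}[g(z^\infty,\theta)]=0$ and $z_j^\infty-x_j=H'(z_j^\infty,\theta)\lambda$, so $L(\mathbf{z}^\infty)=\lambda$. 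Differentiating the bracketed numerator with respect to $z_k$ produces $-\tfrac{1}{n}H(z_k,\theta)$ from the first term and $+\tfrac{1}{n}H(z_k,\theta)$ plus a derivative-of-$H$ piece contracted with $(z_k-x_k)$ from the second; the two leading $H(z_k,\theta)$ contributions cancel by the very design of the linearization, leaving a remainder of order $\Vert z_k^\infty-x_k\Vert=O(\Vert\lambda\Vert)$. Differentiating $M(\mathbf{z})^{-1}$ produces a term $-M^{-1}(\partial M/\partial z_k')M^{-1}\cdot(\text{numerator})$, whose numerator equals $M\lambda$ at $\mathbf{z}^\infty$, again an $O(\Vert\lambda\Vert)$ contribution. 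Finally, differentiating the outer factor $H'(z_j,\theta)$ gives a term multiplied by $L(\mathbf{z}^\infty)=\lambda$. Assembling all pieces yields $\Vert f'(\mathbf{z}^\infty)\Vert=O(\Vert\lambda\Vert)$ in any operator norm. Continuity of $f'$ then provides an open ball $\eta$ around $\mathbf{z}^\infty$ on which $\Vert f'(\mathbf{z})\Vert\le\kappa<1$ for $\Vert\lambda\Vert$ small enough, and the Banach fixed-point theorem concludes.

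The main obstacle is the algebraic bookkeeping behind the $O(\Vert\lambda\Vert)$ bound on $f'(\mathbf{z}^\infty)$: one must carefully track the cancellation between the derivatives of $-\hat{\mathbb{E}}[g(z,\theta)]$ and of $\hat{\mathbb{E}}[H(z,\theta)(z-x)]$, and also verify that every remaining term either carries an explicit factor of $\lambda$ or of $z^\infty-x$, which is itself $O(\Vert\lambda\Vert)$ by the first-order condition. This cancellation is essentially by construction—the iteration is the Newton-type linearization of Equation \eqref{eqfoczj2} evaluated at $\mathbf{z}^t$—but the algebra through $M(\mathbf{z})^{-1}$ and the two competing sums involving $H(z_i,\theta)$ needs care. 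Once this bound is in hand, the convergence conclusion follows immediately from standard local convergence theory for Picard iterations.
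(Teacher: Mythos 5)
Your proposal is correct and follows essentially the same route as the paper: both arguments verify smoothness of the iteration map from Assumption \ref{assitconv}, compute the Jacobian blocks $\partial z_j^{t+1}/\partial z_{i,k}^{t}$, exploit the exact cancellation of the two $n^{-1}H_{\cdot k}(z_i,\theta)$ contributions, and use the fixed-point identities $\hat{\mathbb{E}}[g(z^{\infty},\theta)]=0$ and $z_i^{\infty}-x_i=H'(z_i^{\infty},\theta)\lambda$ to conclude that every surviving term carries a factor of $\lambda$, so the spectral radius of $f'(\mathbf{z}^{\infty})$ is below one for $\Vert\lambda\Vert$ small and local convergence follows from the Banach/Ostrowski contraction argument. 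The only cosmetic difference is that you cite Ostrowski's theorem by name where the paper invokes Banach's fixed point theorem on a neighborhood, which is the same mechanism.
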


The condition that the initial point $\mathbf{z}^{0}$ should lie in a
neighborhood of the solution is standard --- most Newton-Raphson-type
iterative refinements have a similar requirement. If necessary, this
requirement can be met by simply attempting many different starting points
in search for one that yields a convergent sequence. The condition that $%
\lambda $ be small intuitively means that the errors should not
be too large. This is a purely numerical condition which has nothing to do
with sample size, statistical significance of specification tests. In
particular, it does not mean that the error magnitude must
decreases with sample size or that the effect of the errors should be small
relative to the estimator's standard deviation. Typically, the constraint on 
$\lambda $ is relaxed as the starting point $\mathbf{z}^{0}$ is chosen
closer to the solution $\mathbf{z}^{\infty }$.

\begin{proof}[Proof of Theorem \protect\ref{thitconv}]
For a rule of the form $\mathbf{z}^{t+1}=f\left( \mathbf{z}^{t}\right) $,
Banach's Fixed Point Theorem applied to a neighborhood of $\mathbf{z}%
^{\infty }$ provides a simple sufficient condition for convergence: (i) $f$
must be continuously differentiable in a neighborhood of $\mathbf{z}^{\infty
}$ and (ii) all eigenvalues of the matrix $\left[ \partial f\left( \mathbf{z}%
\right) /\partial \mathbf{z}^{\prime }\right] _{\mathbf{z}=\mathbf{z}%
^{\infty }}$ must have a modulus strictly less than $1$.

The smoothness condition (i) is trivially satisfied under Assumption \ref%
{assitconv}. Next, letting $z_{i,k}^{t}$ denote one element of the vector $%
z_{i}^{t}$, and $H_{\cdot k}\left( z_{i}^{t},\theta \right) $ denote the $k^{%
\text{th}}$ column of $H\left( z_{i}^{t},\theta \right) $, we can express
all blocks $\partial z_{j}^{t+1}/\partial z_{i,k}^{t}$ of the matrix of
partial derivatives of $f\left( \mathbf{z}\right) :$%
\begin{eqnarray*}
\frac{\partial z_{j}^{t+1}}{\partial z_{i,k}^{t}} &=&\left[ \frac{\partial }{%
\partial z_{i,k}^{t}}H^{\prime }\left( z_{i}^{t},\theta \right) \left( \hat{%
\mathbb{E}}\left[ H\left( z^{t},\theta \right) H^{\prime }\left(
z^{t},\theta \right) \right] \right) ^{-1}\right]\times\\
&&\left( -\hat{\mathbb{E}}%
\left[ g\left( z^{t},\theta \right) \right] +\hat{\mathbb{E}}\left[ H\left(
z^{t},\theta \right) \left( z^{t}-x\right) \right] \right)+ \\
&&H^{\prime }\left( z_{i}^{t},\theta \right) \left( \hat{\mathbb{E}}\left[
H\left( z^{t},\theta \right) H^{\prime }\left( z^{t},\theta \right) \right]
\right) ^{-1}\times \\
&&\left( -n^{-1}H_{\cdot k}\left( z_{i}^{t},\theta \right) +n^{-1}H_{\cdot
k}\left( z_{i}^{t},\theta \right) +n^{-1}\left[ \frac{\partial }{\partial
z_{i,k}^{t}}H\left( z_{i}^{t},\theta \right) \right] \left(
z_{i}^{t}-x_{i}\right) \right) ,
\end{eqnarray*}%
where the two $n^{-1}H_{\cdot k}\left( z_{i}^{t},\theta \right) $ terms
cancel each other. At $\mathbf{z}^{t}=\mathbf{z}^{\infty }$, $\hat{\mathbb{E}%
}\left[ g\left( z^{\infty },\theta \right) \right] =0$ and $\left(
z_{i}^{\infty }-x_{i}\right) =H^{\prime }\left( z_{i}^{\infty },\theta
\right) \lambda $ and we have:%
\begin{eqnarray*}
\frac{\partial z_{j}^{t+1}}{\partial z_{i,k}^{t}} &=&\left[ \frac{\partial }{%
\partial z_{i,k}^{\infty }}H^{\prime }\left( z_{i}^{\infty },\theta \right)
\left( \hat{\mathbb{E}}\left[ H\left( z^{\infty },\theta \right) H^{\prime
}\left( z^{\infty },\theta \right) \right] \right) ^{-1}\right] \hat{\mathbb{%
E}}\left[ H\left( z^{\infty },\theta \right) H^{\prime }\left( z^{\infty
},\theta \right) \right] \lambda \\
&&+H^{\prime }\left( z_{i}^{\infty },\theta \right) \left( \hat{\mathbb{E}}%
\left[ H\left( z^{\infty },\theta \right) H^{\prime }\left( z^{\infty
},\theta \right) \right] \right) ^{-1}n^{-1}\left[ \frac{\partial }{\partial
z_{i,k}^{\infty }}H\left( z_{i}^{\infty },\theta \right) \right] H^{\prime
}\left( z_{i}^{\infty },\theta \right) \lambda
\end{eqnarray*}%
This expression (once all derivatives of products are expanded) has the
general form of a product of $\lambda $ with functions of $\mathbf{z}$ that
contain terms of the form $\left( \hat{\mathbb{E}}\left[ H\left( z,\theta
\right) H^{\prime }\left( z,\theta \right) \right] \right) ^{-1}$, which is
nonsingular for $\mathbf{z}\in \eta $ by Assumption \ref{assitconv}(ii), and
derivatives of $g\left( z,\theta \right) $ up to order 2, which are bounded
for $z$ in the compact set $\left\{ z_{j}:\left( z_{1},\ldots ,z_{n}\right)
\in \eta \text{ and }j=1,\dots ,n\right\} $ by Assumption \ref{assitconv}%
(i). Hence the elements $\partial z_{j}^{t+1}/\partial z_{i,k}^{t}$ are
bounded by a constant times $\lambda $. It follows that the eigenvalues of
the matrix of partial derivatives of $f\left( \mathbf{z}\right) $ can be
made strictly less than 1 for $\lambda $ sufficiently small.
\end{proof}

\section{Constrained estimator}

\label{secconstr}

In some applications, it is useful to constrain the 
error, for instance to enforce the known fact that some variables are
measured without error. The optimization problem then becomes to 
minimize $\hat{\mathbb{E}}\left[ \left\Vert z-x\right\Vert ^{2}\right] $
subject to 
\begin{eqnarray}
\hat{\mathbb{E}}\left[ g\left( z,\theta \right) \right] &=&0 \\
C\left( z_{i}-x_{i}\right) &=&0\text{ for }i=1,\ldots ,n  \label{eqconstr}
\end{eqnarray}%
for some known rectangular full row rank matrix $C$ that selects the
dimensions of the error vector $x_{i}-z_{i}$ that should be
constrained to be zero. Note that error constraint is imposed
for each observation, not in an average sense. Naturally, we assume that the number of constraints imposed is not so large that it is no longer possible to satisfy all moment conditions simultaneously.

\begin{proposition}
The first order conditions (\ref{eqfoctheta})
and (\ref{eqfoclambda}) are unchanged, while Equation (\ref{eqfoczj})
becomes:%
\begin{equation}
\left( z_{j}-x_{j}\right) -PH^{\prime }\left( x,\theta \right) \lambda =0
\label{eqfocQzj}
\end{equation}%
where $P=\left( I-C^{\prime }\left( CC^{\prime }\right) ^{-1}C\right) $ and $%
H^{\prime }\left( x,\theta \right) \equiv \partial_{z}g^{\prime }\left(
z_{j},\theta \right) $.
\end{proposition}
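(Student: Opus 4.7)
The plan is to form the Lagrangian associated with the augmented optimization problem that now carries both the original moment condition and the per-observation linear constraint \eqref{eqconstr}, and then take first-order conditions one block at a time. Writing
\[
L=\tfrac{1}{2}\sumin \left\Vert z_i-x_i\right\Vert^2 - \lambda' \sumin g(z_i,\theta) - \sumin \mu_i' C(z_i-x_i),
\]
with a separate Lagrange multiplier $\mu_i$ for each observation's constraint, it is immediate that the FOCs with respect to $\theta$ and $\lambda$ coincide with \eqref{eqfoctheta} and \eqref{eqfoclambda}, because the new penalty term involves only $z_i$ and $x_i$. So the first two claims of the proposition follow with no additional work.

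The content of the proposition therefore reduces to handling the FOC with respect to $z_j$, which now reads $(z_j-x_j) - H'(z_j,\theta)\lambda - C'\mu_j = 0$. My plan is to eliminate $\mu_j$ by leveraging the constraint itself: left-multiplying this FOC by $C$ and using $C(z_j-x_j)=0$ gives $C C' \mu_j = -C H'(z_j,\theta)\lambda$, and since $C$ has full row rank, $CC'$ is invertible, so $\mu_j = -(CC')^{-1}CH'(z_j,\theta)\lambda$. Substituting this expression back into the $z_j$ FOC produces
\[
(z_j-x_j) = \bigl(I-C'(CC')^{-1}C\bigr) H'(z_j,\theta)\lambda = P H'(z_j,\theta)\lambda,
\]
which is exactly \eqref{eqfocQzj}, after recalling the proposition's convention $H'(x,\theta)\equiv \partial_z g'(z_j,\theta)$.

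The only real subtlety is making sure the logical direction is right: one needs both the FOC and the constraint \eqref{eqconstr} to eliminate $\mu_j$, and the existence of a solution to the system requires the feasibility hypothesis already stated (that the number of constraints is small enough for the moment conditions to remain jointly satisfiable). Once that is in hand, the derivation is essentially one linear-algebra step, and the projection interpretation follows automatically from the identity $P = I - C'(CC')^{-1}C$ being the orthogonal projector onto $\ker C$: intuitively, the unconstrained optimal correction $H'(z_j,\theta)\lambda$ is projected onto the subspace of admissible error directions. I do not anticipate any genuine obstacle here; the invertibility of $CC'$ is the only regularity check, and that is guaranteed by the stated full row rank assumption on $C$.
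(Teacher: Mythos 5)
Your proposal is correct and follows essentially the same route as the paper's proof: form the Lagrangian with a separate multiplier for each observation's constraint, premultiply the $z_j$ first-order condition by $C$, use $C(z_j-x_j)=0$ together with invertibility of $CC'$ (from full row rank) to solve for the multiplier, and substitute back to obtain the projected condition \eqref{eqfocQzj}. Your explicit remark that the constraint itself is needed to eliminate $\mu_j$ is a point the paper leaves implicit, but the argument is otherwise identical.
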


Thanks to linearity, the Lagrange multipliers of the error constraints
can be explicitly solved for and the dimensionality of the
problem is not increased. The only effect of the constraints is to
\textquotedblleft project out\textquotedblright , through the matrix $P$,\
the dimensions where there are no errors.

\begin{proof}
The Lagrangian for this problem is%
\begin{equation}
\frac{1}{2}\hat{\mathbb{E}}\left\Vert z-x\right\Vert ^{2}-\lambda ^{\prime }%
\hat{\mathbb{E}}\left[ g\left( z,\theta \right) \right] -\sum_{i=1}^{n}%
\gamma _{i}^{\prime }C\left( z_{i}-x_{i}\right)  \label{eqLagrancons}
\end{equation}%
where $\lambda $ and $\gamma _{i}$ are Lagrange multipliers. 
The first order conditions of the Lagrangian (\ref{eqLagrancons}) with
respect to $z_{j}$ is%
\begin{equation}
\left( z_{j}-x_{j}\right) -\partial_{z} g^{\prime }\left( z_{j},\theta
\right) \lambda -C^{\prime }\gamma _{j}=0.  \label{eqconsgam}
\end{equation}%
Re-arranging and pre-multiplying both sides by the full column rank matrix $%
C $ yields:%
\begin{equation*}
C\left( z_{j}-x_{j}\right) -CC^{\prime }\gamma _{j}=C\partial_{z} g^{\prime
}\left( z_{j},\theta \right) \lambda ,
\end{equation*}%
thus allowing us to solve for $\gamma _{j}$:%
\begin{equation*}
\gamma _{j}=-\left( CC^{\prime }\right) ^{-1}C\partial_{z} g^{\prime }\left(
z_{j},\theta \right) \lambda .
\end{equation*}%
Upon substitution of $\gamma _{i}$ into (\ref{eqconsgam}) and simple
re-arrangements, we obtain%
\begin{eqnarray*}
\left( z_{j}-x_{j}\right) &=&\left( I-C^{\prime }\left( CC^{\prime }\right)
^{-1}C\right) \partial_{z} g^{\prime }\left( z_{j},\theta \right) \lambda \\
&=&PH\left( z,\theta \right) \lambda
\end{eqnarray*}%
where $P=\left( I-C^{\prime }\left( CC^{\prime }\right) ^{-1}C\right) $ and $%
H\left( z,\theta \right) =\partial_{z^{\prime }}g\left( z,\theta \right) $.

\end{proof}

The iterative Algorithm \ref{algoiter} can easily be adapted by replacing
every instance of $H^{\prime }\left( z^{t},\theta \right) $ by $PH^{\prime
}\left( z^{t},\theta \right) $. Similarly the linearized estimator of
Equation (\ref{eqsmlerr}) becomes:%
\begin{equation*}
\frac{1}{2}\hat{\mathbb{E}}\left[ g^{\prime }\left( x,\theta \right) \right]
\left( \hat{\mathbb{E}}\left[ H\left( x,\theta \right) PH^{\prime }\left(
x,\theta \right) \right] \right) ^{-1}\hat{\mathbb{E}}\left[ g\left(
x,\theta \right) \right] .
\end{equation*}
Note that this expression assumes that the matrix being inverted remains full rank, a condition that can be interpreted as requiring the constraints to not be so strong as to make impossible to simultaneously satisfy all the moment conditions.

Asymptotic results can also be straightforwardly adapted.

\begin{corollary}
Theorem \ref{thnormsml} holds under constraint (\ref{eqconstr}), with all
instances of $\mathbb{E}[H_iH_i^{\prime }]$ replaced by $\mathbb{E}%
[H_i PH_i^{\prime }]$, for $P=\left( I-C^{\prime }\left( CC^{\prime
}\right) ^{-1}C\right) $.
\end{corollary}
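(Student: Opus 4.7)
The plan is to re-derive the asymptotic distribution by tracing the argument of Theorem \ref{thnormsml}, substituting the constrained first-order condition (\ref{eqfocQzj}) for (\ref{eqfoczj}) at every stage. The matrix $P = I - C'(CC')^{-1}C$ is the orthogonal projector onto the null space of $C$; in particular $P=P'$ and $P^2=P$, which allows consolidating products of $P$ cleanly. Since the first-order conditions with respect to $\theta$ and $\lambda$ are unchanged and only the $z_j$-FOC has acquired the projector $P$, every derivation in the proof of Theorems \ref{thconsissml}--\ref{thnormsml} carries through with $H'(x,\theta)$ replaced by $PH'(x,\theta)$ wherever it arose from solving the $z_j$-FOC.

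Concretely, I would first replicate the linearization of Proposition \ref{propsml}: substituting $z_j = x_j + PH'(z_j,\theta)\lambda$ into the moment constraint and Taylor-expanding to first order gives
\begin{equation*}
\hat{\mathbb{E}}[g(x,\theta)] + \hat{\mathbb{E}}[H(x,\theta)PH'(x,\theta)]\,\lambda \approx 0,
\end{equation*}
so that $\lambda \approx -(\hat{\mathbb{E}}[HPH'])^{-1}\hat{\mathbb{E}}[g]$, using $P^2=P$ to absorb one copy of $P$ in the quadratic form that arises when plugging $\lambda$ back into the objective. The resulting linearized problem is the GMM criterion displayed just before the corollary, with weighting $(\hat{\mathbb{E}}[HPH'])^{-1}$. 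The consistency argument from the proof of Theorem \ref{thconsissml} then transfers with minor modification: the feasible sequence $z_i^*$ must now be chosen inside the affine subspace $\{z : C(z-x_i)=0\}$, but the existence argument is essentially unchanged, because the relevant local direction is $\partial_z g\,P$, which retains sufficient rank on that subspace under the invertibility of $\mathbb{E}[HPH']$. From there, one obtains $\lambda = O_p(n^{-1/2})$ by the same Taylor expansion.

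For asymptotic normality, I would proceed exactly as in the proof of Theorem \ref{thnormsml}: expanding the first-order condition for $\theta$ around $\theta_0$, invoking the LLN to replace sample averages of $H$, $HPH'$ and $G$ by their expectations, and applying the CLT to $n^{-1/2}\sum_i g(x_i,\theta_0)$. Every factor $\mathbb{E}[HH']^{-1}$ in the original sandwich formula originates from solving for $\lambda$ and is therefore replaced by $\mathbb{E}[HPH']^{-1}$; the outer Hessian factors $\mathbb{E}[G']\mathbb{E}[HPH']^{-1}\mathbb{E}[G]$ come from differentiating the weighted objective, while the middle $\mathbb{E}[gg']$ term is unchanged since it reflects the variability of the original moment functions, which are untouched by the constraint. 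Slutsky then delivers the stated asymptotic variance.

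The main obstacle I anticipate is the identification/full-rank condition rather than anything in the limiting calculation itself: one needs $\mathbb{E}[HPH']$ to be invertible, which is the natural analog of Assumption \ref{ass7sml} restricted to $\mathrm{null}(C)$, and which is precisely the requirement flagged in the surrounding discussion that the constraints not be so stringent as to preclude simultaneous satisfaction of all moment conditions. Once this is granted, no new technical difficulty arises, as the argument is structurally identical to that of Theorem \ref{thnormsml} with the single replacement $\mathbb{E}[H_iH_i'] \rightsquigarrow \mathbb{E}[H_iPH_i']$.
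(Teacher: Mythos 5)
Your proposal is correct and follows exactly the route the paper intends: the paper gives no explicit proof of this corollary beyond the remark that the asymptotic results are ``straightforwardly adapted,'' and your tracing of the replacement $H'\rightsquigarrow PH'$ through the linearization (using $P=P'$, $P^2=P$ to collapse the quadratic form), the feasibility/consistency step restricted to $\{z: C(z-x_i)=0\}$, and the sandwich-variance calculation is precisely that adaptation made explicit. Your identification of invertibility of $\mathbb{E}[H_iPH_i']$ as the only substantive new requirement matches the caveat the paper itself flags.
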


\section{Asymptotics: Proofs of Lemmas}
\label{secprooflem}

\begin{proof}[Proof of Lemma \protect\ref{lemdiffq}]
Since $h\left( z,\theta ,\lambda \right) $ is assumed to be continuous in
all of its arguments, there only remains to show that $q\left( x,\theta
,\lambda \right) $ is continuous in $\left( \theta ,\lambda \right) $. In
fact, we can establish the stronger statement that $q\left( x,\theta
,\lambda \right) $ is differentiable in $\left( \theta ,\lambda \right) $.
Differentiability in $\theta $ can be shown by the implicit function theorem%
\begin{eqnarray*}
\partial_{\theta^{\prime }}q\left( x,\theta ,\lambda \right) &=&\left[ \left( 
\frac{\partial }{\partial z^{\prime }}\left( z-\partial_{z}\left( \lambda
^{\prime }g\left( z,\theta \right) \right) \right) \right) ^{-1}\frac{%
\partial }{\partial \theta ^{\prime }}\partial_{z}\left( z-\lambda ^{\prime
}g\left( z,\theta \right) \right) \right] _{z=q\left( x,\theta ,\lambda
\right) } \\
&=&\left[ \left( I-\partial_{zz^{\prime }}\left( \lambda ^{\prime }g\left(
z,\theta \right) \right) \right) ^{-1}\partial_{z\theta^{\prime }}\left( \lambda
^{\prime }g\left( z,\theta \right) \right) \right] _{z=q\left( x,\theta
,\lambda \right) }
\end{eqnarray*}%
since $q\left( x,\theta ,\lambda \right) $ is the inverse of $%
z\mapsto z-\partial_{z}\left( \lambda ^{\prime }g\left( z,\theta \right)
\right) $. By the definition of $\bar{\lambda}$, $\bar{\nu}$, 
\begin{equation*}
\left\Vert \left( I-\partial_{zz^{\prime }}\left( \lambda ^{\prime }g\left(
z,\theta \right) \right) \right) ^{-1}\partial_{z\theta^{\prime }}\left( \lambda
^{\prime }g\left( z,\theta \right) \right) \right\Vert \leq \left( 1-\bar{%
\lambda}\bar{\nu}\right) ^{-1}\left\Vert \partial_{z\theta^{\prime }}\left(
\lambda ^{\prime }g\left( z,\theta \right) \right) \right\Vert ,
\end{equation*}%
at $z=q\left( x,\theta ,\lambda \right) $, where $\bar{\lambda}\bar{\nu}<1$
by Assumption \ref{asseigval} and where $\partial_{z\theta^{\prime }}\left(
\lambda ^{\prime }g\left( z,\theta \right) \right) $ exists by Assumption %
\ref{assgcontdiff}. Thus $h\left( q\left( x,\theta ,\lambda \right) ,\theta
,\lambda \right) $ is continuous in $\theta $.

By a similar reasoning, we can show that $h\left( q\left( x,\theta ,\lambda
\right) ,\theta ,\lambda \right) $ is continuous in $\lambda $ if we can
show that $\partial_{\lambda^{\prime }}q\left( x,\theta ,\lambda \right) $
exists: 
\begin{eqnarray*}
\left\Vert \partial_{\lambda^{\prime }}q\left( x,\theta ,\lambda \right)
\right\Vert &=&\left\Vert \left[ \left( \frac{\partial }{\partial z^{\prime }%
}\left( z-\partial_{z}\left( \lambda ^{\prime }g\left( z,\theta \right)
\right) \right) \right) ^{-1}\frac{\partial }{\partial \lambda ^{\prime }}%
\left( \partial_{z}g^{\prime }\left( z,\theta \right) \lambda \right) %
\right] _{z=q\left( x,\theta ,\lambda \right) }\right\Vert \\
&\leq &\left( 1-\bar{\lambda}\bar{\nu}\right) ^{-1}\left\Vert \left[
\partial_{z}g^{\prime }\left( z,\theta \right) \right] _{z=q\left( x,\theta
,\lambda \right) }\right\Vert
\end{eqnarray*}%
where $\bar{\lambda}\bar{\nu}<1$ by Assumption \ref{asseigval} and $%
\partial_{z}g^{\prime }\left( z,\theta \right) $ exists by Assumption \ref%
{assgcontdiff}.
\end{proof}

\begin{proof}[Proof of Lemma \protect\ref{lemholderLD}]
By the triangle inequality and Definition \ref{defholderLD}, there exists $%
\bar{h}\left( x,\theta \right) $ such that:%
\begin{equation}
\left\Vert h\left( z,\theta \right) \right\Vert \leq \left\Vert h\left(
x,\theta \right) \right\Vert +\left\Vert h\left( z,\theta \right) -h\left(
x,\theta \right) \right\Vert  \leq \left\Vert h\left( x,\theta \right) \right\Vert +\bar{h}\left( x,\theta
\right) \left\Vert z-x\right\Vert ,  \label{eqmvH}
\end{equation}%
for $z=q\left( x,\theta ,\lambda \right) $.\ Next, using the first order
conditions (Equation (\ref{eqfoczj})), we have, by a mean value argument,
the triangle inequality and the definitions of $\bar{\lambda}$ and $\bar{\nu}
$ from Assumption \ref{asseigval}, for some mean value $\tilde{x}$, %
\begin{eqnarray}
\left\Vert z-x\right\Vert &=&\left\Vert \partial_{z}\left( \lambda ^{\prime
}g\left( z,\theta \right) \right) \right\Vert  \leq \left\Vert \partial_{z}\left( \lambda ^{\prime }g\left( x,\theta
\right) \right) \right\Vert +\left\Vert \partial_{zz^{\prime }}\left(
\lambda ^{\prime }g\left( \tilde{x},\theta \right) \right) \left( z-x\right)
\right\Vert \notag \\
&\leq &\bar{\lambda}\left\Vert \partial_{z^{\prime }}g\left( x,\theta
\right) \right\Vert +\bar{\lambda}\bar{\nu}\left\Vert z-x\right\Vert .
\end{eqnarray}%
Re-arranging and using the fact that $\bar{\lambda}\bar{\nu}<1$ by
Assumption \ref{asseigval} and $\bar{\lambda}<\infty $ by compactness of $%
\Lambda $,%
\begin{equation}
\left\Vert z-x\right\Vert \leq \frac{\bar{\lambda}\left\Vert \partial_{z^{\prime }}g\left( x,\theta \right) \right\Vert }{\left( 1-\bar{\lambda}%
\bar{\nu}\right) }.  \label{eqzxdiff}
\end{equation}%
Combining (\ref{eqmvH}) and (\ref{eqzxdiff}) and noting that applying the $%
\mathbb{E}\left[ \sup_{\theta \in \Theta }\ldots \right] $ operator does not
alter the inequalities, we have%
\begin{equation*}
\mathbb{E}\left[ \sup_{\theta \in \Theta }\left\Vert h\left( z,\theta
\right) \right\Vert \right] \leq \mathbb{E}\left[ \sup_{\theta \in \Theta
}\left\Vert h\left( x,\theta \right) \right\Vert \right] +\frac{\bar{\lambda}%
}{\left( 1-\bar{\lambda}\bar{\nu}\right) }\mathbb{E}\left[ \sup_{\theta \in
\Theta }\bar{h}\left( x,\theta \right) \left\Vert \partial_{z^{\prime}}g\left( x,\theta \right) \right\Vert \right]
\end{equation*}%
where the right-hand side quantities are finite by construction since $h\in 
\mathcal{L}$.
\end{proof}

\section{Application: Additional results}

\subsection{Regression with all controls}
\label{appallcon}

In this section, we estimate a regression with controls, which now also includes average distance to the nearest body of water; land gradient; dummy variables for census regions; log share of the fraction of adult population with a college degree or more; log average income per capita; log share of employment in wholesale trade; and log average daily traffic on the interstate highways in 2005. The results, reported in Table \ref{all_controls}, are similar to those of the main regression shown in Table \ref{main_results}. In Table \ref{all_control_sets}, we also show how the main coefficient of interest changes when including various subsets of controls.

\begin{table}[!ht]
\caption{Results with all controls}
\label{all_controls}%
\begin{equation*}
\begin{array}{|l|l|l|}
\hline
& \mbox{GMM} & \mbox{OTGMM} \\ \hline
\mbox{Dependent variable} & \multicolumn{2}{c}{\mbox{Exporter fixed effect weight}} \vline \\ \hline
\mbox{log highway km} & 0.45 & 0.47 \\ 
\mbox{se} & (0.16) & (0.17)\\ \hline
\mbox{log employment} & 0.67&  1.06\\
\mbox{se} & (0.42) & (1.12)\\ \hline
\mbox{market access (export)} & -0.49& -0.50\\
\mbox{se} & (0.10) & (0.10)\\ \hline
\mbox{log 1920 population} & -0.29& -0.40\\
\mbox{se} & (0.24) & (0.30)\\ \hline
\mbox{log 1950 population} &  0.71&  0.88\\
\mbox{se} & (0.39) & (0.50)\\ \hline
\mbox{log 2000 population} & -0.65& -1.17\\
\mbox{se} & (0.47) & (1.30)\\ \hline
\mbox{log \% manuf empl} &  0.57&  0.54\\
\mbox{se} & (0.14) & (0.14)\\ \hline
\mbox{Water} & 0.07 & 0.06\\
\mbox{se} & (0.05) & (0.05)\\ \hline
\mbox{Land gradient} & -0.21& -0.20\\
\mbox{se} & (0.08) & (0.09)\\ \hline
\mbox{College} & -0.56 & -0.77\\
\mbox{se} & (0.47)& (0.54)\\ \hline
\mbox{price}  & 0.34 & 0.59\\
\mbox{se} & (0.57)& (0.54)\\ \hline
\mbox{wholesale} & 0.74 & 0.73\\
\mbox{se} & (0.26) & (0.63)\\ \hline
\mbox{traffic} & 0.37 & 0.38\\
\mbox{se} & (0.30) & (0.29)\\ \hline
\end{array}%
\end{equation*}
{Results with all controls. Replicated estimates from the original paper and OTGMM estimates. Heteroskedasticity-robust standard errors (GMM) and small-error asymptotic standard error (OTGMM) in parentheses. The regression also include an intercept and census region dummies.}
\end{table}

\begin{table}[!ht]
\caption{Coefficient on log highway kilometers for various control sets}
\label{all_control_sets}%
\begin{equation*}
\begin{array}{|l|l|l|}
\hline
\mbox{Controls} & \mbox{GMM} & \mbox{OTGMM}\\ \hline
\mbox{none} & 1.13 & 1.17\\ 
 & (0.14) & (0.15)\\ \hline
\mbox{2} & 0.57 & 0.65\\ 
 & (0.16) & (0.19)\\ \hline
\mbox{1-5} & 0.47 & 0.46\\ 
 & (0.13) & (0.12)\\ \hline
\mbox{1-6} & 0.39 & 0.40\\ 
 & (0.12) & (0.11)\\ \hline
\mbox{1-7} & 0.34 & 0.37\\ 
 & (0.15) & (0.13)\\ \hline
\mbox{1-8} & 0.28 & 0.31\\ 
 & (0.14) & (0.13)\\ \hline
\mbox{1-9} & 0.27 & 0.30\\ 
 & (0.14) & (0.13)\\ \hline
\mbox{1-10} & 0.27 & 0.29\\ 
 & (0.13) & (0.13)\\ \hline
 \mbox{1-11} & 0.38 & 0.45\\ 
 & (0.13) & (0.14)\\ \hline
\mbox{1-12} & 0.44 & 0.48\\ 
 & (0.16) & (0.17)\\ \hline
\mbox{All} & 0.45 & 0.47\\ 
 & (0.16) & (0.17)\\ \hline
\end{array}%
\end{equation*}
{Coefficient on log highway kilometers for various sets of controls, numbered as follows: employment (1), market access (2), population 1920 (3), population 1950 (4), population 2000 (5), log \% manuf employment (6), water (7), land gradient (8), college (9), price (10), wholesale (11), traffic (12), region census dummies (13-20). All regressions include an intercept.}
\end{table}

\subsection{Relaxing exclusion restrictions}
\label{apprelaxex}

As \citet{poirier:salvaging}, we consider the possibility that some instrumental variables have non-zero regression coefficient, i.e., have a direct impact on the outcome. 
Since our estimator relies on over-identifying restrictions to recover meaningful variable corrections, we only relax one of the exclusion-restrictions at a time. This leads to the following results. 

\begin{table}[!ht]
\caption{Including log 1947 highway kilometers as a regressor}
\begin{equation*}
\begin{array}{|l|l|l|}
\hline
& \mbox{GMM} & \mbox{OTGMM} \\ \hline
\mbox{Dependent variable} & \multicolumn{2}{c}{\mbox{Exporter fixed effect weight}} \vline \\ \hline
\mbox{log highway km}  & 0.52 & 0.46 \\ 
\mbox{se} & (0.51) & (0.52) \\ \hline
\mbox{log 1947 highway km}  & -0.09 & -0.04 \\ 
\mbox{se} & (0.36) & (0.36) \\ \hline
\mbox{log employment}  & 0.44 & 1.24 \\ 
\mbox{se} & (0.33) & (0.37) \\ \hline
\mbox{market access (export)}  & -0.63 & -0.65  \\ 
\mbox{se} & (0.11) & (0.10) \\ \hline
\mbox{log 1920 population}  & -0.30 & -0.58  \\ 
\mbox{se} & (0.24) & (0.23) \\ \hline
\mbox{log 1950 population}  & 0.64 & 1.15  \\ 
\mbox{se} & (0.38) & (0.38) \\ \hline
\mbox{log 2000 population} & -0.19 & -1.26  \\ 
\mbox{se} & (0.46) & (0.38) \\ \hline
\mbox{log \% manuf empl}  & 0.64 & 0.58  \\ 
\mbox{se} & (0.12) & (0.12) \\ \hline
\end{array}%
\end{equation*}
{Including log 1947 highway kilometers as a regressor. Replicated estimates from the original paper and OTGMM estimates. Heteroskedasticity-robust standard errors (GMM) and small-error asymptotic standard error (OTGMM) in parentheses}
\end{table}

\begin{table}[!ht]
\caption{Including log exploration as a regressor}
\begin{equation*}
\begin{array}{|l|l|l|}
\hline
& \mbox{GMM} & \mbox{OTGMM} \\ \hline
\mbox{Dependent variable} & \multicolumn{2}{c}{\mbox{Exporter fixed effect weight}} \vline \\ \hline
\mbox{log highway km} & 0.42 & 0.44 \\ 
\mbox{se} & (0.16) & (0.16) \\ \hline
\mbox{log exploration} & -0.02 & -0.03 \\ 
\mbox{se} & (0.06) & (0.06) \\ \hline
\mbox{log employment} & 0.47 & 1.18 \\ 
\mbox{se} & (0.34) & (0.36) \\ \hline
\mbox{market access (export)} & -0.63 & -0.65 \\ 
\mbox{se} & (0.11) & (0.11) \\ \hline
\mbox{log 1920 population} & -0.29 & -0.52  \\ 
\mbox{se} & (0.23) & (0.26) \\ \hline
\mbox{log 1950 population} & 0.64 & 1.05 \\ 
\mbox{se} & (0.38) & (0.43) \\ \hline
\mbox{log 2000 population} & -0.20 & -1.16 \\ 
\mbox{se} & (0.45) & (0.43) \\ \hline
\mbox{log \% manuf empl} & 0.63 & 0.57 \\ 
\mbox{se} & (0.13) & (0.12) \\ \hline
\end{array}%
\end{equation*}
{Including log exploration as a regressor. Replicated estimates from the original paper and OTGMM estimates. Heteroskedasticity-robust standard errors (GMM) and small-error asymptotic standard error (OTGMM) in parentheses}
\end{table}

\begin{table}[!ht]
\caption{Including log 1998 railroad as a regressor}
\begin{equation*}
\begin{array}{|l|l|l|}
\hline
& \mbox{GMM} & \mbox{OTGMM} \\ \hline
\mbox{Dependent variable} & \multicolumn{2}{c}{\mbox{Exporter fixed effect weight}} \vline \\ \hline
\mbox{log highway km} & 0.23 & 0.24 \\ 
\mbox{se} & (0.14) & (0.14) \\ \hline
\mbox{log 1998 railroad} & 0.16 & 0.15 \\ 
\mbox{se} & (0.10) & (0.10) \\ \hline
\mbox{log employment} & 0.44 & 0.52 \\ 
\mbox{se} & (0.31) & (0.44) \\ \hline
\mbox{market access (export)} & -0.63 & -0.63 \\ 
\mbox{se} & (0.11) & (0.11) \\ \hline
\mbox{log 1920 population} & -0.41 & -0.41 \\ 
\mbox{se} & (0.24) & (0.30) \\ \hline
\mbox{log 1950 population} & 0.76 & 0.77 \\ 
\mbox{se} & (0.37) & (0.48) \\ \hline
\mbox{log 2000 population} & -0.16 & -0.26 \\ 
\mbox{se} & (0.41) & (0.48) \\ \hline
\mbox{log \% manuf empl} & 0.62 & 0.61\\ 
\mbox{se} & (0.12) & (0.56) \\ \hline
\end{array}%
\end{equation*}
{Including log 1998 railroad as a regressor. Replicated estimates from the original paper and OTGMM estimates. Heteroskedasticity-robust standard errors (GMM) and small-error asymptotic standard error (OTGMM) in parentheses}
\end{table}

As in \citet{poirier:salvaging}, we found that removing railroads has the largest impact on coefficients and is likely an improper instrument.  

\clearpage

\section{Simulations}
\label{secsimul}

We conduct simulations to assess the performance of our estimator and
compare it to efficient GMM. We consider both the OTGMM estimator (Equations
(\ref{eqobjf}) and (\ref{eqcons})) and the GMM estimator obtained under the assumption of small errors (Equation (\ref{eqsmlerr})).

Given the nonlinear nature of our estimator, we deliberately select a small sample size ($n=100$) to explore the regime where asymptotic results do not trivially hold.
We consider various moment conditions,
underlying distributions and signal-to-noise ratios. There is an underlying
random variable $z_{i}$ which satisfies the moment conditions, but the
researcher observes $x_{i}=z_{i}+\sigma e_{i}$ with $e_{i}\sim \mathcal{N}%
(0,1)$. We consider different values for the error scale $\sigma $
in order to assess the impact of magnitude of the errors on the
performance of estimators that only use the observed $x_{i}$.

Specifically, we consider the following distributions for $z_{i}$: $%
z_{i}\sim \mathcal{N}(1.5,2)$ $z_{i}\sim \mbox{Unif}[1,2]$ (uniform) $%
z_{i}\sim \mathcal{B}(5,0.3)$ (binomial) and $\sigma =0,0.5,1,1.5,2,2.5$.
The true parameter value is $\theta _{0}=1.5$, as obtained by the following
moment conditions:%
\begin{eqnarray}
\mathbb{E}[z_{i}-\theta ] &=&0,~\mathbb{E}\left[e^{z_{i}}-\frac{2}{3}\theta 
\mathbb{E}[e^{z_{i}}]\right]=0  \label{eqsimmom1} \\
\mathbb{E}[z_{i}-\theta ] &=&0,~\mathbb{E}\left[\frac{e^{2z_{i}-3}}{1+e^{2z-3}}-%
\frac{2}{3}\theta \frac{e^{2z_{i}-3}}{1+e^{2z-3}}\right]=0  \label{eqsimmom2} \\
\mathbb{E}[e^{z_{i}}-\frac{2}{3}\theta \mathbb{E}[e^{z_{i}}]] &=&0,~\mathbb{E%
}\left[\frac{e^{2z_{i}-3}}{1+e^{2z-3}}-\frac{\theta e^{2z_{i}-3}}{\left(
1.5\right) \left( 1+e^{2z-3}\right) }\right]=0  \label{eqsimmom3}
\end{eqnarray}%
Finally, we consider the process: $z_{i}\sim \mbox{Exp}(\frac{2}{3})$
with the moment conditions 
\begin{equation}
\mathbb{E}[z_{i}-\theta ]=0,~\mathbb{E}[z{_{i}}^{2}-2\theta ^{2}]=0.
\label{eqsimmom4}
\end{equation}%
In all cases, the model is correctly specified in the absence of 
errors ($\sigma =0$) but starts to violate the overidentifying restrictions when
there are errors ($\sigma >0$ so that $x\not=z$).

\begin{table}[ht]
\caption{Simulation results: Equation (\protect\ref{eqsimmom1})}
\label{tab1}\centering        
\resizebox{\columnwidth}{!}{\begin{tabular}{| l | l | l | l | l | l | l | l | l | l | l | l | l | l | l | l | l | l | l |}
\hline
            & \multicolumn{18}{c|}{\mbox{Bias}} \\
\hline
            & \multicolumn{6}{|c}{\mbox{Linearized OTGMM}} & \multicolumn{6}{|c}{\mbox{OTGMM}} & \multicolumn{6}{|c|}{\mbox{Efficient GMM}} \\
\hline
error scale & 0 & 0.5   & 1     & 1.5   & 2      & 2.5    & 0     & 0.5   & 1     & 1.5   & 2     & 2.5   & 0             & 0.5   & 1     & 1.5   & 2     & 2.5   \\
Normal & 0.00 & -0.01 & -0.05 & -0.16 & -0.49 & -1.61 & 0.00 & -0.01 & -0.03 & -0.04 & -0.05 & -0.04 & -0.02 & -0.03 & -0.09 & -0.17 & -0.23 & -0.25\\
Uniform & 0.00 & -0.23 & -1.12 & -3.59 & -10.93 & -36.42 & 0.00 & -0.09 & -0.12 & -0.11 & -0.08 & -0.06 & 0.00 & -0.13 & -0.22 & -0.27 & -0.29 & -0.29\\
Binomial & 0.00 & -0.02 & -0.10 & -0.32 & -0.99 & -3.28 & 0.00 & -0.01 & -0.04 & -0.06 & -0.06 & -0.05 & 0.00 & -0.04 & -0.14 & -0.21 & -0.25 & -0.26\\

\hline
            & \multicolumn{18}{c|}{\mbox{Standard deviation}} \\
\hline
            & \multicolumn{6}{|c}{\mbox{Linearized OTGMM}} & \multicolumn{6}{|c}{\mbox{OTGMM}} & \multicolumn{6}{|c|}{\mbox{Efficient GMM}}\\
\hline
error scale & 0                    & 0.5   & 1     & 1.5   & 2      & 2.5    & 0     & 0.5   & 1     & 1.5   & 2     & 2.5   & 0             & 0.5   & 1     & 1.5   & 2     & 2.5   \\
Normal & 0.14 & 0.15 & 0.16 & 0.22 & 0.76 & 4.43 & 0.14 & 0.15 & 0.17 & 0.21 & 0.25 & 0.29 & 0.16 & 0.16 & 0.16 & 0.19 & 0.24 & 0.29\\
Uniform & 0.01 & 0.05 & 0.27 & 1.41 & 8.09 & 57.82 & 0.01 & 0.04 & 0.10 & 0.15 & 0.20 & 0.25 & 0.01 & 0.04 & 0.09 & 0.14 & 0.20 & 0.26\\
Binomial & 0.10 & 0.10 & 0.13 & 0.25 & 1.18 & 7.23 & 0.10 & 0.11 & 0.14 & 0.18 & 0.23 & 0.27 & 0.10 & 0.10 & 0.13 & 0.17 & 0.22 & 0.28\\

\hline
            & \multicolumn{18}{c|}{\mbox{RMSE}} \\
\hline
            & \multicolumn{6}{|c}{\mbox{Linearized OTGMM}} & \multicolumn{6}{|c}{\mbox{OTGMM}} & \multicolumn{6}{|c|}{\mbox{Efficient GMM}} \\
\hline
error scale  & 0                    & 0.5   & 1     & 1.5   & 2      & 2.5    & 0     & 0.5   & 1     & 1.5   & 2     & 2.5   & 0             & 0.5   & 1     & 1.5   & 2     & 2.5   \\
Normal & 0.14 & 0.15 & 0.17 & 0.27 & 0.90 & 4.71 & 0.14 & 0.15 & 0.17 & 0.21 & 0.25 & 0.29 & 0.17 & 0.16 & 0.19 & 0.26 & 0.33 & 0.39\\
Uniform & 0.01 & 0.24 & 1.15 & 3.86 & 13.60 & 68.33 & 0.01 & 0.10 & 0.15 & 0.18 & 0.22 & 0.26 & 0.01 & 0.14 & 0.24 & 0.31 & 0.35 & 0.39\\
Binomial & 0.10 & 0.11 & 0.16 & 0.41 & 1.54 & 7.94 & 0.10 & 0.11 & 0.14 & 0.19 & 0.23 & 0.28 & 0.10 & 0.11 & 0.19 & 0.27 & 0.33 & 0.38\\

\hline
\end{tabular}}
\end{table}

\begin{table}[ht]
\caption{Simulation results: Equation (\protect\ref{eqsimmom2})}
\label{tab2}\centering        
\resizebox{\columnwidth}{!}{\begin{tabular}{| l | l | l | l | l | l | l | l | l | l | l | l | l | l | l | l | l | l | l |}
\hline
            & \multicolumn{18}{c|}{\mbox{Bias}} \\
\hline
            & \multicolumn{6}{|c}{\mbox{Linearized OTGMM}} & \multicolumn{6}{|c}{\mbox{OTGMM}} & \multicolumn{6}{|c|}{\mbox{Efficient GMM}} \\
\hline
error scale & 0                    & 0.5   & 1     & 1.5   & 2      & 2.5    & 0     & 0.5   & 1     & 1.5   & 2     & 2.5   & 0             & 0.5   & 1     & 1.5   & 2     & 2.5   \\
Normal & 0.00 & 0.00 & 0.00 & 0.00 & 0.00 & 0.00 & 0.00 & 0.00 & 0.00 & 0.00 & 0.00 & 0.00 & 0.00 & 0.00 & 0.00 & 0.00 & 0.00 & 0.00\\
Uniform & 0.00 & 0.00 & 0.00 & 0.00 & 0.00 & 0.00 & 0.00 & 0.00 & 0.00 & 0.00 & 0.00 & 0.00 & 0.00 & 0.00 & 0.00 & 0.00 & 0.00 & 0.00\\
Binomial & 0.00 & 0.00 & 0.01 & 0.01 & 0.02 & 0.02 & 0.00 & 0.01 & 0.02 & 0.03 & 0.03 & 0.04 & -0.01 & 0.01 & 0.04 & 0.05 & 0.06 & 0.05\\

\hline
            & \multicolumn{18}{c|}{\mbox{Standard deviation}} \\
\hline
            & \multicolumn{6}{|c}{\mbox{Linearized OTGMM}} & \multicolumn{6}{|c}{\mbox{OTGMM}} & \multicolumn{6}{|c|}{\mbox{Efficient GMM}}\\
\hline
error scale & 0                    & 0.5   & 1     & 1.5   & 2      & 2.5    & 0     & 0.5   & 1     & 1.5   & 2     & 2.5   & 0             & 0.5   & 1     & 1.5   & 2     & 2.5   \\
Normal & 0.11 & 0.12 & 0.12 & 0.13 & 0.15 & 0.16 & 0.11 & 0.11 & 0.12 & 0.12 & 0.13 & 0.13 & 0.10 & 0.10 & 0.09 & 0.09 & 0.09 & 0.09\\
Uniform & 0.00 & 0.04 & 0.15 & 0.29 & 0.45 & 0.61 & 0.00 & 0.05 & 0.10 & 0.12 & 0.12 & 0.13 & 0.00 & 0.04 & 0.10 & 0.10 & 0.09 & 0.09\\
Binomial & 0.10 & 0.11 & 0.13 & 0.15 & 0.18 & 0.20 & 0.10 & 0.11 & 0.12 & 0.13 & 0.13 & 0.14 & 0.10 & 0.10 & 0.10 & 0.10 & 0.10 & 0.10\\
\hline
            & \multicolumn{18}{c|}{\mbox{RMSE}} \\
\hline
            & \multicolumn{6}{|c}{\mbox{Linearized OTGMM}} & \multicolumn{6}{|c}{\mbox{OTGMM}} & \multicolumn{6}{|c|}{\mbox{Efficient GMM}} \\
\hline
error scale & 0                    & 0.5   & 1     & 1.5   & 2      & 2.5    & 0     & 0.5   & 1     & 1.5   & 2     & 2.5   & 0             & 0.5   & 1     & 1.5   & 2     & 2.5   \\
Normal & 0.11 & 0.12 & 0.12 & 0.13 & 0.15 & 0.16 & 0.11 & 0.11 & 0.12 & 0.12 & 0.13 & 0.13 & 0.10 & 0.10 & 0.09 & 0.09 & 0.09 & 0.09\\
Uniform & 0.00 & 0.04 & 0.15 & 0.29 & 0.45 & 0.61 & 0.00 & 0.05 & 0.10 & 0.12 & 0.12 & 0.13 & 0.00 & 0.04 & 0.10 & 0.10 & 0.09 & 0.09\\
Binomial & 0.10 & 0.11 & 0.13 & 0.15 & 0.18 & 0.21 & 0.10 & 0.11 & 0.12 & 0.13 & 0.14 & 0.14 & 0.10 & 0.10 & 0.11 & 0.11 & 0.11 & 0.11\\
\hline
\end{tabular}}
\end{table}

\begin{table}[ht]
\caption{Simulation results: Equation (\protect\ref{eqsimmom3})}
\label{tab3}\centering        
\resizebox{\columnwidth}{!}{\begin{tabular}{| l | l | l | l | l | l | l | l | l | l | l | l | l | l | l | l | l | l | l |}
\hline
            & \multicolumn{18}{c|}{\mbox{Bias}} \\
\hline
            & \multicolumn{6}{|c}{\mbox{Linearized OTGMM}} & \multicolumn{6}{|c}{\mbox{OTGMM}} & \multicolumn{6}{|c|}{\mbox{Efficient GMM}} \\
\hline
error scale & 0                    & 0.5   & 1     & 1.5   & 2      & 2.5    & 0     & 0.5   & 1     & 1.5   & 2     & 2.5   & 0             & 0.5   & 1     & 1.5   & 2     & 2.5   \\
Normal & -0.01 & 0.00 & 0.02 & 0.09 & 0.29 & 1.00 & -0.01 & 0.00 & 0.00 & 0.00 & 0.00 & 0.00 & -0.01 & -0.01 & -0.04 & -0.07 & -0.07 & -0.07\\
Uniform & 0.00 & -0.13 & -0.63 & -2.02 & -6.17 & -20.84 & 0.00 & -0.05 & -0.04 & -0.02 & -0.01 & 0.00 & 0.00 & -0.12 & -0.16 & -0.15 & -0.12 & -0.09\\
Binomial & 0.00 & 0.01 & 0.04 & 0.09 & 0.24 & 0.76 & 0.00 & 0.01 & 0.02 & 0.03 & 0.03 & 0.04 & 0.00 & -0.01 & -0.05 & -0.06 & -0.06 & -0.04\\
\hline
            & \multicolumn{18}{c|}{\mbox{Standard deviation}} \\
\hline
            & \multicolumn{6}{|c}{\mbox{Linearized OTGMM}} & \multicolumn{6}{|c}{\mbox{OTGMM}} & \multicolumn{6}{|c|}{\mbox{Efficient GMM}}\\
\hline
error scale & 0                    & 0.5   & 1     & 1.5   & 2      & 2.5    & 0     & 0.5   & 1     & 1.5   & 2     & 2.5   & 0             & 0.5   & 1     & 1.5   & 2     & 2.5   \\
Normal & 0.11 & 0.11 & 0.12 & 0.15 & 0.49 & 2.77 & 0.11 & 0.11 & 0.11 & 0.12 & 0.12 & 0.13 & 0.12 & 0.11 & 0.11 & 0.12 & 0.13 & 0.13\\
Uniform & 0.04 & 0.06 & 0.27 & 1.13 & 5.89 & 44.86 & 0.04 & 0.06 & 0.09 & 0.11 & 0.12 & 0.12 & 0.04 & 0.06 & 0.09 & 0.11 & 0.12 & 0.13\\
Binomial & 0.10 & 0.10 & 0.11 & 0.15 & 0.72 & 6.05 & 0.10 & 0.10 & 0.11 & 0.12 & 0.12 & 0.13 & 0.10 & 0.10 & 0.11 & 0.12 & 0.13 & 0.14\\
\hline
            & \multicolumn{18}{c|}{\mbox{RMSE}} \\
\hline
            & \multicolumn{6}{|c}{\mbox{Linearized OTGMM}} & \multicolumn{6}{|c}{\mbox{OTGMM}} & \multicolumn{6}{|c|}{\mbox{Efficient GMM}} \\
\hline
error scale & 0                    & 0.5   & 1     & 1.5   & 2      & 2.5    & 0     & 0.5   & 1     & 1.5   & 2     & 2.5   & 0             & 0.5   & 1     & 1.5   & 2     & 2.5   \\
Normal & 0.11 & 0.11 & 0.12 & 0.18 & 0.58 & 2.95 & 0.11 & 0.11 & 0.11 & 0.12 & 0.12 & 0.13 & 0.12 & 0.11 & 0.12 & 0.14 & 0.15 & 0.15\\
Uniform & 0.04 & 0.14 & 0.69 & 2.31 & 8.53 & 49.47 & 0.04 & 0.08 & 0.10 & 0.11 & 0.12 & 0.12 & 0.04 & 0.13 & 0.19 & 0.19 & 0.17 & 0.16\\
Binomial & 0.10 & 0.10 & 0.11 & 0.17 & 0.76 & 6.09 & 0.10 & 0.10 & 0.11 & 0.12 & 0.13 & 0.13 & 0.10 & 0.10 & 0.12 & 0.14 & 0.14 & 0.14\\
\hline
\end{tabular}}
\end{table}

\begin{table}[ht]
\caption{Simulation results: Equation (\protect\ref{eqsimmom4})}
\label{tab4}\centering        
\resizebox{\columnwidth}{!}{\begin{tabular}{| l | l | l | l | l | l | l | l | l | l | l | l | l | l | l | l | l | l | l |}
\hline
            & \multicolumn{18}{c|}{\mbox{Bias}} \\
\hline
            & \multicolumn{6}{|c}{\mbox{Linearized OTGMM}} & \multicolumn{6}{|c}{\mbox{OTGMM}} & \multicolumn{6}{|c|}{\mbox{Efficient GMM}} \\
\hline
error scale & 0                    & 0.5   & 1     & 1.5   & 2      & 2.5    & 0     & 0.5   & 1     & 1.5   & 2     & 2.5   & 0             & 0.5   & 1     & 1.5   & 2     & 2.5   \\
Exponential & -0.01 & 0.03 & 0.16 & 0.38 & 0.67 & 0.99 & -0.01 & 0.03 & 0.13 & 0.27 & 0.42 & 0.58 & -0.04 & -0.02 & 0.10 & 0.29 & 0.50 & 0.71\\
\hline
            & \multicolumn{18}{c|}{\mbox{Standard deviation}} \\
\hline
            & \multicolumn{6}{|c}{\mbox{Linearized OTGMM}} & \multicolumn{6}{|c}{\mbox{OTGMM}} & \multicolumn{6}{|c|}{\mbox{Efficient GMM}}\\
\hline
error scale & 0                    & 0.5   & 1     & 1.5   & 2      & 2.5    & 0     & 0.5   & 1     & 1.5   & 2     & 2.5   & 0             & 0.5   & 1     & 1.5   & 2     & 2.5   \\
Exponential & 0.17 & 0.16 & 0.16 & 0.16 & 0.17 & 0.18 & 0.17 & 0.16 & 0.16 & 0.17 & 0.20 & 0.23 & 0.16 & 0.16 & 0.17 & 0.20 & 0.23 & 0.27\\
\hline
            & \multicolumn{18}{c|}{\mbox{RMSE}} \\
\hline
            & \multicolumn{6}{|c}{\mbox{Linearized OTGMM}} & \multicolumn{6}{|c}{\mbox{OTGMM}} & \multicolumn{6}{|c|}{\mbox{Efficient GMM}} \\
\hline
error scale & 0                    & 0.5   & 1     & 1.5   & 2      & 2.5    & 0     & 0.5   & 1     & 1.5   & 2     & 2.5   & 0             & 0.5   & 1     & 1.5   & 2     & 2.5   \\
Exponential & 0.17 & 0.17 & 0.23 & 0.42 & 0.69 & 1.00 & 0.17 & 0.17 & 0.21 & 0.32 & 0.47 & 0.62 & 0.16 & 0.16 & 0.20 & 0.35 & 0.55 & 0.76\\
\hline
\end{tabular}}
\end{table}

In Tables \ref{tab1}-\ref{tab4}, we report the estimation error $\hat{\theta}%
-\theta _{0}$ and decompose it into its bias, standard deviation and the
root mean square error (RMSE). These quantities are evaluated using
averages over 5000 replications. We consider various estimators $\hat{\theta}$:
the linear approximation to OTGMM in the small-error limit (leftmost
columns, indicated by ``Linearized OTGMM''), OTGMM in the general large-error case (middle columns) and
efficient GMM ignoring the presence of errors (rightmost columns).

While these tables consider a large number of specifications and data generating processes, we here highlight a few values that illustrate the typical trends present throughout the simulations.
First, it is clear that OTGMM is, in general, preferable to linearized OTGMM in terms of bias, and hence we focus our discussion on the former. As an example, let us compare the OTGMM and GMM results in Table \ref{tab1}, for the column that corresponds to an error scale of $\sigma=1.5$ for the normal model. The OTGMM estimator reduces the bias magnitude to only 0.05, as compared to 0.18 for GMM. At the same time, the variance only increases from 0.19 (for GMM) to 0.20 (for OTGMM), which is a negligible increase, thus supporting the idea that GMM's use of the optimal weighting matrix is not particularly beneficial in this context. As a result, the overall root mean square error (RMSE) is 0.21 for OTGMM down from 0.26 for GMM.

The key take-away from these simulations is that the OTGMM
estimator exhibits the ability to substantially reduce bias while not
substantially increasing the variance relative to efficient GMM. As a
result, the overall RMSE criterion points in favor of OTGMM. This is exactly
the type of behavior one would expect for an effective measurement
error-correcting method. The reduction in bias is especially important for
inference and testing, as it significantly reduces size distortion. In
contrast, a small increase in variance does not affect inference validity,
as this variance can be straightforwardly accounted for in the asymptotics,
unlike the bias, which is generally unknown.

\end{document}